\theoremstyle{definition}
\newtheorem{theorem}{Theorem}[section]
\newtheorem{corollary}[theorem]{Corollary}
\newtheorem{lemma}[theorem]{Lemma}
\newtheorem{definition}[theorem]{Definition}
\newtheorem*{remark}{Remark}
\newcommand{\todo}[1]{{\color{red} todo: #1}}
\newcommand{\Brac}[1]{\left[#1\right]}
\newcommand{\Ex}[1]{\mathbb{E}\Brac{#1}}
\newcommand{\Var}[1]{\mathrm{Var}\Brac{#1}}
\newcommand{\poly}{\mathrm{poly}}
\newcommand{\sk}{\mathrm{sk}}
\newcommand{\dun}[2]{\delta_{#1}^{\mathrm{un}}(#2)}
\newcommand{\calG}{\mathcal{G}}
\newcommand{\eps}{\epsilon}
\renewcommand{\tilde}{\widetilde}
\newcommand{\eat}[1]{}
\newcommand*\samethanks[1][\value{footnote}]{\footnotemark[#1]}
\title{Sparsification of Directed Graphs via Cut Balance}
\author{
\begin{tabular}{c c}
  \begin{tabular}{c}
Ruoxu Cen\,\\Tsinghua University
  \end{tabular} &
  \begin{tabular}{c}
Yu Cheng\,\thanks{
Part of this work was done when the author was a postdoctoral researcher at Duke University and when he was visiting the Institute of Advanced Study.}\\University of Illinois at Chicago
  \end{tabular} \\ \\
  \begin{tabular}{c}
Debmalya Panigrahi\,\thanks{This work was supported in part by NSF grants CCF-1535972, CCF-1955703, and an NSF CAREER Award CCF-1750140.}\\Duke University
  \end{tabular} &
  \begin{tabular}{c}
Kevin Sun\,\samethanks\\Duke University
  \end{tabular}
\end{tabular}
}
\date{}
\begin{document}

\maketitle
\setcounter{page}{0}
\thispagestyle{empty}

\begin{abstract}
    In this paper, we consider the problem of designing cut sparsifiers and sketches for directed graphs. To bypass known lower bounds, we allow the sparsifier/sketch to depend on the {\em balance} of the input graph, which smoothly interpolates between undirected and directed graphs. We give nearly matching upper and lower bounds for both {\em for-all} (cf.~Bencz\'ur and Karger, STOC 1996) and {\em for-each} (Andoni {\em et al.}, ITCS 2016) cut sparsifiers/sketches as a function of cut balance, defined the maximum ratio of the cut value in the two directions of a directed graph (Ene {\em et al.}, STOC 2016). We also show an interesting application of digraph sparsification via cut balance by using it to give a very short proof of a celebrated maximum flow result of Karger and Levine (STOC 2002).
\end{abstract}

\clearpage

\section{Introduction}
Graph sparsification, originally introduced by Bencz\'ur and Karger as a means of obtaining faster maximum flow algorithms~\cite{benczur2015randomized}, has become a fundamental tool in graph algorithms. The goal of graph sparsification is to replace an arbitrary graph with a sparse graph (called the graph sparsifier) on the same set of $n$ vertices but with only $O(n \cdot \poly(\log n, 1/\epsilon))$ edges, while approximately preserving the value of every cut up to a factor of $1\pm \epsilon$ for any given $\epsilon > 0$.

Since their work, several graph sparsification techniques have been discovered~(e.g., \cite{fung2011general}), the idea has been extended to other models of computation such as data streaming (e.g.,~\cite{ahn2009graph}) and sketching (e.g.,~\cite{andoni2016sketching}), stronger notions such as spectral sparsification that preserves all quadratic forms have been proposed (e.g.,~\cite{SpielmanT11}), and far-reaching generalizations such as the Kadison-Singer conjecture have been established~\cite{marcus2015interlacing}. On the applications front, graph sparsification has been heavily used to obtain a tradeoff between algorithmic accuracy and efficiency for a variety of ``cut-based'' problems such as maximum flows, minimum cuts, balanced separators, etc.

\begin{wrapfigure}{r}{0.28\textwidth}
\centering
\includegraphics[width=0.26\textwidth]{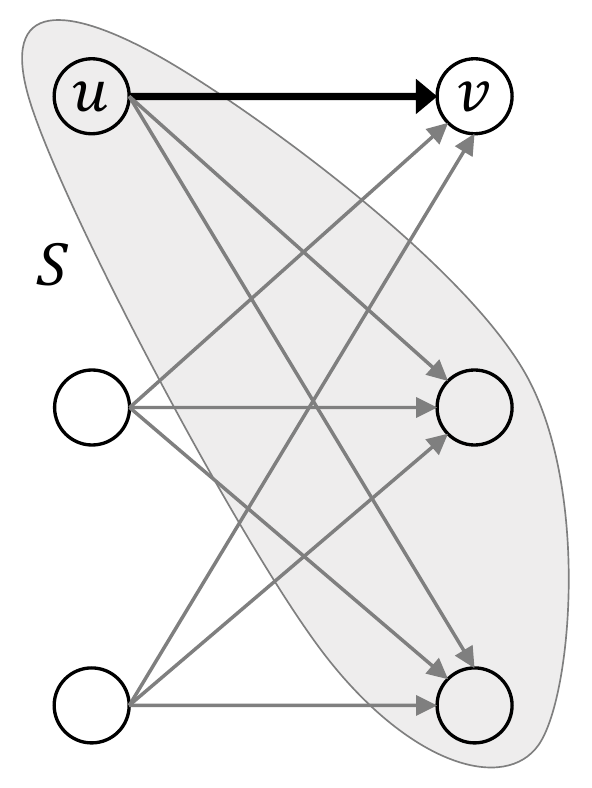}
\caption{The cut $S$ only contains a single edge $(u,v)$, so any sparsifier must contain this edge.}
\label{fig:complete}
\vspace{-5mm}
\end{wrapfigure}

In spite of its widespread use, one restriction is that most sparsification techniques only apply to undirected graphs. There is a fundamental reason for this restriction -- there are directed graphs that {\em cannot} be sparsified (see Fig.~\ref{fig:complete} for an example). 

Indeed, the lower bound holds even for {\em cut sketches}, where one does not insist on a graph being output as the sparsifier, but simply a succinct data structure from which the cut values of the original graph can be (approximately) retrieved.

A qualitative distinction between directed and undirected graphs is in terms of the {\em balance} of cuts, i.e., the ratio between incoming and outgoing edges in any given cut. An equivalent view of an undirected graph is by bi-directing its edges, which results in a graph with perfect balance, i.e., every cut has exactly the same number of incoming and outgoing edges.\footnote{Note that all Eulerian digraphs, whether or not derived from undirected graphs, exhibit perfect cut balance.} Cut balance, therefore, smoothly interpolates between undirected and directed graphs, which leads to the question: {\em can we design cut sparsifiers/sketches for directed graphs that depend on cut balance?} We answer this question in the affirmative in this paper, and show that this view of sparsification leads to interesting consequences.


We note that the use of cut balance to bridge between undirected and directed graphs predates our work. Ene {\em et al.}~\cite{ene2016routing} introduced the notion of parameterizing digraphs by their cut balance (or simply balance) and defined it as the largest ratio between the value of a cut in its two directions. Using this view, they extended two classic operations on undirected graphs -- oblivious routing and fast approximate maximum flows -- to directed graphs with a dependence on the balance. In this paper, we show that this phenomenon is exhibited by cut sparsification as well. 

\subsection{Our Results}

We consider the two canonical forms of cut sparsification considered in the literature. The first is the classic version introduced by  Bencz\'ur and Karger~\cite{benczur2015randomized}, where all cuts must simultaneously be approximately preserved {\em whp};\footnote{with high probability} we call this {\em for-all} sparsification. The second, more relaxed, notion is due to Andoni {\em et al.}~\cite{andoni2016sketching}, where {\em any} cut must be approximately preserved {\em whp} instead of all cuts simultaneously; we call this {\em for-each} sparsification. For both these notions of sparsification, previous results on undirected graphs can be extended to $\beta$-balanced graphs by boosting sampling probabilities in undirected sparsification algorithms by a factor of $\beta$, thereby losing an additional factor of $\beta$ in the size of the sparsifier/sketch (see also Ikeda and Tanigawa~\cite{ikeda2018cut}). Is it possible to do better than losing a factor of $\beta$? 

Our first result sharpens this na\"ive bound in for-each sparsification, by constructing cut sketches that improve the dependence on $\beta$ to $\sqrt{\beta}$. We also show that this dependence is tight by constructing a matching lower bound. This pair of results resolves the precise dependence of for-each sparsification in directed graphs on the balance of the graph.

\begin{theorem}
\label{thm:foreach-intro}
(Upper Bound) For any $\beta$-balanced graph with $n$ vertices, $m$ edges, and polyno\-mially-bounded edge weights, there is an $\tilde O(m + \sqrt{\beta} n / \eps)$-time algorithm\footnote{This runtime bound assumes that the value of $\beta$ is known. If not, then $\beta$ can be computed using an algorithm of Ene~{\em et al.}~\cite{ene2016routing} in $\tilde O(\beta^2 m)$ time.} that constructs a $(1 \pm \eps)$ for-each cut sketch of size $\tilde O(\sqrt{\beta} n / \eps)$ bits.

\smallskip\noindent
(Lower Bound) Fix any $\beta \ge 1$, $0 < \eps < 1$, and $n$ such that $(\beta/\eps)^{1/2} \le n/2$. Any $(1\pm\eps)$ for-each cut sketching algorithm for $\beta$-balanced graphs with $n$ vertices must output at least $\Omega(\sqrt{\beta} n / \sqrt{\eps})$ bits in the worst case.
\end{theorem}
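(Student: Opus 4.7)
The theorem has two independent parts, which I would tackle separately.

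For the upper bound, my plan is to decompose each directed cut into a symmetric and an antisymmetric component. Given a cut $(S, \bar S)$, write $a = c_G(S,\bar S)$ and $b = c_G(\bar S,S)$. The antisymmetric part
\[
a - b \;=\; \sum_{v \in S}\bigl(d^+(v) - d^-(v)\bigr)
\]
is a linear function of the characteristic vector of $S$, so it can be reconstructed \emph{exactly} by storing the $n$ vertex imbalances, at a cost of only $\tilde O(n)$ bits. The symmetric part $a+b$ equals the cut value in the undirected symmetrization $G'$ of $G$. Because $G$ is $\beta$-balanced, $\min(a,b) \ge (a+b)/(1+\beta)$, so an approximation of $a+b$ of sufficient precision, combined with the exact value of $a-b$, yields a $(1\pm\eps)$ approximation of $a$.

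The crux is obtaining the $\sqrt{\beta}$ improvement over the naive bound. A black-box undirected for-each sketch with multiplicative error $\eps'$ on $a+b$ translates into additive error $\eps'(a+b) = \Theta(\eps'\beta)\cdot a$, which would force $\eps'=\eps/\beta$ and sketch size $\tilde O(\beta n/\eps)$. To sharpen this, I would use a direction-sensitive edge-sampling scheme in which each edge $e$ is kept with probability roughly $p_e \propto \sqrt{\beta}/(\eps\cdot k_e)$, where $k_e$ is the strength of $e$ in the undirected symmetrization, and heavy edges (those whose $p_e$ exceeds $1$) are stored explicitly. A per-cut Bernstein-type analysis, tracking the variance $\sum_{e\in\text{cut}} w_e^2/p_e$, exploits the balance condition to bound the variance of the symmetric estimator by $O((\sqrt\beta\,\eps\cdot a)^2)$ rather than the naive $O((\beta\eps\cdot a)^2)$. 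Intuitively, the square-root saving comes from the fact that the two directional edge weights on a cut contribute symmetrically to the variance but the error budget is measured against $a=\min(a,b)$, and the balance condition couples these two quantities. The near-linear running time follows because $p_e$ depends only on a local strength quantity computable in $\tilde O(m)$ time.

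For the lower bound, the plan is a standard communication-complexity encoding argument. I would construct a family of $\beta$-balanced graphs on $n$ vertices each encoding $N = \Omega(\sqrt{\beta}\,n/\sqrt{\eps})$ independent bits, in such a way that any $(1\pm\eps)$ for-each sketch recovers each bit with high probability. Concretely, take a bipartite layout in which each vertex carries $k = \Theta(\sqrt{\beta/\eps})$ ``light'' indicator edges (weight $1$, each encoding one bit) in one direction and a ``heavy'' backbone of total weight $\Theta(\beta)$ in the opposite direction, chosen so that the balance of every cut is at most $\beta$. For each encoded bit there should be a small cut whose true value is $\Theta(1/\eps)$ and shifts by exactly $1$ when the bit is flipped; a $(1\pm\eps)$-approximate cut oracle then distinguishes the two settings whp. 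A standard Fano / encoding argument converts $N$ recoverable independent bits into the claimed $\Omega(\sqrt\beta\,n/\sqrt\eps)$ lower bound on the sketch size.

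The hardest part will be the upper bound, specifically designing the sampling probabilities $p_e$ so that the per-cut variance analysis yields exactly $\sqrt{\beta}$ and not some intermediate power, and checking that the heavy/light split composes cleanly with the vertex-imbalance storage. The main subtlety in the lower bound will be to verify that the hard graph family is $\beta$-balanced on \emph{every} cut, not just those used to decode bits, which amounts to a uniform backbone-weight calculation rather than a delicate probabilistic argument.
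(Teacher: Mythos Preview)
Your lower-bound plan is essentially the same as the paper's: encode $\Theta(nk)$ bits with $k=\Theta(\sqrt{\beta/\eps})$ into a bipartite gadget whose balance is controlled by a heavy ``backbone'' in the reverse direction, and decode each bit from a cut of value $\Theta(1/\eps)$. The paper realizes this concretely by partitioning the $n$ vertices into $n/k$ clusters of size $k$, placing a complete bipartite graph between consecutive clusters with $\{1,2\}$ edge weights encoding bits, and adding a weight-$1/\eps$ cycle for balance. Your ``backbone of weight $\Theta(\beta)$'' is not quite the right scaling, but once you fix the constants this should go through.

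The upper bound, however, has a genuine gap. The symmetric/antisymmetric decomposition is a red herring: knowing $a-b$ exactly does not reduce the accuracy you need on $a+b$. To get $|\hat a - a|\le\eps a$ you still need $|\widehat{a+b}-(a+b)|\le 2\eps a$, i.e.\ \emph{multiplicative} error $\Theta(\eps/\beta)$ on the undirected cut in the worst case, which is exactly the naive $\tilde O(\beta n/\eps)$ bound you acknowledge. Your proposed escape---strength-based sampling with $p_e\propto\sqrt{\beta}/(\eps k_e)$ and a ``Bernstein-type'' per-cut variance bound---does not close the gap. Bounding $k_e\le a+b\le(1+\beta)a$ on the cut gives variance $\sum_{e}w_e^2/p_e\le(\eps/\sqrt{\beta})(a+b)\cdot a\approx\eps\sqrt{\beta}\,a^2$, so the standard deviation is $\Theta(\eps^{1/2}\beta^{1/4}a)$, not $O(\eps a)$. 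And even your stated target variance of $O((\sqrt{\beta}\,\eps\, a)^2)$ would only yield error $\Theta(\sqrt{\beta}\,\eps\, a)$ via Chebyshev, which is not a $(1\pm\eps)$-approximation.

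The paper takes a different route that avoids the symmetrization entirely. It partitions edges into weight classes, removes and stores all edges in $\lambda$-sparse (undirected) cuts with $\lambda=\sqrt{\beta}/\eps$, and in each remaining dense component samples $\alpha=\sqrt{\beta}/\eps$ outgoing and incoming edges per vertex. The crucial step is a \emph{tighter} per-component variance bound: they show $\Var{I_{V_{ij}}}\le\tfrac{2^{2i}}{\alpha\lambda}(X_{ij}+\overline{X_{ij}})X_{ij}$, where $X_{ij}$ and $\overline{X_{ij}}$ are the directional edge counts across the cut. Summing across components and weight classes yields $\Var{I_S}\le O(\tfrac{1}{\alpha\lambda})[w(S,\bar S)^2+w(\bar S,S)w(S,\bar S)]\le O(\tfrac{\beta}{\alpha\lambda})w(S,\bar S)^2=O(\eps^2)w(S,\bar S)^2$. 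The $\sqrt{\beta}$ saving comes precisely from the asymmetric product $(X+\bar X)\cdot X$ rather than $(X+\bar X)^2$, which your symmetrization-based estimator cannot exploit. For the near-linear running time they replace sparse-cut removal with expander decomposition.
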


In for-all sparsification, we are not as lucky; we show that the linear dependence on $\beta$ is tight in this case. (In fact, Ikeda and Tanigawa~\cite{ikeda2018cut} had conjectured that better for-all sparsifiers can be constructed by sampling edges according to directed connectivity parameters; our lower bound construction refutes this conjecture and shows that such more aggressive sampling may not produce a sparsifier at all.)

\begin{theorem}
\label{thm:forall-intro}
Fix any $\beta \ge 1$, $0 < \eps < 1$, and $n$ such that $\beta/\eps \le n/2$.
Any $(1\pm\eps)$ for-all cut sketching algorithm for $n$-node $\beta$-balanced graphs must output at least $\Omega(n \beta / \eps)$ bits. 
\end{theorem}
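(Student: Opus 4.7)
I would prove Theorem~\ref{thm:forall-intro} via a standard encoding argument. The plan is to exhibit a family $\mathcal{F}$ of $\beta$-balanced graphs on $n$ vertices with $|\mathcal{F}| \ge 2^{\Omega(n\beta/\eps)}$, together with the property that for every distinct $G, G' \in \mathcal{F}$ there is at least one cut $S$ on which the two cut values differ multiplicatively by a factor strictly greater than $(1+\eps)/(1-\eps)$. Any $(1\pm\eps)$ for-all cut sketch must then map distinct members of $\mathcal{F}$ to distinct sketches: if two graphs collided on the sketch, then on every cut $S$ the decoded value would lie within a $(1\pm\eps)$ factor of each of the two true cut values, forcing $\mathrm{cut}_{G'}(S)/\mathrm{cut}_{G}(S) \le (1+\eps)/(1-\eps)$ for all $S$ and contradicting the distinguishing cut. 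Hence any such sketch uses $\log|\mathcal{F}| = \Omega(n\beta/\eps)$ bits.

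For the construction, I would take a disjoint union of $\Theta(n/k)$ vertex-disjoint gadgets on $k$ vertices each, for an appropriate $k$. A disjoint union of $\beta$-balanced graphs remains $\beta$-balanced, since every global cut decomposes across components and a weighted average of per-component in/out ratios lying in $[1/\beta,\beta]$ stays in $[1/\beta,\beta]$; similarly, global distinguishability reduces to per-gadget distinguishability. It then suffices to design a single gadget that is $\beta$-balanced on all of its $2^k-2$ internal cuts and that encodes $\Theta(k\beta/\eps)$ bits via multiplicities of its internal directed edges, in such a way that any two distinct bit assignments produce some internal cut whose value changes by more than the $(1+\eps)/(1-\eps)$ threshold. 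A natural candidate is a small bipartite-style gadget in which the edges going one way carry weights chosen from $\Theta(\log(\beta/\eps))$ multiplicatively spaced levels encoding the bits, while the reverse direction is uniformly filled with a base weight enforcing $\beta$-balance.

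The main obstacle is the calibration between $\beta$-balance, which caps the in/out ratio on \emph{every} internal cut of the gadget and thereby bounds the usable dynamic range of edge weights, and $(1\pm\eps)$-distinguishability, which requires each bit to shift its target cut by more than a $(1+O(\eps))$ factor against the cumulative contribution of all other edges loading that cut. Many natural two-vertex gadgets yield only $O(\log(\beta/\eps))$ bits each, giving a total of only $\Omega(n\log(\beta/\eps))$, while naive ``parallel undirected'' reductions lose an extra $1/\eps$ factor and yield only $\Omega(n\beta/\eps^2)$. Hitting the tight $\Omega(n\beta/\eps)$ rate will likely require a gadget in which each encoded bit is isolated in a ``light'' cut whose baseline value is small enough that a single-bit perturbation remains visible, so that the number of distinguishable information-carrying cuts matches, up to constants, the target bit budget $k\beta/\eps$ per gadget.
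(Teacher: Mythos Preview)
Your high-level encoding strategy is sound, but the proposal has one technical error and one genuine gap.

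\textbf{The disjoint-union step fails.} By the paper's definition, a $\beta$-balanced graph must be strongly connected; a disjoint union of gadgets is not. Your weighted-average argument also breaks down on cuts that separate an entire component from the rest (both directions have value zero). The paper handles this by arranging the $\Theta(n/k)$ clusters $V_1,\dots,V_t$ in a chain and placing the bipartite gadget between $V_i$ and $V_{i+1}$, so the whole graph is strongly connected and each gadget's internal balance implies global balance.

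\textbf{The gadget is the whole proof, and it is missing.} You correctly identify the tension between $\beta$-balance and $(1\pm\eps)$-distinguishability and note that naive constructions miss by a $\log$ or a $1/\eps$ factor, but you stop short of giving a gadget that hits $\Theta(k\beta/\eps)$ bits for $k=\beta/\eps$. The paper's gadget is concrete and different from your sketch: take $|L|=|R|=k$, put a fixed perfect matching of weight $1/\eps$ from $L$ to $R$, and let the edges from $R$ to $L$ be arbitrary unit-weight edges (these $k^2$ presence/absence bits are the payload). The heavy matching forces $\beta$-balance; the light payload edges carry the information.

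More importantly, the paper does \emph{not} insist on pairwise distinguishability. Instead it shows that for any fixed $G$, at most $2^{k^2/4}$ graphs in the family can be $(1\pm O(\eps))$-cut-equivalent to $G$, and combines this with $|\mathcal{F}|\ge 2^{k^2/2}$ per gadget. The key lemma is: for each $v\in R$, the cut $S=\{v\}\cup N_G(v)$ has value at most $(k+1)/\eps$ in $G$ (only matching edges leave $S$), so any $H$ sharing a sketch with $G$ can have at most $O(\eps\cdot k/\eps)=O(k)$ extra out-neighbors of $v$ outside $N_G(v)$. This bounds $|N_H(v)\triangle N_G(v)|$ for every $v$, hence bounds the number of possible $H$. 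Note that the distinguishing cut $\{v\}\cup N_G(v)$ depends on $G$, so this is not a per-bit isolation scheme; your pairwise-distinguishability framing would require a fixed cut per bit, which is strictly harder and may not be achievable at the same rate.
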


But, we note that the upper bound only applies to digraphs where {\em all} cuts are $\beta$-balanced. In general, the balance parameter for different cuts in a digraph may be highly non-uniform: some cuts could be very balanced and some others very unbalanced. For such graphs, we show a more refined result: for {\em any} value $\beta \ge 1$, we construct a sparsfier that approximately preserves all $\beta$-balanced cuts losing only an additional factor $\beta$ in the size of the sparsifier. Note that this result holds for any value of $\beta$ irrespective of the balance parameter of the graph; if $\beta$ is the balance parameter, then it recovers the tight bound for $\beta$-balanced graphs.

\begin{theorem}
\label{thm:forall-strong}
For any directed graph with $n$ vertices, $m$ edges, and non-negative edge weights, and any $\beta\ge 1$, there is an $\tilde O(m)$-time algorithm that returns a (weighted) subgraph with $\tilde O(\beta n/\epsilon^2)$ edges and preserves the values of all $\beta$-balanced cuts up to a factor of $1\pm\eps$.
\end{theorem}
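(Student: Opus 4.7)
The plan is to adapt Benczur-Karger edge-strength sampling to the directed setting, with sampling probabilities defined using the undirected symmetrization of $G$ and boosted by a factor of $\beta$. Concretely, let $G'$ be the undirected graph on $V$ with $w_{G'}(\{u,v\}) = w_G(u,v) + w_G(v,u)$, and compute the edge strengths $k_{\{u,v\}}$ of $G'$ in $\tilde O(m)$ time via the Benczur-Karger algorithm. For each directed edge $(u,v) \in E(G)$, I would include it independently in the sparsifier with probability $p_{(u,v)} = \min(1, \rho\beta w_G(u,v)/k_{\{u,v\}})$, where $\rho = \Theta(\eps^{-2}\log n)$, and assign it the reweighted value $w_G(u,v)/p_{(u,v)}$ when included.

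For the size, the classical Nash-Williams-style bound $\sum_{\{u,v\} \in E(G')} w_{G'}(\{u,v\})/k_{\{u,v\}} \le n-1$ combined with the identity $\sum_{(u,v) \in E(G)} w_G(u,v) = \sum_{\{u,v\} \in E(G')} w_{G'}(\{u,v\})$ gives expected size $\sum_{e} p_e \le \rho\beta(n-1) = \tilde O(\beta n/\eps^2)$, with the same bound whp by standard concentration.

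For correctness, fix a $\beta$-balanced cut $(S, \bar S)$ with $\kappa = c_G(S, \bar S)$. The sampled value is unbiased with variance
\[
\mathrm{Var}[\tilde c_G(S,\bar S)] \le \sum_{(u,v)\in (S,\bar S)_G} \frac{w_G(u,v)^2}{p_{(u,v)}} = \frac{1}{\rho\beta}\sum_{(u,v)\in (S,\bar S)_G} w_G(u,v)\, k_{\{u,v\}}.
\]
Since $k_{\{u,v\}} \le \lambda_{G'}(u,v) \le c_{G'}(S,\bar S)$ for every cut edge, and $c_{G'}(S,\bar S) = c_G(S,\bar S) + c_G(\bar S, S) \le (1+\beta)\kappa$ by $\beta$-balance, the variance is at most $(1+\beta)\kappa^2/(\rho\beta) \le 2\kappa^2/\rho$ for $\beta \ge 1$. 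Each individual summand is bounded by $w_G(u,v)/p_{(u,v)} \le c_{G'}(S,\bar S)/(\rho\beta) \le 2\kappa/\rho$, so a Bernstein-type inequality yields per-cut failure probability $\exp(-\Omega(\eps^2\rho))$.

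The main obstacle is the union bound over $\beta$-balanced cuts, since the per-cut failure probability above does not decrease with cut value while there are up to $2^n$ cuts. To handle this, the approach is to follow the Benczur-Karger stratified argument based on the decomposition of $G'$ into nested strong components at geometrically increasing strength levels: at level $k$, edges with strength $k_e \approx k$ contribute sampling probability $\Theta(\rho\beta w_e/k)$, giving a sharper variance bound within each $k$-strong component, while Karger's cut-counting theorem applied within each $k$-strong component bounds the number of cuts of value $\le \alpha k$ by $n^{O(\alpha)}$. The $(1+\beta)$ translation factor between $G$-cut and $G'$-cut values for $\beta$-balanced cuts is absorbed by the $\beta$ boost in $p_e$, so the union bound closes as in Benczur-Karger and yields $(1 \pm \eps)$ preservation simultaneously for all $\beta$-balanced cuts with high probability.
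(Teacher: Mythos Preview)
Your sampling scheme, size bound, and single-cut variance calculation are all correct and match the paper. The gap is in the union bound, and your proposed fix does not work as stated.

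Your Bernstein bound gives failure probability $\exp(-\Omega(\eps^2\rho)) = n^{-O(1)}$ \emph{independent of the cut value $\kappa$}, so it cannot be summed over $2^n$ cuts. You recognize this and propose the Bencz\'ur--Karger stratification by strong components of $G'$, asserting that ``the $(1+\beta)$ translation factor between $G$-cut and $G'$-cut values for $\beta$-balanced cuts is absorbed by the $\beta$ boost in $p_e$.'' The problem is that $\beta$-balance is a \emph{global} property of the cut $(S,\bar S)$ in $G$; it says nothing about the ratio $\delta^{\mathrm{un}}_{G_i}(S)/\delta_{G_i}(S)$ inside an individual strong component $G_i$. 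That ratio can be arbitrarily large (a component may even have $\delta_{G_i}(S)=0$), so within $G_i$ you cannot bound the variance by $O(\delta_{G_i}(S)^2/\rho)$, and a per-component Chernoff bound aiming for relative error $\eps$ will not give a failure probability that shrinks with the undirected cut value as the cut-counting argument requires.

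The paper resolves this by decoupling the two roles. In each strong component $G_i$ it proves (Lemma~3.5) that with high probability $|\delta_{H_i}(U)-\delta_{G_i}(U)| \le \eps\sqrt{\delta^{\mathrm{un}}_{G_i}(U)\,\delta_{G_i}(U)/(\beta+1)}$; the Chernoff exponent here is governed by the \emph{undirected} cut value $\delta^{\mathrm{un}}_{G_i}(U)$, which is at least the component's min cut, so Karger's cut-counting union bound closes exactly as in the undirected case. The per-component relative error $\eps\sqrt{\delta^{\mathrm{un}}_{G_i}(U)/((\beta+1)\delta_{G_i}(U))}$ is allowed to be large when $G_i$ is locally unbalanced. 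The global error is then controlled by Cauchy--Schwarz:
\[
\sum_i \alpha_i\,\eps\sqrt{\tfrac{\delta^{\mathrm{un}}_{G_i}(U)\delta_{G_i}(U)}{\beta+1}} \;\le\; \tfrac{\eps}{\sqrt{\beta+1}}\sqrt{\textstyle\sum_i\alpha_i\delta^{\mathrm{un}}_{G_i}(U)}\sqrt{\textstyle\sum_i\alpha_i\delta_{G_i}(U)} \;=\; \tfrac{\eps}{\sqrt{\beta+1}}\sqrt{\delta^{\mathrm{un}}_G(U)\,\delta_G(U)},
\]
and only \emph{now} is the global $\beta$-balance $\delta^{\mathrm{un}}_G(U)\le(\beta+1)\delta_G(U)$ invoked to finish. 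This Cauchy--Schwarz aggregation is the missing idea in your outline.
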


We remark that digraph sparsification using cut balance has interesting consequences. In particular, note that for residual graphs produced by $s$-$t$ maximum flow algorithms in undirected graphs, we can precisely bound the balance parameter on all cuts separating the source and the sink. Using this observation, we give a very short proof of the celebrated maximum flow result of Karger and Levine~\cite{karger2002random} via the digraph sparsification results.

\eat{

\subparagraph*{``For-All'' Sparsification.}
We first consider the original notion of sparsification introduced by, namely . Recently, Ikeda and Tanigawa~\cite{ikeda2018cut} showed that there is a sparsifier for $\beta$-balanced directed graphs that uses $\tilde{O}(\beta n/\epsilon^2)$ edges.\footnote{Throughout the paper, we use $\tilde O(f(n))$ as a shorthand for $O(f(n) \log^{O(1)} f(n))$.} Indeed, this result can be recovered by using the following simple process: (a) remove all edge directions, (b) sample the resulting undirected graph using a standard (undirected) sparsification algorithm that samples every edge $e$ at probability $p_e$ that is inversely proportional to the (undirected) edge connectivity $\lambda_e$ (see \cite{fung2011general}), but after boosting the sampling probabilities by a factor of $\beta$, and (c) use a simple modification of the analysis from  undirected sparsification to argue that the directed cuts are preserved because of the higher sampling probabilities. 

A priori, it is not clear that this linear dependence on $\beta$ is required; in particular, the use of undirected sparsification as a gadget in sparsifying a digraph seems suboptimal. In this paper, we show that the linear dependence of the size on $\beta$ is actually tight. 
Our lower bound holds not only for sparsifiers, but even for cut sketches, i.e., even if we allow any data structure from which we can approximately recover the values of all directed cuts of the original digraph. 

\begin{theorem}
\label{thm:forall-intro}
Fix any $\beta \ge 1$, $0 < \eps < 1$, and $n$ such that $\beta/\eps \le n/2$.
Any $(1\pm\eps)$ for-all cut sketching algorithm for $\beta$-balanced graphs with $n$ vertices must output at least $\Omega(n \beta / \eps)$ bits in the worst case.
\end{theorem}

Our lower bound construction also refutes a conjecture of Ikeda and Tanigawa~\cite{ikeda2018cut} that replacing $\beta/\lambda_e$ with the tighter $1/\gamma_e$ in the sampling probabilities $p_e$, where $\gamma_e$ is the {\em directed} edge connectivity of $e$, produces a better sparsifier. We give an example to show that sampling with probabilities proportional to $1/\gamma_e$ may not even produce a sparsifier, i.e., does not preserve the values of all cuts.

Next, we consider a strengthening of the upper bound from Ikeda and Tanigawa. Specifically, using a sparsifier that has $\tilde O(\alpha n/\eps^2)$ edges for any $\alpha \ge 1$, we show that that we can approximately preserve the values of all cuts whose balance is no worse than $\alpha$. Note that if $\alpha = \beta$, then this recovers the result of Ikeda and Tanigawa. This strengthening has an interesting consequence: we show that we can analyze the celebrated $\tilde O(m + nv)$-time maximum flow algorithm of Karger and Levine~\cite{karger2002random} as a simple corollary of this result. (Here, $v$ is the value of the maximum flow.) We state our theorem below:
\begin{theorem}
\label{thm:forall-strong}
For any directed graph $G$ on $m$ edges and $n$ vertices, and any $\alpha\ge 1$, there is a (weighted) subgraph $G_{\alpha, \epsilon}$ that has $\tilde O(\alpha n/\epsilon^2)$ edges and preserves the values of all $\alpha$-balanced cuts up to a factor of $1\pm\eps$, 
\end{theorem}

\subparagraph*{``For-Each'' Sparsification.} Next, we turn to a weaker notion of sparsification, where we intend to preserve {\em any} cut instead of all cuts simultaneously. This was originally studied by  for undirected graphs. 
We show that better sparsifiers are possible using this notion, namely that the dependence on $\beta$ scales as $\sqrt{\beta}$ in ``for-each'' sparsification, thereby beating the lower bound in the ``for-all'' setting. In particular, we give an algorithm that produces a ``for-each'' cut sketch of size $\tilde{O}(\sqrt{\beta} n/\epsilon)$, thereby beating the ``for-all'' sparsifier both in the dependence on $\beta$ and $\epsilon$. The improved dependence on $\epsilon$ was already known for undirected graphs~\cite{andoni2016sketching}, which we inherit. 
%
We also give a near-linear time implementation of our cut sketching algorithm. 
Finally, we also give a matching lower bound of $\Omega(\sqrt{\beta} n)$, thereby establishing that $\sqrt{\beta}$ is the precise dependence of the size of cut sketches for balanced directed graphs on the balance parameter $\beta$.


\begin{theorem}
\label{thm:foreach-intro}
(Upper Bound) For any $\beta$-balanced graph with $n$ vertices, $m$ edges, and polyno\-mially-bounded edge weights, there is an $\tilde O(m + \sqrt{\beta} n / \eps)$-time algorithm\footnote{This running time bound assumes that the value of $\beta$ is known. If not, then $\beta$ can be computed using an algorithm of Ene~{\em et al.}~\cite{ene2016routing} in $\tilde O(\beta^2 m)$ time.} that constructs a $(1 \pm \eps)$ for-each cut sketch of size $\tilde O(\sqrt{\beta} n / \eps)$ bits.

\smallskip\noindent
(Lower Bound) Fix any $\beta \ge 1$, $0 < \eps < 1$, and $n$ such that $(\beta/\eps)^{1/2} \le n/2$. Any $(1\pm\eps)$ for-each cut sketching algorithm for $\beta$-balanced graphs with $n$ vertices must output at least $\Omega(\sqrt{\beta} n / \sqrt{\eps})$ bits in the worst case.
\end{theorem}

}

\subsection{Our Techniques}
First, we outline the main ideas in our for-each cut sketch. In previous results on for-each cut sketches of undirected graphs~\cite{andoni2016sketching,JambulapatiS18}, the main idea was to (recursively) partition the graph into ``sparse'' and ``dense'' parts, and then maintain the sparse parts exactly along with a sample of the dense parts. A directed subgraph, however, can simultaneously be too dense to preserve exactly but also not amenable to sampling (e.g., a complete bipartite digraph). Of course, the balance parameter helps bridge this gap, but the cut balance of a subgraph that the algorithm encounters during recursion can be much worse than that of the original graph. Indeed, individual subgraphs might not even be strongly connected (i.e., have balance $\infty$), even if the original graph were Eulerian (i.e., has balance $1$). This makes the (recursive) local sketching techniques in previous works unusable for directed graphs.

Our main technical contribution is a new {\em global} cut sketch construction. We design a cut sketch whose variance can be large on individual dense regions of the input digraph that are well-connected in an undirected sense, but we crucially show that the {\em cumulative variance of our estimator across all these well-connected regions of the digraph is small}. This helps eliminate the need for local cut sketches in each dense subgraph, and simplifies the recovery algorithm to the natural estimator that appropriately scales the number of sampled edges in the queried cut.
Moreover, to obtain the right dependence on $\beta$, we need to carefully analyze the variance of our estimator. Our new variance analysis works for undirected graphs as well, which tightens the analysis of~\cite{andoni2016sketching} and consequently leads to undirected cut sketching algorithms that do not require downsampling or low-accuracy for-all sparsifiers.

\eat{

First, we outline the main ideas in our for-each cut sketch. Our starting point are two sketching algorithms for undirected graphs: cut sketches by Andoni {\em et al.}~\cite{andoni2016sketching} and spectral sketches by Jambulapati and Sidford~\cite{JambulapatiS18}. Both algorithms have the same high-level structure. The edges in the input graph are (recursively) partitioned into (a small number of) ``sparse'' edges  that are preserved exactly in the sketch, and a set of edge-disjoint ``well-connected'' subgraphs, for each of which a separate cut/spectral sketch is produced by sampling edges. A query is answered independently by these different sketches, and also by the sparse edges preserved exactly, and these responses are accumulated to produce the overall answer. The methods differ in the details, e.g., in what well-connectedness means, but both use this generic high-level structure.



Suppose for a digraph, we were to use the same high-level idea of separating into sparse and dense regions of the graph, and maintaining the former exactly while only preserving a sample of the latter. The first question is the definition of sparsity itself. To ensure that we store a small number of edges in the sparse regions, we need to use the undirected definition, i.e., remove edge directions and recursively remove edges in sparse cuts until we are left with dense components in an undirected sense. But, then, even if the original digraph were nearly balanced (i.e., $\beta$ was small), a specific dense component hence formed can be highly unbalanced (i.e., have large $\beta$). Indeed, we can construct examples where the dense components are not even strongly connected, that is, $\beta$ is infinite! So, one cannot hope to compute cut sketches of these dense components.

Moreover, even if we got lucky and each dense component turned out to be a balanced digraph, it is not clear whether constructing a cut sketch is any easier for well-connected digraphs than on general ones, even with the same balance parameter. The main issue is that the notion of well-connectedness (e.g., the absence of sparse cuts, or more generally, expander-like properties) is a property of the digraph only after removing edge orientations, and does not easily lend itself to a generalization in the directed context. Indeed, we show that the analysis of variance from these previous works does not extend to the case of digraphs, and we need a new analytical tool.




Our main technical contribution is a new {\em global} cut sketch construction. Namely, we design a cut sketch whose variance can be large on individual well-connected regions of the input digraph, but we crucially show that the {\em cumulative variance of our estimator across all well-connected regions of the digraph is small}. This new variance analysis is the crux of our method: it helps eliminate the need for individual cut sketches in each dense subgraph, and simplifies the recovery algorithm to the natural estimator that appropriately scales the number of sampled edges in the queried cut.

}

Next, we turn to for-all sparsification. Our first result is the lower bound on for-all sparsifiers and cut sketches. For any $\beta$ and $n$, we construct a family $\cal G$ of $\beta$-balanced graphs on $n$ vertices that satisfies two conflicting properties: $\cal G$ is a large family, yet for each graph $G \in \cal G$, the number of graphs in $\cal G$ that approximate all cuts of $G$ is small. For any graph $G$ and cut $S$, let $\delta_G(S)$ denote the value of $S$ and $E_G(S)$ denote the edges crossing $S$. Notice that there are many possible graphs $H$ such that $\delta_H(S) \approx \delta_G(S)$, because $E_H(S)$ and $E_G(S)$ could differ in numerous ways. Thus, to ensure that the number of cut approximators is small, we carefully design $\cal G$ such that for any $G, H \in \cal G$ and cut $S$, if $\delta_H(S) \approx \delta_G(S)$, then $E_H(S) \approx E_G(S)$. We show that this can be done by considering a large family of bipartite graphs that all contain a fixed (directed) matching. Consequently, any sketching algorithm must produce a large number of different cut sketches for the graphs in $\cal G$, which translates to a lower bound on the size of the cut sketches using standard information theory.

Finally, we refine for-all sparsification in digraphs by showing that we can preserve all balanced cuts, irrespective of the balance parameter of the entire graph $G$.
More specifically, at a cost of an additional factor of $\beta$ in the size of the sparsifer, we can preserve all $\beta$-balanced cuts, and provide an approximation for $\alpha$-balanced cuts with $\alpha > \beta$ that degrades gracefully as $\alpha$ gets larger.
For this purpose, we adopt a (recursive) graph decomposition due to Bencz\'ur and Karger~\cite{benczur2015randomized} that expresses a graph as a weighted sum of subgraphs, each of which corresponds to a particular edge sampling rate.
Now we can boost the (undirected) sampling rate by a factor of $\beta$.
If the balance of every subgraph in the decomposition is also $\beta$, then the undirected analysis carries over to the directed case.
However, in general, each subgraph can be very unbalanced, so we cannot bound the estimation error in each individual subgraph. Our main technical contribution is to show that even though we do not preserve the cut values in individual subgraphs, we do so globally across all the subgraphs.


\eat{

Finally, we sketch our techniques for the directed mincut problem. Using the template provided by Karger's undirected mincut algorithm~\cite{Karger1996}, we first sparsify the $\beta$-balanced directed graph and construct a maximum packing of spanning trees (strictly speaking, the directed version called arborescences) in it. The maximality of the packing ensures that we can find a tree that contains only a single edge from the directed mincut. The main technical challenge then becomes: how do we find the minimum graph cut that only has a single edge in a given spanning tree? For undirected graphs, Karger gave a clever dynamic programming solution for this problem using the dynamic trees data structure. But, directed graphs are more complicated: unlike undirected graphs, even if there is just a single edge from a cut in a spanning tree, it does not mean that the two sides of the cut are contiguous on the tree. This is because there might be edges in the opposite direction for the same bipartition of vertices that are in the tree. We give a layered centroid decomposition of the tree, and in each layer, set up a maxflow instance on an auxiliary graph with the following property: in at least one of the layers of this decomposition, we can recover the directed mincut in the original graph from the flow structure returned by the maxflow call.

}

\eat{
Our cut sketch is an extension of the algorithm given by Andoni et al.~\cite{andoni2016sketching}, so we now briefly review that algorithm and highlight the difficulties involved in extending it to directed balanced graphs.

\paragraph*{Approach of~\cite{andoni2016sketching}:} At a high level, their algorithm begins by down-sampling edges with probability $1/\epsilon^2 2^i$ to create a graph $G_i = (V, E_i)$ for $i = 1, 2, \ldots, O(\log n)$. This corresponds to ``guessing'' the value of the query up to a factor of two. By creating $\ell = O(\log n)$ graphs, each down-sampled at increasing rate, the $i$-th graph yields a sketch that answers queries with value roughly $2^i$. On a given query, to determine which sketch to use, the algorithm obtains a 2-approximation by using, say, the sparsifier of Bencz\'ur and Karger~\cite{benczur2015randomized}.

The algorithm then applies the following steps to each graph $G_i$. It decomposes $G_i$ into a set of ``dense'' connected components by repeatedly removing edges in sparse cuts, where a cut is sparse if the number of edges it cuts is at most $1/\epsilon$ times the number of vertices it contains. A straightforward charging argument shows that sparse cuts can be stored without violating the desired space bounds. Once all sparse cuts have been (repeatedly) removed, the algorithm stores all remaining degree values as well as a sample of $1/\epsilon$ edges incident to each node.

The recovery algorithm answers a query for $w(S, \overline S)$ by first obtaining a 2-approximation, and then consulting the appropriate sketch. Instead of taking the natural approach of summing the estimated edges crossing the cut, this algorithm uses a ``difference-based'' approach: sum the (exact) degrees of vertices in each component, and then subtract (an estimate of) the number of edges within that component. This difference-based estimate, as they show, has a smaller variance than the natural estimate.

\paragraph*{Difficulties of extension:} We now highlight a few difficulties we encounter when trying to extend the approach of Andoni {\em et al.}~\cite{andoni2016sketching} to directed $\beta$-balanced graphs. The first is the following: when the algorithm removes all of the edges belonging to a sparse cut, the remaining components are not necessarily $\beta$-balanced. In fact, they might not even be strongly connected (for example, consider removing any two edges of a directed cycle). This makes it difficult to partition the graph and accurately sketch each component, which Andoni et al.~\cite{andoni2016sketching} are able to do.

Moreover, there is no directed sparsifier that can give us a constant approximation necessary for determining the proper sketch (i.e., initial down-sampling rate) to consult on a given query. Indeed, such a result is actually impossible to obtain for general directed graphs due to the complete bipartite graph discussed earlier, which requires every edge to be stored. Thus, in order to obtain a $(1 \pm \epsilon)$-approximate sketch for any $\epsilon > 0$ using this general approach, it seems that we would first need a constant-approximate sketch for any $\beta$-balanced graph.

\paragraph*{Our ideas and solutions:} Our overall approach follows the same strategy as Andoni {\em et al.}~\cite{andoni2016sketching}, noting when determining the sparse cuts, we ignore edge orientations. The result is a set of dense components, and even though we cannot accurately sketch every component recursively, we can still use $\beta$-balancedness to bound the overall variance across all components. This relies on a tighter analysis of the estimator variance than the one used in~\cite{andoni2016sketching}.

A nice consequence of our algorithm is that we don't have to use the ``difference-based'' estimator employed by Andoni {\em et al.}~\cite{andoni2016sketching}. Indeed, our recovery process is the more straightforward estimate: on a given query $S$, we add the number of sparse edges crossing $S$, and for each vertex $u$, we estimate the number of dense edges incident to $u$ crossing $S$ by appropriately normalizing the number of sampled edges that do so.

Another consequence of our tighter analysis of the variance is that we can eliminate the down-sampling portion of the algorithm that corresponds to ``guessing'' the query. Thus, both our sketching and recovery algorithms are simpler, and we can decrease the overall size of the sketch by a logarithmic factor.

\todo{Yu: this is a paragraph I wrote in Section 3, which I copied here.}
In addition, we need to derive a tighter upper bound on the overall variance of our estimator.
If we trace our analysis back to the undirected case, we remove some redundant terms from the variance bound in the analysis of~\cite{andoni2016sketching}.
Tightening this variance bound is crucial for us, because without it we cannot obtain the right space dependence on $\beta$.
}

\subsection{Related Work}

{\bf Graph Sparsification.}
Graph sparsification was introduced by Bencz\'ur and Karger~\cite{benczur2015randomized}
(``for-all'' cut sparsification), and has led to research in a number of directions:
Fung~{\em et al.}~\cite{fung2011general} and Kapralov and Panigrahy~\cite{KapralovP12}
gave new algorithms for preserving cuts in a sparsifier; 
Spielman and Teng~\cite{SpielmanT11} generalized to spectral sparsfiers that preserved
all quadratic forms, which led to further research both in reducing the size of the sparsifier~\cite{spielman2011graph,batson2012twice} and developing faster algorithms (e.g., \cite{LeeS15, AllenLO15,LeeS17,ChuGPSSW18,KyngLPSS16,KoutisLP12,KoutisMP11,KoutisMP10}); 
faster algorithms for fundamental graph problems such as maximum flow 
utilized sparsification results (e.g.,~\cite{benczur2015randomized,Sherman13});
Ahn and Guha~\cite{ahn2009graph} introduced sparsification in the streaming model,
which has led to a large body of work for both cut (e.g.,~\cite{AhnGM12a,AhnGM12b,GoelKK10})
and spectral sparsifiers (e.g., \cite{KapralovNST19,KapralovMMMN19,KapralovLMMS17,AhnGM13}) in graph streams; 
both cut~\cite{KoganK15,NewmanR13} and spectral~\cite{SomaY19} sparsification have been
studied in hypergraphs.
%
%
%
For lower bounds, Andoni~{\em et al.}~\cite{andoni2016sketching} showed that any data structure that $(1\pm\eps)$-approximately stores the sizes of all cuts in an undirected graph must use $\Omega(n/\eps^2)$ bits.
Carlson {\em et al.}~\cite{CarlsonKST19} improved this lower bound to $\Omega(n \log n/\eps^2)$ bits, matching existing upper bounds.

Andoni {\em et al.}~\cite{andoni2016sketching} first proposed the notion of ``for-each'' cut (and spectral) sketches, where the sparsifier preserves the value of {\em any} cut rather than all cuts simultaneously.
They showed that for any undirected graph with $n$ vertices, a $(1\pm\eps)$ for-each cut sketch of size $\tilde O(n/\eps)$ exists and can be computed in polynomial time.
Subsequently, Jambulapati and Sidford~\cite{JambulapatiS18} gave the first nearly-linear time algorithm for constructing $(1\pm\eps)$ for-each graph sketches of size $\tilde O(n/\eps)$. Their sketch not only approximates cut values, but also approximately preserves the quadratic form of any undirected Laplacian matrix (and its pseudoinverse). Chu {\em et al.}~\cite{ChuGPSSW18} showed how to construct a {\em graph} containing $\tilde O(n^{1+o(1)}/\eps)$ edges that satisfies the ``for-each'' requirement for spectral queries.

\medskip
{\noindent \bf Directed Graphs.}
Cohen {\em et al.}~\cite{CohenKPPRSV17,CohenKPPRSV18} proposed a directed notion of spectral sparsifiers 
and used it to obtain nearly-linear time algorithms for solving directed Laplacian linear systems and 
computing various properties of directed random walks.
However, their directed spectral sparsifiers only work for Eulerian graphs, i.e., for $\beta = 1$.
Zhang {\em et al.}~\cite{ZhangZF19} proposed a notion of spectral sparsification that works for all directed graphs, 
but their definition does not preserve  cut values. More generally, there have been attempts at bridging the divide 
between directed and undirected graphs for other problems. For instance,
Lin~\cite{lin2009reducing} defined the imbalance of a graph as the sum of the absolute difference of 
in- and out-capacities at all vertices, and used it to generalize
the max-flow algorithm of Karger and Levine~\cite{karger2002random} from undirected graphs to digraphs. 
Digraphs have also been parameterized by directed extensions of treewidth~\cite{johnson2001directed},
and similar notions of DAG-width~\cite{berwanger2012dag,obdrvzalek2006dag} and 
Kelly-width~\cite{hunter2008digraph}, which led to FPT algorithms based on
these parameters, much like for undirected bounded treewidth graphs.
In spectral graph theory, directed analogs of Cheeger's inequality have been defined~\cite{chung2005laplacians},
particularly in the context of analyzing the spectrum of digraphs.
Closest to our work is that of Ene~{\em et al.}~\cite{ene2016routing} who proposed cut balance 
of digraphs that we use in this paper, although in the context of oblivious routing and max-flow algorithms.

\eat{
Consider the maximum flow problem. Lin~\cite{lin2009reducing} (unpublished) defined the imbalance at a vertex to be the absolute difference between the total in-capacity and out-capacity, and the imbalance of the graph is the sum of the imbalances across all vertices. He then gave a maximum flow algorithm running in $\tilde O(m + n(v+i))$ where $m$ is the number of edges, $n$ the number of vertices, and $i$ the imbalance of the graph. This is obtained via a reduction to the maximum flow algorithm for undirected graphs given by Karger and Levine~\cite{karger2002random}, which runs in $\tilde O(m + nv)$ expected time.

The treewidth of an undirected graph measures how closely the graph resembles a tree. Johnson {\em et al.}~\cite{johnson2001directed} generalized this notion by introducing the directed treewidth.
Berwanger {\em et al.}~\cite{berwanger2012dag}, and Obdr{\v{z}}{\'a}lek~\cite{obdrvzalek2006dag} independently introduced the DAG-width, which measures how closely a directed graph resembles a DAG.
Hunter and Kreutzer~\cite{hunter2008digraph} introduce the Kelly-width.
All three papers solve NP-complete problems on graphs with bounded $x$ for each of the three measures $x$.
Unlike the undirected case, the ``right'' notion of directed treewidth remains an open investigation.

For routing symmetric demand pairs, Chekuri {\em et al.}~\cite{chekuri2018constant} give a randomized polylogarithmic approximation with constant congestion in planar directed graphs. The main ingredient in their algorithm is a structural result that relates the directed treewidth of a planar graph with a constant congestion routing structure. In the undirected setting, a planar graph with treewidth $h$ has such a structure of size $\Omega(h)$, but in the directed setting, they showed that the size is $\Omega(h / \poly\!\log h)$.

For undirected graphs, the Cheeger inequality gives a relationship between the Cheeger constant and the eigenvalues of the Laplacian of the graph. Chung~\cite{chung2005laplacians} defined the Cheeger constant of a directed graph and proved a directed analog of the Cheeger inequality. She used this to show that some directed graphs have Laplacian eigenvalues that are exponentially small in terms of the number of vertices $n$, but in undirected graphs, the second-smallest eigenvalue is at least $1/n^2$.

}

\eat{
As mentioned above, Ikeda and Tanigawa~\cite{ikeda2018cut} gave a ``for-all'' sparsification algorithm for $\beta$-balanced graphs; they did this by extending the framework of Fung {\em et al.}~\cite{fung2011general} for undirected graphs.

In the Steiner Forest problem, we are given an edge-weighted graph $G = (V,E)$ on $n$ vertices and a set of $k$ vertex pairs $(s_i, t_i)$ for $i \in \{1, \ldots, k\}$. The goal is to find a minimum-weight subset of edges that contains an $s_i$-$t_i$ path for every $i \in \{1, \ldots, k\}$.
If $G$ is undirected, there exists a constant approximation (AKR, GW). For the directed version, Berman {\em et al.}~\cite{berman2013approximation} gave an $O(n^{2/3 + \epsilon})$-approximation for any $\epsilon > 0$.

In the spanner problem, we are given an edge-weighted graph $G = (V,E)$ on $n$ vertices and a stretch value $k \geq 1$. The goal is to find a subgraph $H$ containing a minimum number of edges such that for every edge $(s,t) \in E$, the distance from $s$ to $t$ in $H$ is at most $k$ times the length of $(s,t)$ in $G$. For the undirected version, combining the results of Alth{\"o}fer {\em et al.}~\cite{althofer1993sparse} and Alon {\em et al.}~\cite{alon2002moore} yields an approximation ratio of $O(n^{1/ \lceil k \rceil})$ for any integer $k \geq 3$. For $k = 3$, this was improved from $O(\sqrt{n})$ to $O(n^{1/3}\log n)$ Berman {\em et al.}~\cite{berman2013approximation}. For the directed version, the same paper also gives a $O(\sqrt{n} \log n)$-approximation. 

In the edge-disjoint paths problem, we are given a graph $G = (V,E)$ on $n$ vertices and a set of $k$ vertex pairs $(s_i, t_i)$ for $i \in \{1, \ldots, k\}$. The goal is to connect a maximum number of these pairs using edge-disjoint paths. For the undirected version, Chekuri and Khanna~\cite{chekuri2003edge} gave an $O(n^{2/3})$-approximation, and for the directed version, they gave an $O(n^{4/5})$-approximation.
}

\section{Preliminaries}
{\noindent \bf Basic Notations.}
Let $G = (V, E, w)$ be a weighted \emph{directed} graph with $n = |V|$ vertices and $m = |E|$ edges.
Every edge $e \in E$ has a given non-negative weight $w_e\geq 0$.
When working with unweighted graphs, (i.e., $w_e = 1$ for all $e \in E$), we will omit the edge weights $w_e$.

For two sets of vertices $S \subseteq V$ and $T \subseteq V$, we use $E(S, T) = \{(u, v) \in E: u \in S, v \in T\}$ to denote the set of edges in $E$ that go from $S$ to $T$.
We use $w(S, T) = \sum_{e \in E(S,T)} w_e$ to denote the total weight of the edges from $S$ to $T$.
For a vertex $u \in V$ and a set of vertices $S \subseteq V$, we write $E(u, S)$ for $E(\{u\}, S)$, and we define $E(S, u)$, $w(u, S)$, and $w(S, u)$ similarly.

We often write $\overline{S}$ as a shorthand for $V \setminus S$.
Given a component $V_i$ and a subset of its vertices $S_i \subseteq V_i$, we can similarly define $\overline{S_i} = V_i \setminus S_i$.
For example, using this notation, we write $w(S,\overline{S})$ for $w(S, V \setminus S)$ and similarly $w(S_i, \overline{S_i}) = w(S_i, V_i \setminus S_i)$.

The {\em conductance} of an undirected graph $G = (V, E, w)$ is defined as
\begin{align}
\phi(G) &= \min_{\varnothing \neq S \subset V} \frac{w(S, \overline S)}{\min\bigl(w(S, V), w(\overline S, V) \bigr)}.
\label{eqn:conductance}
\end{align}

%
\begin{definition}[$\beta$-Balanced]
A strongly connected digraph $G = (V, E, w)$ is \emph{$\beta$-balanced} if, for all $\varnothing \subseteq S \subseteq V$, 
it holds that $w(S, \overline{S}) \le \beta \cdot w(\overline{S}, S)$.
\end{definition}

{\noindent \bf Directed Cut Sparsifiers and Cut Sketches.}
We consider two notions of sparsification. The first is the classic ``for-all'' sparsifier
that approximately preserves the values of all cuts.


\begin{definition}[For-All Cut Sparsifier]
Let $G = (V, E_G, w_G)$ and $H = (V, E_H, w_H)$ be two weighted directed graphs.
Fix $0 < \eps < 1$.
We say $H$ is a $(1 \pm \eps)$ for-all cut sparsifier of $G$ iff the following holds for all $S \subseteq V$:

\medskip
$
(1-\eps) \cdot w_G(S, V \setminus S) \le w_H(S, V \setminus S) \le (1+\eps) \cdot w_G(S, V \setminus S).
$
\end{definition}

Instead of a graph that preserves cut values, if we allow any data structure from which the cut values can be (approximately) recovered, 
we call it a cut sketch.

\begin{definition}[For-All Cut Sketch]
Let $G = (V, E, w)$ be a weighted directed graph.
Fix $0 < \eps < 1$.
A (deterministic) function $g$ outputs a $(1\pm\eps)$ for-all cut sketch of $G$ if there exists a recovering function $f$ such that, for all $S \subseteq V$:

\medskip
$
(1-\eps) \cdot w(S, V \setminus S) \le f(S, \sk(G)) \le (1+\eps) \cdot w(S, V \setminus S).
$
\end{definition}

Next we consider a weaker notion of graph sparsification, where instead of approximating the value of all cuts, we only require the value of any individual cut to be approximately preserved with (high) constant probability.

\begin{definition}[For-Each Cut Sketch]
Let $G = (V, E, w)$ be a weighted directed graph.
Fix $0 < \eps < 1$.
A function $g$ outputs a for-each $(1\pm\eps)$-cut sketch of $G$
if there exists a recovering function $f$ such that, for each $S \subseteq V$, with probability at least $2/3$,

\medskip
$
(1-\eps) \cdot w(S, V \setminus S) \le f(S, g(G)) \le (1+\eps) \cdot w(S, V \setminus S).
$
\end{definition}

\section{For-All Sparsification: $\tilde O(n\cdot \beta/\epsilon^2)$ Upper Bound}\label{sec:for-all-ub}
In this section, we extend the seminal work of Benc\'ur and Karger~\cite{benczur2015randomized} to directed graphs using cut balance. For undirected graphs, they showed that sampling every edge inversely proportional to a quantity known as its strength (see Definition~\ref{def:strength}) preserves all cuts with high probability. We show that, by boosting this sampling probability by a factor of $\beta$, this procedure can preserve the value of $\beta$-balanced cuts in any directed graph.


We then show that this sampling theorem can be applied in a black-box manner to recover the analysis of a celebrated maximum flow algorithm for undirected graphs given by Karger and Levine~\cite{karger2002random}. At each step of the algorithm, they sample edges from the residual network (which is directed) of an undirected graph. Using a customized version of the sparsification result from Bencz\'ur and Karger~\cite{benczur2015randomized}, they show that with high probability, the sample contains an augmenting path. In contrast, our sampling procedure can be applied directly to the residual network, which simplifies the analysis of the algorithm.

The following theorem is our main result of this section:


\begin{theorem} \label{thm:beta-cuts-preserved}
Let $G = (V, E)$ be a directed graph where each edge $e$ has weight $u_e \geq 0$, and let $\epsilon, \beta$ be parameters. There is an $\tilde O(m)$-time algorithm that returns a weighted subgraph $H$ that satisfies the following with high probability: for every $\alpha$-balanced cut $U$,
\[
\left(1 - \epsilon\sqrt{\frac{\alpha +1}{\beta+1}}\right)\cdot \delta_G(U) \leq \delta_H(U) \leq \left(1 + \epsilon\sqrt{\frac{\alpha +1}{\beta+1}}\right)\cdot \delta_G(U).
\]
where $\delta_G(U)$ and $\delta_H(U)$ denote the cut value of $U$ in $G$ and $H$, respectively. Furthermore, $H$ contains $O(\beta n\log n/\epsilon^2)$ edges in expectation.
\end{theorem}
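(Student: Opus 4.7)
The plan is to adapt the strength-based sampling of Bencz\'ur and Karger~\cite{benczur2015randomized} by boosting every sampling probability by a factor of $\Theta(\beta+1)$. Concretely, I would (i) compute the undirected strength $k_e$ of each edge in $\tilde O(m)$ time, (ii) sample each edge $e$ independently with probability $p_e=\min\!\bigl(1,\,C(\beta+1)\rho/k_e\bigr)$ where $\rho=\Theta(\log n/\epsilon^2)$, and (iii) place $e$ in $H$ with weight $u_e/p_e$ whenever it is sampled. The expected size bound $O(\beta n\log n/\epsilon^2)$ follows immediately from the classical packing inequality $\sum_e u_e/k_e\le n-1$, and $\mathbb E[\delta_H(U)]=\delta_G(U)$ holds by linearity for every cut $U$.

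The heart of the proof is variance control for a fixed $\alpha$-balanced cut $U$. The key observation is that the underlying \emph{undirected} cut value satisfies $w^{\mathrm{un}}(U):=w(U,\bar U)+w(\bar U,U)\le(\alpha+1)\,\delta_G(U)$, since $\alpha$-balance applied to $U$ forces $w(\bar U,U)\le\alpha\,\delta_G(U)$. Because $\delta_H(U)$ depends only on the independent sampling indicators for edges in $E(U,\bar U)$, I would next slot into the Bencz\'ur-Karger strength decomposition: write the underlying undirected (multi)graph as $\sum_j G_j$ with edges of strength roughly $2^j$ collected in $G_j$, apply the per-level Chernoff/Bernstein bookkeeping of BK after the $(\beta+1)$-boost to bound each level's contribution to $\mathrm{Var}[\delta_H(U)]$ by a constant times the undirected cut value of $U$ in $G_j$ divided by $(\beta+1)\rho$, and sum over the $O(\log n)$ levels. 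This gives overall standard deviation $O\bigl(\sqrt{w^{\mathrm{un}}(U)\cdot u_{\max}/((\beta+1)\rho)}\bigr)$ for $\delta_H(U)$; combining with $w^{\mathrm{un}}(U)\le(\alpha+1)\delta_G(U)$ and the trivial bound $u_{\max}\le\delta_G(U)$ on any single cut edge, a Chernoff tail bound yields a per-cut deviation of $\epsilon\sqrt{(\alpha+1)/(\beta+1)}\,\delta_G(U)$ with failure probability $n^{-\Omega(1)}$.

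To upgrade the per-cut bound to a ``for every $\alpha$-balanced cut $U$'' statement, I would invoke Karger's cut-counting theorem level by level, exactly as in BK. The number of undirected cuts with $w^{\mathrm{un}}$-value at most $c$ times the undirected minimum cut is bounded by $n^{2c}$; combined with $w^{\mathrm{un}}(U)\le(\alpha+1)\delta_G(U)$, the standard union bound inside the BK argument carries over and simultaneously handles every directed cut, assigning each its own $\alpha$-dependent error.

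The main obstacle is the per-level variance bookkeeping. BK's analysis balances strength, edge weight, and cut value across levels using subtle properties of the strength decomposition, and it is crucial to execute this without ever invoking $\alpha$-balance \emph{inside} a level: an individual $G_j$ can be wildly unbalanced even when $G$ itself is nearly Eulerian, so $\alpha$ may blow up locally. The plan is to bound every per-level variance in purely undirected terms (which sum cleanly to $w^{\mathrm{un}}(U)$) and only invoke $\alpha$-balance at the very end, converting $w^{\mathrm{un}}(U)$ into $\delta_G(U)$ once. Getting this separation clean, and ensuring that the $(\beta+1)$ boost in $p_e$ translates one-to-one into a $1/(\beta+1)$ variance savings per level, is the delicate technical step.
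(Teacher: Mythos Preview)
Your high-level plan matches the paper exactly: boost the Bencz\'ur--Karger sampling probabilities by $\Theta(\beta+1)$, analyze via the strength decomposition in purely undirected terms, and invoke $\alpha$-balance only once at the very end to convert $w^{\mathrm{un}}(U)$ into $(\alpha+1)\,\delta_G(U)$. The algorithm, the expected-size bound, and the crucial observation that $\alpha$-balance must not be used inside the decomposition are all correct. (One minor slip: for weighted graphs $p_e$ must be proportional to $u_e/k_e$, not $1/k_e$.)

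The gap is in the concentration step. Your per-level variance claim is not right: at strength level $\approx 2^j$ the variance contribution is $\sum_e u_ek_e/\rho'\approx 2^j\cdot(\text{directed cut weight at level }j)/\rho'$, so the $2^j$ factor cannot be dropped, and the $u_{\max}$ bound does not recover it (the relevant quantity is $\max_e k_e$ over cut edges, not $\max_e u_e$). More seriously, a single global Chernoff bound on $\delta_H(U)$ yields a \emph{fixed} $n^{-\Theta(1)}$ failure probability, independent of $U$; this cannot be union-bounded against the $n^{2c}$ cuts of undirected value $c\lambda$ that Karger's cut counting produces for large $c$.

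The paper resolves both issues by working \emph{inside each strong component} $G_i$ of the laminar BK decomposition. There the reweighted edge values are uniformly bounded and the minimum cut is known, so Chernoff plus cut counting yields, simultaneously for all cuts and all $i$, a per-component deviation of $\frac{\epsilon}{\sqrt{\beta+1}}\sqrt{\delta^{\mathrm{un}}_{G_i}(U)\,\delta_{G_i}(U)}$. The components are then combined not by summing variances but by Cauchy--Schwarz,
\[
\sum_i\alpha_i\sqrt{a_ib_i}\le\sqrt{\Bigl(\sum_i\alpha_ia_i\Bigr)\Bigl(\sum_i\alpha_ib_i\Bigr)},
\]
which collapses the sum to $\frac{\epsilon}{\sqrt{\beta+1}}\sqrt{\delta^{\mathrm{un}}_G(U)\,\delta_G(U)}$. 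This Cauchy--Schwarz step is the missing technical ingredient in your outline.
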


Note that for the special case where the graph $G$ is $\beta$-balanced, Theorem~\ref{thm:forall-strong} is implied by Theorem~\ref{thm:beta-cuts-preserved}: all cut values are preserved. This is the main result of Ikeda and Tanigawa~\cite{ikeda2018cut}.

\begin{corollary}[Ikeda and Tanigawa~\cite{ikeda2018cut}] \label{cor:cuts-preserved}
Consider the same setting as Theorem~\ref{thm:beta-cuts-preserved}. If $G$ is $\beta$-balanced, then with high probability, $H$ approximates every cut of $G$ up to a $(1 \pm \epsilon$) factor.
\end{corollary}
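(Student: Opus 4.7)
The plan is to derive Corollary~\ref{cor:cuts-preserved} as an immediate specialization of Theorem~\ref{thm:beta-cuts-preserved}, applying its guarantee cut by cut with the per-cut balance parameter $\alpha$ instantiated by the graph-wide bound $\beta$. The sparsifier $H$ returned by the algorithm is the same object in both statements, and the edge-count bound $O(\beta n \log n / \epsilon^2)$ transfers verbatim; the only thing to verify is the $(1 \pm \epsilon)$ multiplicative error on every cut when $G$ is globally $\beta$-balanced.

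First I would unpack the definition of $\beta$-balancedness: for every nonempty proper $U \subsetneq V$ we have $w_G(U, \overline{U}) \le \beta \cdot w_G(\overline{U}, U)$, which is to say each such cut is $\alpha_U$-balanced for some $\alpha_U \le \beta$. Applying the bound of Theorem~\ref{thm:beta-cuts-preserved} with $\alpha = \alpha_U$ for that particular cut gives
\[
\left(1 - \epsilon\sqrt{\tfrac{\alpha_U + 1}{\beta + 1}}\right)\delta_G(U) \;\le\; \delta_H(U) \;\le\; \left(1 + \epsilon\sqrt{\tfrac{\alpha_U + 1}{\beta + 1}}\right)\delta_G(U).
\]
Since $\alpha_U \le \beta$, the factor $\sqrt{(\alpha_U + 1)/(\beta + 1)}$ is at most $1$, and the error on that cut collapses to at most $\epsilon$.

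Finally I would note that Theorem~\ref{thm:beta-cuts-preserved} already delivers its guarantee simultaneously for every cut with high probability (within a single high-probability event), so no additional union bound is needed; the per-cut bound above then holds for all $U \subseteq V$ at once, yielding exactly the $(1 \pm \epsilon)$ approximation claimed in the corollary. There is essentially no obstacle here: the corollary reduces to the trivial monotonicity of $\sqrt{(\alpha + 1)/(\beta + 1)}$ in $\alpha$, with all of the technical work already absorbed into Theorem~\ref{thm:beta-cuts-preserved}.
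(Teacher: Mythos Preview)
Your proposal is correct and matches the paper's approach: the paper does not give a separate proof of Corollary~\ref{cor:cuts-preserved}, but simply observes (in the sentence preceding it) that when $G$ is $\beta$-balanced every cut has $\alpha \le \beta$, so Theorem~\ref{thm:beta-cuts-preserved} directly yields the $(1\pm\epsilon)$ guarantee. Your write-up makes this explicit via the monotonicity of $\sqrt{(\alpha+1)/(\beta+1)}$ and correctly notes that the high-probability event in Theorem~\ref{thm:beta-cuts-preserved} is already global, so no union bound is needed.
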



Before proving Theorem~\ref{thm:beta-cuts-preserved}, we give an application of digraph sparsification to the maximum flow problem. In particular, we prove the correctness of the $\tilde O(m+nv)$-time maximum flow algorithm given by Karger and Levine~\cite{karger2002random}, where $v$ is the value of the maximum flow. This algorithm (Algorithm~\ref{alg:max-flow}) is an adaptation of the classic augmenting paths algorithm of Ford and Fulkerson~\cite{ford1956maximal}, but with the following crucial observation. Let $f$ denote the current flow value in any iteration, and $\gamma = (v-f)/v$ denote the fraction of remaining flow in the residual network. Karger and Levine~\cite{karger2002random} show that, by boosting the undirected sampling procedure of Bencz\'ur and Karger~\cite{benczur2015randomized} by a factor of $1/\gamma$ and applying it to the residual network, the resulting sample contains an augmenting path with high probability. This saves on running time since the search for an augmenting path can then be performed on the sampled graph instead of the entire residual network.

\begin{algorithm}[ht]
\caption{A $\tilde O(m+nv)$-time algorithm for maximum flow (Karger and Levine~\cite{karger2002random}}
\label{alg:max-flow}
\SetKwInOut{Input}{Input}
\SetKwInOut{Output}{Output}
\Input{An undirected graph $G = (V, E)$ on $n$ vertices and $m$ edges with edge capacities $u_e$, source vertex $s$, sink vertex $t$.}
\Output{A maximum flow in $G$.}
Use Lemma~\ref{lem:strengths} to compute an estimated edge strength $\tilde{k}_e$ for every edge $e$. \\
Set $\alpha = 1$. \\
\While{$\alpha n < m$}{
Sample $\alpha n$ edges from the residual network (with replacement, but ignoring duplicates) according to weights $u_e / \tilde{k}_e$. \\
Find an augmenting path among the sampled edges.
\If{no path is found}{
Double the value of $\alpha$.
}
Find and augment paths in the residual network until none remain.
}
\Return{the resulting flow.}
\end{algorithm}

In contrast, we show that we can directly apply digraph sparsification to the residual network, with $\beta = 2/\gamma$ to obtain a short proof of the Karger-Levine theorem:

\begin{theorem}[Karger and Levine~\cite{karger2002random}]
Suppose we apply the algorithm in Theorem~\ref{thm:beta-cuts-preserved} to the residual network in a maximum flow computation, with $\epsilon = 0.1$ and $\beta = 2/\gamma$, where $\gamma = (v-f)/v$ is the fraction of flow remaining in the residual network. Then with high probability, there is an augmenting path in the sample.
\end{theorem}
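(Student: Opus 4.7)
The plan is to show that the residual network is $(2/\gamma)$-balanced, invoke Theorem~\ref{thm:beta-cuts-preserved} as a black box with $\beta = 2/\gamma$ and $\eps = 0.1$, and then conclude via max-flow/min-cut that the sample contains an augmenting $s$-$t$ path.

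First, I would verify the balance of the residual. Since $G$ is undirected, each edge $\{u,v\}$ with capacity $u_e$ and net flow $f(u,v)$ from $u$ to $v$ contributes residual capacity $u_e - f(u,v)$ on $(u,v)$ and $u_e + f(u,v)$ on $(v,u)$. Fix any cut $S$ and write $C(S) = \sum_{\{u,v\} \in E(S,\overline S)} u_e$ for its undirected capacity in $G$. Summing across the cut, the two directed residual cut values are $C(S) - \phi_S$ and $C(S) + \phi_S$, where $\phi_S$ is the net flow from $S$ to $\overline S$; by flow conservation, $\phi_S \in \{-f, 0, f\}$ according to whether $S$ separates $s$ from $t$ and on which side each sits. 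Hence the ratio of the larger direction to the smaller is at most $(C(S) + f)/(C(S) - f)$, which is decreasing in $C(S)$. Every $s$-$t$ separating cut satisfies $C(S) \ge v$ by max-flow/min-cut applied to $G$, so the ratio is at most $(v + f)/(v - f) = (2 - \gamma)/\gamma \le 2/\gamma$; non-separating cuts are perfectly balanced. Thus the residual network is $(2/\gamma)$-balanced.

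Second, I would apply Theorem~\ref{thm:beta-cuts-preserved} to the residual network with $\beta = 2/\gamma$ and $\eps = 0.1$. Since every cut $U$ of the residual has its own balance $\alpha \le \beta$, the theorem's slack $\eps\sqrt{(\alpha+1)/(\beta+1)}$ is at most $\eps = 0.1$ uniformly. Therefore, with high probability, $H$ approximates every cut of the residual network to within a factor of $1 \pm 0.1$, and in particular the minimum $s$-$t$ cut. Finally, I would invoke max-flow/min-cut once more: the minimum $s$-$t$ cut in the residual network equals $v - f = \gamma v > 0$, so the minimum $s$-$t$ cut of $H$ has value at least $0.9\,\gamma v > 0$; by max-flow/min-cut in $H$ there is a positive-capacity $s$-$t$ path in $H$, which is exactly an augmenting path in the residual network.

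The main subtlety is the balance calculation itself, which requires tracking three cases -- $s \in S, t \notin S$; $t \in S, s \notin S$; and $s, t$ on the same side -- and combining them through monotonicity of $(C + f)/(C - f)$ in $C$, together with the lower bound $C(S) \ge v$ on every $s$-$t$ separating cut. One should also note that the sparsification in Theorem~\ref{thm:beta-cuts-preserved} permits arbitrary non-negative edge weights, which is exactly what the residual network supplies. Once balance is established, the remaining argument is a one-shot invocation of Theorem~\ref{thm:beta-cuts-preserved} followed by max-flow/min-cut on $H$.
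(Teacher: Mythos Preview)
Your proposal is correct and follows essentially the same approach as the paper: establish that every $s$-$t$ cut in the residual network is $(2/\gamma)$-balanced via the calculation $(C+f)/(C-f)\le (2-\gamma)/\gamma$, then invoke Theorem~\ref{thm:beta-cuts-preserved} with $\beta=2/\gamma$ and $\eps=0.1$ to conclude that the minimum $s$-$t$ cut in $H$ is positive. Your write-up is slightly more explicit than the paper's---you separate out the three cases for $\phi_S$, note that non-separating cuts are perfectly balanced, and spell out the final max-flow/min-cut step in $H$---but the argument is the same.
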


\begin{proof}
We claim that every $s$-$t$ cut $S$ in the residual graph is $\beta$-balanced, where $\beta = 2/\gamma$. Suppose $S$ initially contains capacity $c \geq v$, and currently, $x$ units of flow are entering $S$. Since the flow value is $(1-\gamma)v$, the amount of flow leaving $S$ is $x + (1 - \gamma)v$. At the same time, the $x$ units of flow entering $S$ create a residual capacity of $x$ leaving $S$. Thus, the total residual capacity leaving $S$ is $c - x - (1-\gamma)v + x = c - (1-\gamma)v$. We can similarly show that the residual capacity entering $S$ is $c + x + (1-\gamma)v - x = c + (1-\gamma)v$. Thus, in the residual graph, the balance of $S$ is at most
$
\frac{c+(1-\gamma)v}{c-(1-\gamma)v} \leq \frac{2 - \gamma}{\gamma} \leq \frac{2}{\gamma}.
$

Now by setting $\epsilon = 0.1$ and $\beta = 2/\gamma$, Theorem~\ref{thm:beta-cuts-preserved} implies that the sparsifier $H$ preserves all $(2/\gamma)$-balanced cuts up to a $(1 \pm 0.1)$ factor with high probability. Since every $s$-$t$ cut is $(2/\gamma)$-balanced, this implies that there exists an augmenting path in $H$, as desired.
\end{proof}

In the rest of this section, we prove Theorem~\ref{thm:beta-cuts-preserved}. Before we give our algorithm, we state the definitions and results that we need from previous work.

\begin{definition}[Strength and strong components] \label{def:strength}
The \emph{strength} of an edge $e$, denoted by $k_e$, is the largest $k$ such that there exists a $k$-edge-connected vertex-induced subgraph of $G$ containing $e$. A \emph{$k$-strong component} is the subgraph induced by edges with strength at least $k$.
\end{definition}

\begin{lemma}[Bencz\'ur and Karger~\cite{benczur2015randomized}] \label{lem:components}
The strong components of an undirected graph form a laminar family, and a graph on $n$ vertices has at most $n-1$ nontrivial strong components.
\end{lemma}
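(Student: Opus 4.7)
The statement has two parts: laminarity of the family of strong components, and the bound of $n-1$ on the number of nontrivial ones. I would handle them in that order.

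For laminarity, the plan is to show that if two strong components share a vertex, then one must contain the other. The key lemma I would first establish is: if $H_1$ and $H_2$ are vertex-induced $k$-edge-connected subgraphs of $G$ sharing at least one vertex, then the subgraph induced by $V(H_1) \cup V(H_2)$ is also $k$-edge-connected. This follows from a standard cut argument: given any nontrivial cut $(S, \bar S)$ of the union, if it restricts to a nontrivial cut of $H_1$ or $H_2$, then at least $k$ edges of that $H_i$ (and hence of the union) cross the cut; otherwise each $H_i$ sits entirely on one side of $(S, \bar S)$, and the shared vertex forces them to sit on the same side, making the cut trivial and contradicting the assumption.

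Equipped with this lemma, consider two strong components $C_1$ and $C_2$ arising at connectivity levels $k_1 \le k_2$, respectively. Every edge of $C_2$ has strength at least $k_2 \ge k_1$, so $C_2$ lies inside the $k_1$-strong subgraph. If $V(C_1) \cap V(C_2) \ne \varnothing$, then $V(C_1) \cup V(C_2)$ induces a $k_1$-edge-connected subgraph, so each of its edges has strength at least $k_1$, and all its vertices sit in a single connected component of the $k_1$-strong subgraph. Since that component contains a vertex of $C_1$ and $C_1$ is itself such a component, we get $V(C_2) \subseteq V(C_1)$, which is exactly the laminar property.

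For the second claim, I would invoke the standard counting fact that any laminar family on an $n$-element ground set whose members all have size at least $2$ contains at most $n-1$ sets; this can be seen by associating each set with an internal node of the Hasse forest given by minimal strict containment, where each internal node has at least two children and each leaf absorbs a distinct ground element. Since a nontrivial strong component contains at least two vertices by definition, the bound follows. The main obstacle in the whole argument is really just the auxiliary cut argument for the union of two $k$-edge-connected subgraphs; once that is in hand, both laminarity and the $n-1$ bound are essentially mechanical.
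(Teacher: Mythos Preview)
The paper does not supply its own proof of this lemma; it is quoted from Bencz\'ur and Karger without argument. Your proposal is correct and is essentially the original proof: the key observation that the union of two overlapping $k$-edge-connected induced subgraphs is again $k$-edge-connected is exactly what Bencz\'ur and Karger use, and the $n-1$ count via the standard laminar-family bound is the expected finish. One small point you leave implicit is that a connected component of the $k$-strong subgraph is itself $k$-edge-connected (needed before invoking your union lemma on $C_1$ and $C_2$); this is true and follows from the same union lemma applied to the witnessing subgraphs for each edge, so the gap is cosmetic rather than substantive.
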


\begin{lemma}[Bencz\'ur and Karger~\cite{benczur2015randomized}] \label{lem:strengths}
In any graph with edge weights $u_e$ and strengths $k_e$, we have $\sum_e u_e/k_e \leq n-1$. Furthermore, there exists an $O(m\log^3 n)$-time algorithm that returns, for every edge $e$, an estimate $\tilde{k}_e$ of $k_e$ satisfying $\tilde{k}_e \leq k_e$ and $\sum_e u_e / \tilde{k}_e = O(n)$. 
\end{lemma}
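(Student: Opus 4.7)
The plan is to handle the sum bound $\sum_e u_e/k_e \leq n-1$ and the algorithmic estimate separately. For the sum bound, I would route through the Nash--Williams spanning-tree-packing quantity $\tau(H) := \min_{\mathcal{P}} u_H(\mathcal{P})/(|\mathcal{P}|-1)$ (minimum over vertex partitions of total inter-class weight over number of classes minus one) and the induced strength $\tau_e := \max\{\tau(H) : H \text{ vertex-induced} \ni e\}$. Since any graph $H$ with $\tau(H) \geq k$ admits a fractional packing of $k$ spanning trees, each of which crosses every cut at least once, such an $H$ is $k$-edge-connected; hence $k_e \geq \tau_e$ and $u_e/k_e \leq u_e/\tau_e$. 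By Cunningham's principal-partition theorem, the recursive argmin-partition construction on $G$ produces exactly the laminar family of Lemma~\ref{lem:components}, and at each internal node $C$ with $c_C$ children the Nash--Williams identity gives $u^*(C) = \tau_C \cdot (c_C - 1)$, where $u^*(C)$ is the weight of edges whose innermost enclosing strong component is $C$. Summing $u^*(C)/\tau_C = c_C - 1$ across the laminar tree telescopes to $n-1$ (in any rooted tree with $n$ leaves, $\sum_{\text{internal } C}(c_C - 1) = n - 1$), so $\sum_e u_e/\tau_e = n - 1$ and therefore $\sum_e u_e/k_e \leq n - 1$.

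For the estimation algorithm, I would use Nagamochi--Ibaraki (NI) sparse edge-connectivity certificates at geometric scales. For any target $k$, the NI procedure computes in $O(m+n)$ time edge-disjoint spanning forests $F_1, \ldots, F_k$ whose union is a $k$-connectivity certificate. For each $k = 2^j$, $j = 0, 1, \ldots, O(\log n)$, run NI at scale $k$; for weighted graphs first bucket edges into $O(\log n)$ weight classes $[2^i, 2^{i+1})$ so each bucket is essentially unweighted up to a constant factor (using polynomially bounded weights). Define $\tilde k_e$ as the largest $2^j$ at which $e$ is protected by a genuine $2^j$-edge-connected vertex-induced sub-certificate, rounded down conservatively to ensure $\tilde k_e \leq k_e$. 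The NI guarantee yields $\tilde k_e \geq k_e / O(1)$, so $\sum_e u_e/\tilde k_e = O(\sum_e u_e/k_e) = O(n)$ by the preceding part. The runtime splits as $O(\log n)$ strength scales, times $O(\log n)$ weight buckets, times $O(\log n)$ per-operation cost for link-cut trees maintaining per-scale forests, giving $O(m \log^3 n)$ total.

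The main obstacle is the first part: while the telescoping $\sum_C (c_C - 1) = n - 1$ is elementary, the per-node identity $u^*(C)/\tau_C = c_C - 1$ leans on Cunningham's principal-partition theorem, whose proof uses submodular-function minimization to show the recursive argmin-partition coincides with the laminar strong-component decomposition and attains the Nash--Williams minimum with equality---establishing this equivalence between the connectivity-based strength of the statement and the tree-packing-based strength is the real content. The secondary obstacle, in the algorithmic part, is calibrating $\tilde k_e$ so that it is simultaneously a strict lower bound on $k_e$ (forcing conservative rounding that never overestimates) and within a constant factor of $k_e$, while the weighted-to-unweighted bucketing does not inflate the aggregate $\sum u_e/\tilde k_e$ beyond $O(n)$.
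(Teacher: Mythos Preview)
The paper does not prove this lemma; it is quoted from Bencz\'ur and Karger and used as a black box (note the citation in the lemma heading and the absence of any proof environment following it). There is therefore no in-paper proof to compare against.

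For what it is worth, your route to the inequality $\sum_e u_e/k_e \le n-1$ via Nash--Williams tree packing and Cunningham's principal partition is considerably heavier than the original Bencz\'ur--Karger argument. Their proof is a direct induction on the laminar family of strong components: the key step is that in any $n$-vertex graph the edges of strength at most $k$ have total weight at most $k(n-1)$, proved by repeatedly removing a cut of value $\le k$ (which exists as long as any such edge remains) until the graph is shattered into singletons, for a total of $n-1$ removals. Telescoping over the distinct strength levels then gives the bound with no appeal to tree packing. Your detour through $\tau_e$ is sound in principle---$k_e \ge \tau_e$ holds for exactly the reason you state---but the exact identity $\sum_e u_e/\tau_e = n-1$ is leaning on Cunningham's machinery to do work that a two-line counting argument handles directly.

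For the algorithmic part, your outline (Nagamochi--Ibaraki certificates at geometric scales) is the right shape, but the definition of $\tilde k_e$ you give is too vague to verify, and the claim ``the NI guarantee yields $\tilde k_e \ge k_e/O(1)$'' is not something NI provides out of the box---it requires the specific way Bencz\'ur and Karger layer the certificates and assign estimates. Also, their algorithm handles arbitrary nonnegative weights; your bucketing into $O(\log n)$ weight classes presupposes polynomially bounded weights, which the lemma does not assume.
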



We now describe our algorithm (Algorithm~\ref{alg:sample}).
The input is a directed graph where each edge $e$ has weight $u_e$. We first compute approximate edge strengths $\tilde{k}_e$ as given in Lemma~\ref{lem:strengths}. Then we sample each edge $e$ proportional to $(\beta+1)u_e/\tilde{k}_e$, where $\beta\geq 1$ is a chosen parameter.
We choose the weight of the sampled edges so that we get an unbiased estimator.

\begin{algorithm}[!ht]
\DontPrintSemicolon
\caption{For-all sparsification for directed graphs}
\label{alg:sample}
\SetKwInOut{Input}{Input}
\SetKwInOut{Output}{Output}
\Input{An $n$-vertex directed graph $G = (V, E, u)$ with edge weights $u_e$, $0 < \eps < 1$, $\beta \geq 1$, and a constant $d > 2$.}
\Output{A subgraph $H$ that satisfies Theorem~\ref{thm:beta-cuts-preserved}.}
Use Lemma~\ref{lem:strengths} to compute an estimated edge strength $\tilde{k}_e \le k_e$ for every edge $e \in E$. \\
Let $\rho = 3d(\beta+1)\log n / \epsilon^2$. \\
\For{each edge $e\in E$}{
Sample $e$ with probability $p_e = \rho\cdot u_e/\tilde{k}_e$. \\
\lIf{$e$ is sampled}{add $e$ to $H$ with weight $w_e = \tilde{k}_e / \rho$.
}
}
\Return{$H$.}
\end{algorithm}

Now we analyze the output $H$ of Algorithm~\ref{alg:sample}. Without loss of generality, we assume that the algorithm uses the actual edge strengths $k_e$ rather than the estimates $\tilde{k}_e$. This is because $\tilde{k}_e \leq k_e$ and it does not hurt to oversample in importance sampling.

For each strong component $G_i$ of $G$ (see Definition~\ref{def:strength}), let $H_i$ denote the corresponding component in $H$. Because the way we choose sampling probabilities and edge weights in $H$, we have $\Ex{H_i} = G_i$. Let $\alpha_i=(k_i-k_{p(i)})/\rho$ where $p(i)$ is $G_i$'s parent in the laminar family formed by strong components (see Lemma~\ref{lem:components}). As shown by Bencz\'ur and Karger~\cite{benczur2015randomized}, this results in a decomposition of $G$ into its strong components, that is, $G = \sum_i\alpha_i G_i$.

For a component $G_i$ and a cut $U$, let $\delta_{G_i}(U)$ be the total capacity of edges leaving $U$ in $G_i$, and let $\dun{G_i}{U}$ be the corresponding value for the undirected version of $G_i$. The following lemma shows that for every strong component $G_i$, with high probability, $\delta_{G_i}(U)$ is preserved in $H$ up to a relative error for every cut $U$.

\begin{lemma} \label{lem:preserve-Gi}
Let $H$ be the output of Algorithm~\ref{alg:sample}.
For each strong component $G_i$ (defined in Definition~\ref{def:strength}), the following holds with probability at least $1 - O(n^{-d+2})$: for any cut $U$, we have $\abs{\delta_{H_i}(U)-\delta_{G_i}(U)} \leq \zeta(U) \cdot \delta_{G_i}(U)$ where $\zeta(U) = \epsilon\sqrt{\dun{G_i}{U}/(\delta_{G_i}(U)(\beta+1))}$.
\end{lemma}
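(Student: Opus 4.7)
The plan is to prove Lemma~\ref{lem:preserve-Gi} by the classical Bencz\'ur--Karger recipe: a multiplicative Chernoff bound for each fixed cut in $G_i$, combined with a union bound over all cuts using Karger's cut-counting theorem. The work is in tracking how the directed cut value $\delta_{G_i}(U)$ and the undirected cut value $\dun{G_i}{U}$ enter the analysis so that the deviation $\zeta(U)$ in its $\sqrt{\dun{G_i}{U}/(\delta_{G_i}(U)(\beta+1))}$ form emerges naturally.

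For the per-cut step, I would fix $U \subseteq V_i$ in $G_i$ and let $c := \dun{G_i}{U}$. Write
\[
\delta_{H_i}(U) = \sum_{e \in E_{G_i}(U,\overline{U})} X_e \cdot (k_e/\rho),
\]
a sum of independent scaled Bernoullis with $\Ex{\delta_{H_i}(U)} = \delta_{G_i}(U)$. Applying the multiplicative Chernoff bound with deviation $t = \zeta(U)$, and using $t^2 \delta_{G_i}(U) = \epsilon^2 c/(\beta+1)$, gives
\[
\Pr\bigl[|\delta_{H_i}(U) - \delta_{G_i}(U)| > \zeta(U)\delta_{G_i}(U)\bigr] \leq 2\exp\!\left(-\frac{\epsilon^2 c}{3(\beta+1)\, W}\right),
\]
where $W := \max_{e} k_e/\rho$. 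The key structural step is $W \leq k_i/\rho$: interpreting $G_i$ as the layer of the Bencz\'ur--Karger decomposition $G = \sum_i \alpha_i G_i$ at strength level $k_i$, every edge in $G_i$ has $k_e = k_i$. Substituting $\rho = 3d(\beta+1)\log n/\epsilon^2$, the per-cut failure probability becomes $2 n^{-dc/k_i}$.

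For the union bound, since the undirected version of $G_i$ is $k_i$-edge-connected by the definition of a $k_i$-strong component, Karger's cut-counting theorem gives at most $n^{2\alpha}$ cuts with undirected value at most $\alpha k_i$ for any $\alpha \geq 1$. Grouping cuts by $\alpha \approx c/k_i$ and summing the per-cut bounds,
\[
\sum_{\alpha \geq 1} n^{2\alpha} \cdot 2 n^{-d\alpha} = O\!\left(n^{-(d-2)}\right),
\]
which matches the failure probability in the lemma.

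The main obstacle is the bound $W \leq k_i/\rho$ on the maximum summand. A naive reading of ``$G_i$ is the $k_i$-strong component'' allows edges of strength $k_e \gg k_i$, which would weaken the Chernoff exponent and cause the union bound to diverge. The remedy is to use the layered Bencz\'ur--Karger decomposition, where each $G_i$ really represents only the ``new'' edges at level $k_i$ (edges with strength exactly $k_i$); edges of higher strength live in deeper components and are charged there. With this refinement $k_e = k_i$ for every $e \in G_i$, and the Chernoff-plus-Karger argument closes.
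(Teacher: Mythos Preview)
Your overall plan---Chernoff per cut, Karger cut counting, union bound---matches the paper. The gap is in your remedy for the max-summand obstacle.

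You correctly observe that if $G_i$ is the full $k_i$-strong component with the original weights, then writing $\delta_{H_i}(U) = \sum_e X_e \cdot (k_e/\rho)$ leaves $W = \max_e k_e/\rho$ potentially much larger than $k_i/\rho$. Your fix is to take $G_i$ to be only the ``layer'' of edges with $k_e = k_i$. But that layer is \emph{not} $k_i$-edge-connected, so you cannot invoke Karger's cut-counting theorem on it (and your sentence ``the undirected version of $G_i$ is $k_i$-edge-connected by the definition of a $k_i$-strong component'' is then false). Concretely, take a cut $U$ that separates two vertices inside some $(k_i{+}1)$-strong subcomponent: $U$ may cross arbitrarily few strength-$k_i$ edges, so $c = \dun{G_i}{U} \ll k_i$ and your Chernoff exponent $-dc\log n/k_i$ is useless, while there are exponentially many such $U$. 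Saying the higher-strength edges are ``charged'' to deeper components does not rescue this: Lemma~\ref{lem:preserve-Gi} asks for a relative bound on the layer-$i$ error itself, for every cut.

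The paper sidesteps the obstacle by a reweighting rather than a layering. It keeps $G_i$ as the full $k_i$-strong component but assigns each edge weight $p_e = \rho u_e/k_e$, so that $H_i$ is literally a sum of $\{0,1\}$ indicators and $W = 1$ automatically (this is also the normalization under which $G = \sum_i \alpha_i G_i$ with $\alpha_i = (k_i - k_{p(i)})/\rho$). The Chernoff exponent then becomes $-d\,\delta_p(U)\log n$, where $\delta_p(U)$ is the sum of $u_e/k_e$ over the undirected edges crossing $U$. The role your ``$c \ge k_i$'' was meant to play is taken over by the Bencz\'ur--Karger fact that $\delta_p(U) \ge 1$ for every cut of a strong component---which holds precisely because the higher-strength edges are still present, just downweighted by $1/k_e$. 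Cut counting is then applied to the graph with weights $u_e/k_e$, whose minimum cut is $\ge 1$, and the union bound closes. This reweighting, together with the inequality $\delta_p(U)\ge 1$, is the missing ingredient in your argument.
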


\begin{proof}
For any cut $U$, let $\delta_p(U)$ denote the sum of $u_e/k_e$ over the (undirected) edges crossing $U_j$ in $G_i$. Order the $r$ cuts intersecting $G_i$ such that $1 = \delta_p(U_1) \leq \cdots \leq \delta_p(U_r)$, and let
\[
q_j = \Pr\left(\abs{\delta_{H_i}(U_j)-\delta_{G_i}(U_j)} > \zeta(U) \cdot \Ex{\delta_{H_i}(U_j)}\right).
\]
By a Chernoff bound, we have
\begin{equation} \label{eq:chern-qj}
q_j \leq 2\exp\left(-\frac{(\zeta(U))^2 \cdot \delta_{G_i}(U_j)}{3}\right) = 2\exp\left(-\frac{\epsilon^2 \cdot \dun{G_i}{U_j}}{3(\beta+1)}\right),
\end{equation}
where the equality follows substituting the definition of $\zeta(U)$. Let $E(i,j)$ denote the set of edges crossing $U_j$ in $G_i$. Since each edge $e$ in $G_i$ has weight $p_e$, we have
\[
\dun{G_i}{U_j} = \sum_{e\in E(i,j)} p_e = \sum_{e\in E(i,j)} \frac{3d(\beta+1)u_e}{\epsilon^2 k_e}\cdot \log n,
\]
Substituting this into Eq.~\eqref{eq:chern-qj} shows
\[
q_j \leq 2\exp\left(-d\sum_{e\in E(i,j)} \frac{u_e}{k_e} \cdot \log n \right) = 2n^{-d \cdot \delta_p(U_j)}.
\]
Since $\delta_p(U_j) \geq 1$, we have $q_j \leq 2n^{-d}$, so
\begin{equation} \label{eq:first-n2}
\sum_{j\leq n^2} q_j \leq n^2 \cdot 2n^{-d} = 2n^{-d+2} = O(n^{-d+2}).
\end{equation}
For $j \geq n^2$, we express $j$ as $j = n^{2\lambda}$. The number of $\lambda$-minimum cuts is at most $j$ (see, e.g., Karger and Stein~\cite{karger1996new}) and $\delta_p(U_1) = 1$, so $\delta_p(U_j) \geq \lambda$, so $\delta_p(U_j) \geq \frac{\log j}{2\log n}$. This implies, for $j \geq n^2$, $q_j \leq 2n^{-(d\log j) / (2 \log n)} = 2j^{-d/2}$.
Combining this with Eq.~\eqref{eq:first-n2}, we can conclude
\[
\sum_{j\ge 1}q_j \leq O(n^{-d+2}) +2\int_{n^2}^\infty j^{-d/2} dj = O(n^{-d+2}).\qedhere
\]
\end{proof}

Now we are ready to prove Theorem~\ref{thm:beta-cuts-preserved}.

\begin{proof}[Proof of Theorem~\ref{thm:beta-cuts-preserved}]
By Lemma~\ref{lem:strengths}, the expected number of edges in $H$ is $\sum_e p_e = O(\beta n\log n/\epsilon^2)$, as claimed. Now consider an $\alpha$-balanced cut $U$. We have
\[
\abs{\delta_H(U)-\delta_G(U)} = \Bigl|\sum_i\alpha_i\delta_{H_i}(U)-\alpha_i\delta_{G_i}(U)\Bigr| \leq  \sum_i\alpha_i\abs{\delta_{H_i}(U)-\delta_{G_i}(U)}.
\]
Taking a union bound over all strong components, we know that with high probability, Lemma~\ref{lem:preserve-Gi} holds for every strong component. Thus, the quantity above is at most
\begin{align*}
\sum_i \alpha_i\zeta(U) \cdot \delta_{G_i}(U) &= \sum_i \alpha_i\epsilon\sqrt{\dun{G_i}{U}\delta_{G_i}(U)/(\beta+1)} \\
&\leq \frac{\epsilon}{\sqrt{\beta+1}}\sqrt{\sum_i\alpha_i\dun{G_i}{U}\sum_i\alpha_i\delta_{G_i}(U)} \tag{Cauchy-Schwarz} \\
&= \frac{\epsilon}{\sqrt{\beta+1}}\sqrt{\dun{G}{U}\delta_G(U)}. \tag{$\sum_i\alpha_i\delta_{G_i}(U) = \delta_G(U)$}
\end{align*}
We conclude the proof by noting that $\dun{G}{U} \leq (\alpha+1)\cdot \delta_G(U)$, since $U$ is $\alpha$-balanced.
\end{proof}

\section{For-All Sparsification: $\Omega(n\cdot \beta/\epsilon)$ Lower Bound}
Our goal in this section to prove a lower bound whose dependence on $\beta$ matches the linear upper bound given by Ikeda and Tanigawa~\cite{ikeda2018cut} on the size of for-all cut sketches: 

\begin{theorem}
\label{thm:allcuts-lb}
Fix $\beta \ge 1$ and $0 < \eps < 1$ where $\beta/\eps \le n/2$.
Any $(1\pm\eps)$ for-all cut sketching algorithm for $n$-node $\beta$-balanced graphs must output $\Omega(n\cdot \beta / \eps)$ bits in the worst case.
\end{theorem}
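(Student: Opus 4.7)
The strategy is a standard information-theoretic lower bound. The plan is to exhibit a family $\mathcal{G}$ of $\beta$-balanced $n$-vertex digraphs with $|\mathcal{G}| = 2^{\Omega(n\beta/\eps)}$ such that, for every pair of distinct $G, H \in \mathcal{G}$, there is a cut $S$ at which $\delta_G(S)$ and $\delta_H(S)$ differ by more than a factor of $((1+\eps)/(1-\eps))^2 \approx 1+4\eps$. Once both properties are in place, the conclusion is immediate: if $\sk(G) = \sk(H)$ for two distinct $G, H \in \mathcal{G}$, then the recovered value $f(S, \sk(G)) = f(S, \sk(H))$ must simultaneously lie in $(1\pm\eps)\delta_G(S)$ and $(1\pm\eps)\delta_H(S)$, contradicting the separating cut. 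Hence the sketching function must be injective on $\mathcal{G}$, forcing any sketch to use at least $\log_2 |\mathcal{G}| = \Omega(n\beta/\eps)$ bits.

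The construction I would use is bipartite. Split the vertex set into $A = \{a_1,\dots,a_{n/2}\}$ and $B = \{b_1,\dots,b_{n/2}\}$, and place in every graph a common ``anchor'': a fixed backward matching $M = \{(b_i, a_i) : i \in [n/2]\}$ with uniform weight $W$ tuned to $\Theta(1/\eps)$ so as to enforce $\beta$-balance. The variable part of each graph is a unit-weight forward bipartite graph, indexed by a tuple $\mathcal{J} = (J_1, \ldots, J_{n/2})$ with each $J_i \subseteq [n/2]$ of size $k = \Theta(\beta/\eps)$, contributing edges $\{(a_i, b_j) : j \in J_i\}$. With $W \approx k/\beta = 1/\eps$, a cut-by-cut check (beginning with the singleton cuts $\{a_i\}$ and $\{b_j\}$ and extending to general cuts $I_A \cup I_B$) verifies $\beta$-balance. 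The family size is $|\mathcal{G}| \ge \binom{n/2}{\beta/\eps}^{n/2} = 2^{\Omega(n\beta/\eps)}$ when $\beta/\eps \le n/4$; the borderline regime $n/4 < \beta/\eps \le n/2$ is handled by a denser variant in which each $a_i$ is allowed an arbitrary subset of $B$ (and $W$ is raised to $\Theta(n/\beta)$), which still yields $2^{\Omega(n\beta/\eps)}$ graphs.

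The heart of the argument is distinguishability, and this is where the fixed matching plays its key role. Because $M$ is identical across $\mathcal{G}$, its contribution to any cut value is a function of $S$ alone and cancels in $\delta_G(S) - \delta_H(S)$; thus cut-value differences between graphs isolate the variable forward edges crossing $S$. Considering the family of diagnostic cuts $S = \{a_i\} \cup T$ over $T \subseteq B$ for each row $i$, the variable contribution is $|J_i \setminus T|$, a linear function of the indicator vector of $J_i$; combined over all choices of $T$, an approximate-recovery argument pins down $J_i$ (modulo the $(1\pm\eps)$ slack), and iterating over $i$ forces $\mathcal{J}^G = \mathcal{J}^H$. This realizes the ``approximate cut values imply approximate cut edges'' property highlighted in the introduction.

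The main obstacle is the quantitative distinguishability step: the matching contributes $\Theta(W \cdot n) = \Theta(n/\eps)$ to most cuts, while the $\eps$-relative approximation tolerates absolute slack of $\Theta(n)$, which dwarfs the contribution of any single variable edge. The proof must therefore choose diagnostic cuts where the matching contribution is small (for instance, cuts such as $\{a_i\} \cup J_i^G$ in which matched pairs $(b_i, a_i)$ lie entirely on one side of the cut and contribute nothing) and/or aggregate signals across many such cuts, and then show that the resulting system of approximate constraints on $\mathcal{J}^G$ is still rigid enough to force $\mathcal{J}^G = \mathcal{J}^H$. Balancing the tension between enforcing $\beta$-balance (which requires substantial matching weight) and preserving cut-value sensitivity to the variable edges is the principal technical challenge.
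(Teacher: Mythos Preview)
Your construction is close to the paper's and you correctly isolate the core tension, but there are two genuine gaps.

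First, the injectivity route does not go through. On the cut $S=\{a_i\}\cup J_i^G$ that you single out, the matching edges $(b_j,a_j)$ with $b_j\in J_i^G$ and $j\ne i$ \emph{do} cross $S$, contributing $\Theta(kW)=\Theta(k/\eps)$; so this cut does not avoid the matching as you suggest, and its $(1\pm\eps)$ slack is $\Theta(k)$. Since your family contains pairs $G,H$ differing in a single unit-weight edge, injectivity would require, for every such pair, a cut of value $o(1/\eps)$ that the differing edge crosses; you have not exhibited one, and the balance constraints make this hard in general. The paper does not attempt injectivity. It uses essentially the same diagnostic cut to extract only the weaker conclusion $|J_i^H\setminus J_i^G|=O(ck)$ (working at precision $c\eps$ for a small constant $c$), and then \emph{counts}: the number of $J_i^H$ compatible with a given $J_i^G$ is at most $\binom{k}{O(ck)}^2\le 2^{k/4}$, exponentially smaller than the $2^{\Omega(k)}$ choices for $J_i^G$. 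Multiplying over all rows yields $|\mathcal{G}|/\ell=2^{\Omega(n\beta/\eps)}$ distinct sketches, hence the lower bound.

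Second, your claimed cut-by-cut verification of $\beta$-balance fails when $n\gg\beta/\eps$. The singleton $\{b_j\}$ has out-weight $W=\Theta(1/\eps)$ but in-weight $|\{i:j\in J_i\}|$, which for adversarial $(J_i)$ can be $n/2$, giving balance $\Theta(n\eps)\gg\beta$. The paper addresses this in two moves: it partitions the $n$ vertices into $n/k$ clusters of size $k=\beta/\eps$ and places an independent bipartite gadget between consecutive clusters, so that all degrees within a gadget are automatically at most $k$; and it proves $O(\beta)$-balance \emph{probabilistically} (showing that at least half of the $2^{k^2}$ possible gadgets are balanced) rather than deterministically for every member of the family. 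Both the clustering and the probabilistic restriction are absent from your proposal.
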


We first prove a special case of our lower bound for $\beta = \Theta(n)$ and $\eps = \Theta(1)$ (Lemma~\ref{lem:allcuts-lb-simple}).
The proof for this special case contains the main ideas of our lower-bound construction for general values of $\beta$ and $\eps$.

\begin{lemma}
\label{lem:allcuts-lb-simple}
Let $\beta = 8n$ and let $\eps$ be a sufficiently small universal constant.
Any $(1\pm\eps)$ for-all cut sketching algorithm for $n$-node $\beta$-balanced graphs must output $\Omega(\beta n)$ bits in the worst case.
\end{lemma}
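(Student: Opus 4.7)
The plan is an information-theoretic encoding argument. I will construct a family $\mathcal{G}$ of $n$-vertex $\beta$-balanced digraphs with $|\mathcal{G}| \ge 2^{\Omega(\beta n)}$ such that any two distinct graphs in $\mathcal{G}$ must receive different outputs from any valid $(1\pm\eps)$ for-all cut sketching algorithm. Since the sketch is deterministic, it then requires at least $\log_2 |\mathcal{G}| = \Omega(\beta n)$ bits in the worst case.

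The construction is bipartite: partition $V = L \sqcup R$ with $|L|=|R|=n/2$, fix a ``heavy'' directed matching $M = \{(u_i,v_i):i\in [n/2]\}$ from $L$ to $R$ of weight $\Theta(\beta)$, and parameterize each $G_x \in \mathcal{G}$ by a $0/1$ matrix $x \in \{0,1\}^{[n/2]\times[n/2]}$ encoding which reverse unit-weight edges $(v_j,u_i)$ are present. The heavy matching dominates $w(L,R) = \Theta(\beta n)$, so the cut $S = L$ only requires $\Omega(n)$ reverse variable edges for $\beta$-balance, and a singleton cut $\{u_i\}$ just needs at least one incoming variable edge at $u_i$; fixing $O(n)$ of the variable bits to meet these and the analogous constraints for the remaining cuts leaves $\Omega(n^2) = \Omega(\beta n)$ free bits, giving $|\mathcal{G}| = 2^{\Omega(\beta n)}$. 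The structural observation that drives distinguishability is that the ``matched'' cuts $S_I = \{u_i,v_i : i \in I\}$ are never crossed by the matching, so $\delta_{G_x}(S_I) = \sum_{i \in I,\, j \notin I} x_{i,j}$ is a linear functional of $x$ alone, uncontaminated by the $\beta$-weighted matching edges.

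The main obstacle is the distinguishability step for a constant $\eps$. A single unit-weight variable edge cannot be isolated by any one cut, because the multiplicative tolerance $\eps \cdot \delta(S)$ can be $\Theta(\eps n^2)$, much larger than a single bit. I plan to overcome this by restricting $\mathcal{G}$ to a Gilbert--Varshamov packing with pairwise Hamming distance $\Omega(n^2)$ on the variable matrices, and then invoking a bipartite-discrepancy / cut-norm bound: for any two codewords $x,y$ there must exist some $I$ such that $|\sum_{i\in I,j\notin I}(x_{i,j}-y_{i,j})|$ exceeds the combined $(1\pm\eps)$ tolerance on $\delta(S_I)$. The delicate parameters to tune are the anchor weight on the matching (large enough for $\beta$-balance, small enough that matched cuts remain informative) and the density of the variable edges (so that matched-cut values are $\Theta(n^2)$ and dominate the additive approximation error); the restriction to the regime $\beta = 8n$ and $\eps$ a sufficiently small constant is what makes this tuning possible, and the full lower bound in Theorem~\ref{thm:allcuts-lb} will be obtained by rescaling this construction to general $\beta$ and $\eps$.
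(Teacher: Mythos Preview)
Your distinguishability step has a genuine gap. You need, for any two codewords $x,y$ at Hamming distance $\Omega(n^2)$, some matched cut $S_I$ with $\bigl|\sum_{i\in I,\,j\notin I}(x_{i,j}-y_{i,j})\bigr|$ exceeding the $(1\pm\eps)$ tolerance, which on such a cut is $\Theta(\eps n^2)$. This is a cut-norm lower bound on the difference matrix $D=x-y$, and it is false in general: a $\{-1,0,+1\}$ matrix can have $\Omega(n^2)$ nonzero entries yet satisfy $\max_{I}\bigl|\sum_{i\in I,\,j\notin I}D_{i,j}\bigr|=O(n^{3/2})$ --- take $D$ to be a random sign matrix, or a Hadamard matrix, on its support. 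A Gilbert--Varshamov restriction controls only $\|D\|_0$, not the cut norm, so your family will contain pairs $G_x,G_y$ that are $(1\pm\eps)$-sparsifiers of each other on every matched cut for any constant $\eps$, and the encoding argument collapses.

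The paper avoids this in two ways. First, the matching is kept at \emph{unit} weight, not $\Theta(\beta)$; balance is then verified probabilistically (a uniformly random choice of reverse edges is $8n$-balanced with probability $1-o(1)$), rather than by anchoring a heavy matching. Second, and more importantly, the paper does not attempt pairwise distinguishability at all. The identifying cut for each $v\in R$ is $S_v=\{v\}\cup N_G(v)$, which with a unit-weight matching has value at most $n$ (only matching edges leave it), so the additive tolerance is $\le 3\eps n$. This forces $|N_H(v)\setminus N_G(v)|\le 3\eps n$ for every $v$ whenever $H$ shares a sketch with $G$, whence at most $\binom{n}{3\eps n}^{2n}\le 2^{n^2/4}$ graphs can map to any one sketch; comparing with $|\calG|\ge 2^{n^2/2}$ finishes. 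This covering bound replaces your cut-norm step entirely, and note that your heavy matching would destroy it: with weight-$\Theta(n)$ matching edges, the cut $S_v$ has value $\Theta(n^2)$ and carries no usable per-vertex information.
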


We give an overview of how we prove Lemma~\ref{lem:allcuts-lb-simple}. Without loss of generality, we can focus on deterministic cut sketching algorithms, because running time is not a concern in Lemma~\ref{lem:allcuts-lb-simple}, any randomized sketching algorithm can be derandomized by enumerating all possible coin flips.

We will choose a set of graphs $\calG$ such that the following conditions hold:
\begin{itemize}
\item Every graph in $\calG$ is $\beta$-balanced.
\item The size of $\calG$ is large (Lemma~\ref{lem:gnbeta-size}).
\item There exists a $\ell$ with $|\calG| / \ell = 2^{\Omega(\beta n)}$ such that, for every graph $G \in \calG$, there are at most $\ell$ graphs in $\calG$ that can share a $(1\pm\eps)$-cut sketch with $G$ (Lemma~\ref{lem:encode-H}).
\end{itemize}
This way, each cut sketch works for at most $\ell$ graphs in $\calG$, so any algorithm must produce at least $|\calG| / \ell = 2^{\Omega(\beta n)}$ different cut sketches for all graphs in $\calG$, which implies that the algorithm must output at least $\log_2 (|\calG|/\ell) = \Omega(\beta n)$ bits.

Formally, consider the set of graphs $\calG_{2n}$ with $2n$ vertices defined as follows: every graph $G \in \calG_{2n}$ is an unweighted bipartite graph with bipartitions $L, R$ satisfying $|L| = |R| = n$. Fix a perfect matching from $L$ to $R$. The set $\calG_{2n}$ is defined to contain all graphs $G$ such that the edges from $L$ to $R$ is exactly this perfect matching (and the set of edges from $R$ to $L$ are arbitrary).
Let $\calG_{2n,\beta} \subseteq \calG_{2n}$ be the subset of graphs in $\calG_{2n}$ that are $\beta$-balanced.


As described above, Lemma~\ref{lem:gnbeta-size} gives a lower bound on the size of $\calG_{2n,\beta}$.

\begin{lemma}
\label{lem:gnbeta-size}
Let $n_0$ be a sufficiently large universal constant.
If $n \ge n_0$ and $\beta = 8 n$, then $|\calG_{2n,\beta}| \ge 2^{n^2/2}$.
\end{lemma}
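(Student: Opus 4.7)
The plan is to use the probabilistic method. I would sample $G$ uniformly from $\calG_{2n}$---equivalently by including each of the $n^2$ possible $R\to L$ edges independently with probability $1/2$---and show that $\Pr[G \text{ is } 8n\text{-balanced}] \ge 1/2$ for $n$ sufficiently large. This would immediately yield $|\calG_{2n,\beta}| \ge 2^{n^2-1} \ge 2^{n^2/2}$, as desired.

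For each non-trivial cut $S = A \cup B$ with $A\subseteq L$, $B\subseteq R$, I would set $a=|A|$, $b=|B|$, and $k=|\{i:l_i\in A,\,r_i\in B\}|$. Direct computation gives
\[
w(S,\bar S) \;=\; (a-k) + X_1, \qquad w(\bar S,S) \;=\; (b-k) + X_2,
\]
where $X_1\sim\mathrm{Bin}(b(n-a),\,1/2)$ counts the random edges from $B$ to $L\setminus A$ and $X_2\sim\mathrm{Bin}((n-b)a,\,1/2)$ counts those from $R\setminus B$ to $A$; the two binomials are independent since their supports are disjoint. Let $F_{A,B}$ denote the bad event $w(S,\bar S) > 8n\cdot w(\bar S,S)$; the symmetric direction is handled by applying the same argument to $\bar S$.

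Next I would bound $\Pr[F_{A,B}]$ by a case analysis on $(a, b)$. When $a=0$ (so $S\subseteq R$), $w(\bar S,S)=b$ deterministically and $w(S,\bar S)=X_1\le bn$, so $F_{A,B}$ never occurs; the case $b=n$ is analogous. For the remaining cuts I would combine Chernoff bounds on the upper tail of $X_1$ and the lower tail of $X_2$ with the deterministic offsets $a-k$ and $b-k$ whenever they are positive. Summing $\Pr[F_{A,B}]$ over all cuts, grouped by the shape $(a,b,k)$ with at most $\binom{n}{a}\binom{n}{b}$ cuts per shape, gives $o(1)$, so the union bound closes.

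The hard part will be the ``balanced-pair'' sub-case $B=\{r_i:l_i\in A\}$, in which $a=b=k$ and both deterministic offsets vanish. Here the balance condition reduces to $X_1\le 8n\,X_2$ with $X_1,X_2$ i.i.d.\ $\mathrm{Bin}(a(n-a),\,1/2)$, and the tightest instance is $a=1$: there $X_1\le n-1<8n$ holds deterministically, so the cut is bad only when $X_2=0$, which has probability $2^{-(n-1)}$; summing over the $n$ singleton matched pairs contributes $O(n/2^n)$. Larger $a$ is handled by separating the atom $\{X_2=0\}$ from a Chernoff-tail bound on $X_1$ conditioned on $X_2\ge 1$; the combined geometric decay in $a(n-a)$ easily dominates $\binom{n}{a}$ in the union bound, giving $o(1)$ overall.
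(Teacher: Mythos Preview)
Your approach is sound and shares the same probabilistic-method skeleton as the paper, but the execution differs. The paper does not union-bound over all cuts directly. Instead it isolates two global high-probability events about the random $R\to L$ edges---(i) every $u\in L$ has in-degree and every $v\in R$ has out-degree in $[\tfrac{3}{8}n,\tfrac{5}{8}n]$, and (ii) for every pair $A\subseteq L$, $B\subseteq R$ with $|A|,|B|\ge n/8$ one has $|E(B,A)|\ge \tfrac14|A||B|$---and then argues \emph{deterministically} that any graph satisfying (i) and (ii) is $8n$-balanced, via a short case split on whether $|A|,|B|$ lie in $[n/8,7n/8]$. This factorisation sidesteps the cut-by-cut bookkeeping you set up and keeps the case analysis to a few lines; your direct route is more elementary but needs more care in the accounting.

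One clause does need repair. In the balanced-pair case with $a(n-a)\gg n$ (e.g.\ $a=\Theta(n)$), conditioning on $X_2\ge 1$ and applying an upper-tail Chernoff bound to $X_1$ does \emph{not} help: here $\mathbb{E}[X_1]=a(n-a)/2\gg 8n$, so $\Pr[X_1>8n]$ is close to $1$, and the claimed ``geometric decay in $a(n-a)$'' does not materialise from that bound. The fix is the one you already named for the generic case---use the lower tail of $X_2$. Once $X_2\ge \mu/4$ with $\mu=a(n-a)/2$, you have $8nX_2\ge 2n\mu>2\mu\ge X_1$ deterministically, and $\Pr[X_2<\mu/4]\le e^{-\Omega(\mu)}$ beats $\binom{n}{a}$ in the union bound. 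A second minor point: in a balanced-pair cut the event $X_1=X_2=0$ makes $G$ disconnected and hence not $\beta$-balanced by definition, yet neither $F_{A,B}$ nor $F_{\bar A,\bar B}$ fires; you should fold this into the bad event (its probability $2^{-2a(n-a)}$ is negligible anyway).
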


The next lemma upper bounds the maximum number of graphs in $\calG$ that can share an $(1\pm\eps)$-cut sketch.
Notice that if $G$ and $H$ have the same $(1\pm\eps)$-cut sketch, then $H$ must be a $(1\pm3\eps)$-cut sparsifier of $G$.

\begin{lemma}
\label{lem:encode-H}
Let $\eps > 0$ be a sufficiently small universal constant.
For every $G \in \calG_{2n}$, the number of graphs in $\calG_{2n}$ that are $(1\pm3\eps)$-cut sparsifiers of $G$ is at most $2^{n^2/4}$.
\end{lemma}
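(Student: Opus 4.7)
The plan is to show that each row of $H$'s $R \to L$ adjacency matrix $A^H$ must lie within a small Hamming ball around the corresponding row of $A^G$, and then to count the total number of sparsifiers by a volume argument applied per row.

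First I would derive a partial row-sum constraint from a carefully chosen family of cuts. For each $i \in [n]$ and each subset $T \subseteq [n]$, consider the cut $S = \{r_i\} \cup \{l_k : k \notin T\}$. Edges from $S$ to $\bar S$ consist of (i) the matching edges $(l_k, r_k)$ for $k \notin T$ with $k \ne i$, contributing $n - |T| - [i \notin T]$ in total, and (ii) the $R \to L$ edges from $r_i$ to $\{l_k : k \in T\}$, contributing $\sum_{k \in T} A^G_{ik}$. Hence $w_G(S,\bar S) \le n$, and the same identity holds for $H$ with $A^H$ in place of $A^G$. Since the matching contribution is identical in $G$ and $H$ (both lie in $\calG_{2n}$ with the same fixed matching), the $(1 \pm 3\eps)$-sparsifier condition yields
\[
\Bigl|\sum_{k \in T} (A^G_{ik} - A^H_{ik})\Bigr| \;=\; |w_G(S,\bar S) - w_H(S,\bar S)| \;\le\; 3\eps \cdot w_G(S,\bar S) \;\le\; 3\eps n.
\]

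Next I would convert this into a Hamming-distance bound on row $i$. Applying the inequality with $T = \{k : A^G_{ik} > A^H_{ik}\}$ gives $|T| \le 3\eps n$, bounding the number of positions where row $i$ flips from $1$ to $0$; symmetrically, taking $T = \{k : A^G_{ik} < A^H_{ik}\}$ bounds the number of $0 \to 1$ flips by $3\eps n$. Therefore $A^H_{i,:}$ lies within Hamming distance $6\eps n$ of $A^G_{i,:}$.

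Finally I would count by volume. The number of $\{0,1\}^n$ vectors within Hamming distance $6\eps n$ of a fixed vector is at most $\sum_{d=0}^{\lfloor 6\eps n \rfloor} \binom{n}{d} \le 2^{n H(6\eps)}$, where $H(\cdot)$ denotes the binary entropy function. Because each of the $n$ rows of $A^H$ is independently constrained in this step, the number of sparsifiers is at most $2^{n^2 H(6\eps)}$. Choosing $\eps$ to be a sufficiently small universal constant so that $H(6\eps) \le 1/4$ yields the desired bound $2^{n^2/4}$.

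The crux of the argument is the cut calculation in the first step: by isolating a single row-vertex $r_i$ and ``masking'' a column subset via $T$, the matching edges contribute identically to $w_G(S,\bar S)$ and $w_H(S,\bar S)$, so the sparsifier inequality reduces cleanly to a bound on the partial row-sum of $A^H - A^G$ restricted to $T$. Once this reduction is in place, the passage to per-row Hamming distance and the entropy-based volume bound are routine.
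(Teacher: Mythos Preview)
Your proposal is correct and takes essentially the same approach as the paper: isolate one right-vertex $r_i$, pair it with a subset of $L$, and use the sparsifier inequality (together with the fact that the matching contribution is identical in $G$ and $H$) to bound the symmetric difference of $r_i$'s out-neighborhoods in $G$ and $H$ by $O(\eps n)$, then count via a Hamming-ball / binomial bound per row. The paper uses the single choice $S = \{v\} \cup N_G(v)$ and handles the reverse inclusion by swapping the roles of $G$ and $H$; your variant, which allows an arbitrary mask $T$ and then specializes to $T = \{k : A^G_{ik} > A^H_{ik}\}$ and $T = \{k : A^G_{ik} < A^H_{ik}\}$, is a slightly cleaner way to obtain both directions from the sparsifier inequality for $G$ alone, but the substance is the same.
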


We now prove Lemma~\ref{lem:allcuts-lb-simple} using Lemmas~\ref{lem:gnbeta-size}~and~\ref{lem:encode-H}. The proofs of Lemmas~\ref{lem:gnbeta-size}~and~\ref{lem:encode-H} are in Sections~\ref{sec:gnbeta-size} and~\ref{sec:encode-H}, respectively.

\begin{proof}[Proof of Lemma~\ref{lem:allcuts-lb-simple}]
We work with graphs with $2n$ vertices (rather than $n$ vertices) to make the presentation easier.
This is equivalent because we aim to prove a lower bound of $\Omega(\beta n)$.

Fix any $(1\pm\eps)$ for-all cut sketching algorithm.
Consider running this algorithm on all graphs in $\calG_{2n,\beta}$. Every graph in $\calG_{2n,\beta}$ is $\beta$-balanced, so the algorithm must map every $G \in \calG_{2n,\beta}$ to a bit string (i.e., cut sketch), and graphs that are not $(1\pm3\eps)$-cut sparsifiers of each other must be mapped to different strings.
By Lemma~\ref{lem:gnbeta-size}, there are at least $2^{n^2/2}$ graphs in $\calG_{2n,\beta}$, and by Lemma~\ref{lem:encode-H}, at most $2^{n^2/4}$ graphs can be mapped to the same bit string.
Therefore, the algorithm must output at least $\frac{2^{n^2/2}}{2^{n^2/4}} = 2^{n^2/4}$ distinct bit strings.
This implies that the algorithm must output at least $\frac{n^2}{4} = \frac{1}{32} \beta n$ bits in the worst case.
\end{proof}

\subsection{Proof of Lemma~\ref{lem:gnbeta-size}}
\label{sec:gnbeta-size}

In this section we prove Lemma~\ref{lem:gnbeta-size}. We first prove the following lemma (Lemma~\ref{lem:rand-balanced}),  which states that most of the graphs in $\calG_{2n}$ are balanced.
Lemma~\ref{lem:gnbeta-size} follows immediately from Lemma~\ref{lem:rand-balanced}, because $|\calG_{2n}| = 2^{n^2}$ and we have $|\calG_{2n,\beta}| \ge (1-\frac{1}{n})|\calG_{2n}| \ge 2^{n^2/2}$.

\begin{lemma}
\label{lem:rand-balanced}
Fix $n \ge n_0$ where $n_0$ is a sufficiently large universal constant.
Consider a graph $G$ drawn uniformly from $\calG_{2n}$.
With probability at least $1 - \frac{1}{n}$, $G$ is $\beta$-balanced for $\beta = 8 n$.
\end{lemma}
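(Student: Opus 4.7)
A graph $G$ sampled uniformly from $\calG_{2n}$ consists of the fixed matching from $L$ to $R$ plus $n^2$ independent Bernoulli$(1/2)$ indicators $Y_{r, \ell}$ for the edges from $R$ to $L$. For any nontrivial cut $S$ with $a = |S \cap L|$, $b = |S \cap R|$, and $k$ matching edges fully inside $S$, I would write
\[
f := w(S, \overline{S}) = (a - k) + X_1, \qquad g := w(\overline{S}, S) = (b - k) + X_2,
\]
where $X_1 \sim \mathrm{Bin}(b(n-a), 1/2)$ and $X_2 \sim \mathrm{Bin}(a(n-b), 1/2)$ are independent (they depend on disjoint sets of $Y_{r,\ell}$'s). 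My plan is to union-bound the failure event $\{f > 8n g\}$ over the fewer than $2^{2n}$ nontrivial cuts.

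I would partition the cuts into two regimes. In the \emph{deterministic regime} where $b \ge 1$ and $k \le 7b/8$, one has $g \ge b - k \ge b/8$ and $f \le a + b(n - a) \le bn$ (using $a \le n$ and $b \ge 1$), so $f \le bn = 8n(b/8) \le 8n g$ holds unconditionally. The \emph{probabilistic regime} consists of three sub-regimes: (i) $b = 0$, where $f = a \le n$ and $\Pr[X_2 = 0] = 2^{-an}$ per cut, with total failure probability bounded by $\sum_{a \ge 1}\binom{n}{a}\,2^{-an} = (1 + 2^{-n})^n - 1 = O(n/2^n)$; (ii) $b = n$, where $k = a$ forces $X_2 = 0$ but $g = n - a \ge 1$ and a standard Chernoff bound on $X_1 \sim \mathrm{Bin}(n(n-a), 1/2)$ immediately gives $f \le 8n g$; and (iii) the main sub-regime $1 \le b \le n - 1$ with $k > 7b/8$ (so $a \ge k > 7b/8$), where I would combine a multiplicative Chernoff bound $\Pr[X_2 \le E[X_2]/2] \le \exp(-E[X_2]/8)$ with the identity $E[X_2] = a(n-b)/2 \ge 7b(n-b)/16$ and the deterministic bound $f \le bn$ to conclude $f \le 8n g$ except with probability $\exp(-\Omega(a(n-b)))$.

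\textbf{Main obstacle.} The delicate step is the union-bound bookkeeping for sub-regime (iii): one must verify
\[
\sum_{(a, b, k)} \binom{n}{b}\binom{b}{k}\binom{n-b}{a-k}\exp\bigl(-\Omega(a(n-b))\bigr) = o(1/n),
\]
and the worst terms occur either at $b = n - 1$ (where $a(n-b) = a$ is small relative to the number of cuts) or at very small $b$ (where the binomial count is large). For $b = n - 1$ I would exploit the extra structural constraint $a - k \in \{0, 1\}$ (since $|L \setminus \mathrm{preimage}(S_R)| = 1$), reducing the inequality $f \le 8n g$ to an elementary algebraic check of the form $(n-1)(n-a) \le 8n(n-1-a)$ whenever $a \le n - 2$; this handles all but at most $n$ specific cuts (those with $a = k = b = n - 1$), and these have failure probability $\le 2^{-(n-1)}$ each, contributing $o(1/n)$ overall. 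For the small-$b$ boundary, the sum collapses to a geometric series of the form $(1 + 2^{-(n-b)})^n - 1 = O(n/2^{n-b})$, which is $o(1/n)$ once $n$ exceeds a sufficiently large universal constant $n_0$. Combining all cases yields the required $1 - 1/n$ probability bound.
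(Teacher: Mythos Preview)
Your approach is genuinely different from the paper's. The paper does \emph{not} union-bound over cuts directly; instead it isolates two structural ``good events'' that hold with probability $\ge 1 - 2/n^2$ each: (i) every vertex in $L$ (resp.\ $R$) has random in-degree (resp.\ out-degree) in $[3n/8,\,5n/8]$, and (ii) for every pair of large sets $A\subseteq L,\ B\subseteq R$ with $|A|,|B|\ge n/8$ one has $|E(B,A)|\ge |A||B|/4$. Conditioned on both events, a short deterministic case analysis shows \emph{every} cut has at least $n/8$ edges in each direction, hence balance $\le 8n$. This decoupling of probability from combinatorics sidesteps the union-bound bookkeeping entirely.

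Your direct union bound is viable in principle, but the sketch has a real gap in sub-regime (iii). You claim the problematic values of $b$ are only $b=n-1$ and ``very small $b$''. In fact, for every $b = n-c$ with $c$ a small constant (say $2\le c\le 6$), the Chernoff term $\exp(-a(n-b)/16)\le \exp(-7c(n-c)/128)$ decays like $2^{-0.079\,c\,(n-c)}$, while the number of cuts with $k>7b/8$ is governed by $\binom{n}{c}\sum_{i<(n-c)/8}\binom{n-c}{i}\binom{c}{a-k}$, whose dominant factor $\sum_{i<(n-c)/8}\binom{n-c}{i}\approx 2^{H(1/8)(n-c)}\approx 2^{0.544(n-c)}$ swamps the Chernoff decay whenever $0.079\,c < 0.544$, i.e.\ for $c\le 6$. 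So the union bound blows up for $b\in\{n-6,\dots,n-2\}$ just as badly as for $b=n-1$; your separate deterministic argument for $b=n-1$ (exploiting $a-k\le 1$) does not automatically cover these. The fix is to extend that ad-hoc argument to each of these finitely many values of $b$ (using $a-k\le c$ and $g\ge b-a$ when $a< b$, plus a handful of residual probabilistic cases when $a\ge b$), but this is several more pages of case analysis that your sketch does not anticipate. The paper's good-event formulation avoids precisely this boundary morass.
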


To prove Lemma~\ref{lem:rand-balanced}, we will establish a set of deterministic conditions (Lemmas~\ref{lem:rand-deg}~and~\ref{lem:rand-largesets}) and show that these conditions hold with high probability; together, they imply that $G$ is balanced.

Intuitively, these conditions correspond to two special types of cuts.
Lemma~\ref{lem:rand-deg} states that the in- and out-degrees of every vertex in $G$ behave as expected, which implies all singleton cuts are $\beta$-balanced.
Lemma~\ref{lem:rand-largesets} states that for large sets $A \subseteq L$ and $B \subseteq R$, the number of edges from $B$ to $A$ is as expected, which implies any cut $S$ is balanced if both $S \cap L$ and $S \cap R$ are not too large or too small.
As we will see in the proof of Lemma~\ref{lem:rand-balanced}, it turns out these conditions not only imply the balance of the above cuts, they are sufficient to imply the balance of all cuts in $G$.

Formally, the first lemma shows that with high probability, the in-degree of every vertex $u \in L$ and the out-degree of every vertex $v \in R$ are concentrated around their expectations.

\begin{lemma}
\label{lem:rand-deg}
Fix $n \ge n_0$ where $n_0$ is a sufficiently large universal constant.
For $G$ drawn uniformly from $\calG_{2n}$, with probability at least $1 - \frac{1}{n^2}$, we have
\begin{enumerate}[(i)]
\item $\frac{3}{8} n \le |E(R, u)| \le \frac{5}{8}n$ for every $u \in L$, and
\item $\frac{3}{8} n \le |E(v, L)| \le \frac{5}{8}n$ for every $v \in R$.
\end{enumerate}
\end{lemma}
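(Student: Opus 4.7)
The plan is to observe that a uniformly random $G \in \calG_{2n}$ is obtained by fixing the matching from $L$ to $R$ and then, independently for each ordered pair $(v,u) \in R \times L$, including the edge $(v,u)$ with probability $1/2$. This reduces the statement to two standard concentration inequalities on independent $\mathrm{Bernoulli}(1/2)$ random variables, followed by a union bound over the $2n$ vertices.

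Concretely, fix any $u \in L$. Then $|E(R,u)| = \sum_{v \in R} X_{v,u}$, where the $X_{v,u}$ are i.i.d.\ $\mathrm{Bernoulli}(1/2)$ variables, so $|E(R,u)| \sim \mathrm{Bin}(n, 1/2)$ with mean $n/2$. A standard Chernoff bound yields
\[
\Pr\Bigl[\bigl||E(R,u)| - n/2\bigr| > n/8\Bigr] \;\le\; 2 e^{-c n}
\]
for an absolute constant $c > 0$. An identical computation applies to $|E(v,L)|$ for every fixed $v \in R$, since the relevant edges are a disjoint collection of $n$ independent Bernoulli$(1/2)$ variables.

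I would finish by union-bounding over all $2n$ choices of vertex and picking $n_0$ large enough that $4n \cdot e^{-cn} \le 1/n^2$; this guarantees that both (i) and (ii) hold simultaneously with probability at least $1 - 1/n^2$. No obstacles are anticipated: the only subtlety is merely to verify that the fixed $L\to R$ matching does not affect the distribution of the $R\to L$ edges, which is immediate from the definition of $\calG_{2n}$, so the $L\to R$ matching contributes exactly one edge to each in-degree and out-degree count and is absorbed into the constants.
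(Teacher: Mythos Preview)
Your proposal is correct and follows essentially the same approach as the paper: recognize that $|E(R,u)|$ and $|E(v,L)|$ are each $\mathrm{Bin}(n,1/2)$, apply a Chernoff bound, and union bound over the $2n$ vertices. One minor remark: your final sentence about the $L\to R$ matching contributing ``exactly one edge to each in-degree and out-degree count'' is unnecessary and slightly off --- the matching edges go from $L$ to $R$ and hence do not appear in $|E(R,u)|$ or $|E(v,L)|$ at all, so there is nothing to absorb.
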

\begin{proof}
We prove part $(i)$; the proof of $(ii)$ follows similarly.
Fix any $u \in L$.
For each $v$, the event $(v, u) \in E$ happens with probability $\frac{1}{2}$ independently.
Therefore, the expectation of $|E(R, u)|$ is $\frac{n}{2}$, and by a Chernoff bound, $\Pr\left[\left| |E(R, u)| - \frac{n}{2} \right| > \frac{1}{8} n\right] \le 2 \exp\bigl(-\frac{(1/8)^2 n/2}{3}\bigr) = \exp(-\Omega(n))$.
Part $(i)$ follows from $n \ge n_0$ and taking the union bound over all $u \in L$.
\end{proof}

The next lemma shows that with high probability, the number of edges from $B$ to $A$ is at least half of its expectation for all large subsets $A \subseteq L$ and $B \subseteq R$.

\begin{lemma}
\label{lem:rand-largesets}
Fix $n \ge n_0$ where $n_0$ is a sufficiently large universal constant.
For $G$ drawn uniformly from $\calG_{2n}$, with probability at least $1 - \frac{1}{n^2}$, we have
\[
|E(B, A)| \ge \frac{1}{4} |B| |A|
\]
for every $A \subseteq L, B \subseteq R$ satisfying $|A|, |B| \ge \frac{n}{8}$.
\end{lemma}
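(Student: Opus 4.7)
The plan is to apply a multiplicative Chernoff bound for each fixed pair $(A,B)$ with $|A|,|B|\ge n/8$, and then union bound over all such pairs. The key observation is that under the uniform distribution on $\calG_{2n}$, the indicator variables $\{\mathbf{1}[(v,u)\in E] : u \in L, v \in R\}$ are independent $\mathrm{Bernoulli}(1/2)$ random variables, since the perfect matching from $L$ to $R$ is fixed and all $n^2$ slots of $E(R,L)$ are chosen independently and uniformly.

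Fix $A \subseteq L$ and $B \subseteq R$ with $|A|,|B|\ge n/8$. Then $|E(B,A)|$ is a sum of $|A|\cdot|B|$ i.i.d.\ $\mathrm{Bernoulli}(1/2)$ variables, so $\mu := \mathbb{E}[|E(B,A)|] = |A||B|/2 \ge n^2/128$. Applying the standard lower-tail Chernoff bound with deviation parameter $1/2$ gives
\[
\Pr\!\left[|E(B,A)| < \tfrac{1}{4}|A||B|\right] \;=\; \Pr\!\left[|E(B,A)| < \tfrac{1}{2}\mu\right] \;\le\; \exp(-\mu/8) \;\le\; \exp(-n^2/1024).
\]

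Next I would union bound over all candidate pairs $(A,B)$. The number of such pairs is at most $2^n \cdot 2^n = 2^{2n}$, so the probability that the desired inequality fails for some qualifying pair is at most $2^{2n}\exp(-n^2/1024)$. For $n \ge n_0$ with $n_0$ a sufficiently large absolute constant, the quadratic exponent $n^2/1024$ dominates $2n\ln 2 + 2\ln n$, so this bound is at most $1/n^2$, completing the proof.

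There is no substantive obstacle here: the argument is a routine Chernoff + union bound, and the only real check is that the lower bound $|A|,|B|\ge n/8$ makes the Chernoff exponent $\Theta(n^2)$ comfortably larger than the $O(n)$ loss from the union bound. The size threshold $n/8$ in the hypothesis is precisely what the union bound requires; weakening it to, say, $\Theta(\log n)$ would break the argument, which is consistent with the fact that Lemma~\ref{lem:rand-deg} is stated separately to handle the small-set regime (singleton cuts).
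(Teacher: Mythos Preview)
Your proposal is correct and follows essentially the same approach as the paper: a Chernoff bound for each fixed pair $(A,B)$ followed by a union bound over all pairs, exploiting that the $\Theta(n^2)$ Chernoff exponent dominates the $O(n)$ entropy of the union bound when $|A|,|B|\ge n/8$. The only cosmetic difference is that the paper groups pairs by their sizes $(a,b)$ and bounds the count by $n^{a+b}$ before summing, whereas you use the simpler global bound $2^{2n}$; both work.
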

\begin{proof}
Let $a = |A|$ and $b = |B|$.
Fix $\frac{n}{8} \le a, b \le n$.
For a specific pair of sets $(A, B)$ of size $(a, b)$, the expectation of $|E(B,A)|$ is $\frac{a b}{2}$.
By a standard application of the Chernoff bound, the probability that the condition in the lemma fails for this pair of $(A, B)$ is at most $\exp(-\Omega(a b))$.
On the other hand, the total number of such $(A, B)$ pairs is at most $n^{a + b}$.

Taking the union bound over all possible sets $(A, B)$, the probability that any $|E(B, A)|$ deviates too much is at most $\sum_{a,b}n^{a+b} \exp\left(-\Omega(ab)\right)$.
When $a, b \ge \frac{n}{8}$ and $n \ge n_0$, the failure probability is at most $n^2 \cdot n^{2n} \cdot \exp(-\Omega(n^2)) = \exp(-\Omega(n^2)) \le \frac{1}{n^2}$.
\end{proof}

Assuming the high probability events in Lemmas~\ref{lem:rand-deg}~and~\ref{lem:rand-largesets} happen, we are now ready to prove Lemma~\ref{lem:rand-balanced}.

\begin{proof}[Proof of Lemma~\ref{lem:rand-balanced}]
Recall that $\beta = 8n$ and $G = (L \cup R, E)$ is an unweighted bipartite graph with a perfect matching from $L$ to $R$.
For this proof, we assume the edges from $R$ to $L$ satisfy the conditions stated in Lemmas~\ref{lem:rand-deg}~and~\ref{lem:rand-largesets}, which happens with probability at least $1 - \frac{1}{n}$.

Fix any cut $S \subseteq V$.
We will show that $S$ is $\beta$-balanced.
That is, the total weight of edges leaving and entering $S$ are within a factor of $\beta$ of each other.
Suppose $S = A \cup B$ where $A \subseteq L$ and $B \subseteq R$, and let $\overline A = L \setminus A$ and $\overline B = R \setminus B$. Observe that the set of edges leaving $S$ is $E(S, \overline S) = E(B, \overline A) \cup E(A, \overline B)$, and the set of edges entering $S$ is $E(\overline S, S) = E(\overline A, B) \cup E(\overline B, A)$.

If $A = \varnothing$, then $S$ is $\beta$-balanced because $E(\overline S, S) = |B|$ (from the perfect matching) while $\frac{3}{8} n |B| \leq |E(S, \overline S)| \leq \frac{5}{8}n |B|$ due to Lemma~\ref{lem:rand-deg} (applied to each vertex in $B$). A similar argument holds for the case of $B = \varnothing$, so for the rest of the proof, we assume $A \neq \varnothing$ and $B \neq \varnothing$.

To prove that $S$ is $8n$-balanced, it is sufficient to show that $S$ has at least $\frac{1}{8}n$ outgoing edges and at least $\frac{1}{8}n$ incoming edges. This is the number of edges in either direction is at most $n^2$, so their ratio is at most $8n$.

First we assume $|A| \le \frac{1}{8} n$ and show that both $|E(S, \overline{S})|$ and $|E(\overline{S}, S)|$ are at least $\frac{1}{8}n$.
\begin{figure}[h]
\centering
\begin{subfigure}[t]{0.3\textwidth}
    \centering
        \includegraphics[width=0.9\textwidth]{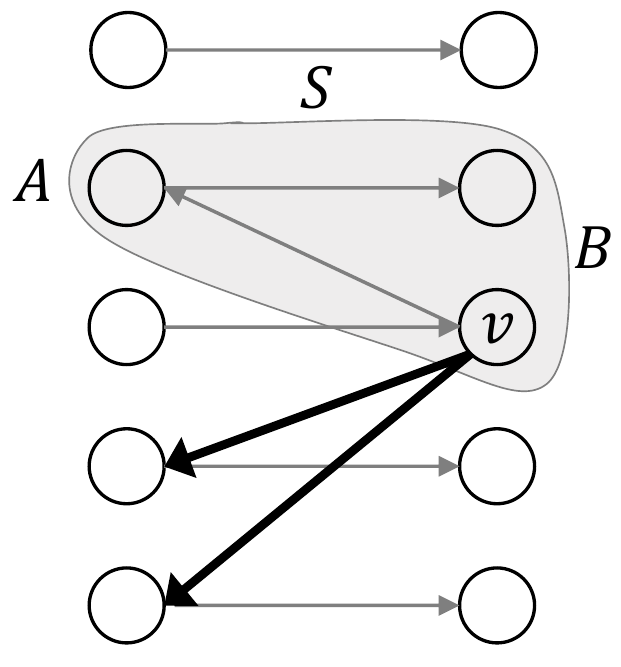}
        \caption{(Outgoing edges) Any $v \in B$ has at least $n/4$ edges leaving $S$ because outgoing degree of $v$ is large but $A$ is small.}
        \label{fig:outgoing}
\end{subfigure}
\hfill
\begin{subfigure}[t]{0.3\textwidth}
    \centering
        \includegraphics[width=0.9\textwidth]{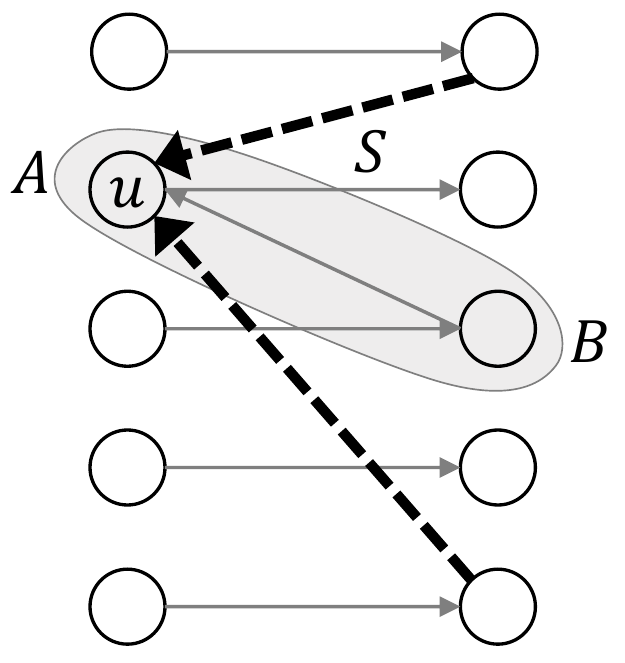}
        \caption{(Incoming edges, small $B$) Any $u\in A$ has at least $n/8$ edges entering $S$ because incoming degree of $u$ is large but $B$ is small.}
        \label{fig:incoming-small-B}
\end{subfigure}
\hfill
\begin{subfigure}[t]{0.3\textwidth}
    \centering
        \includegraphics[width=0.9\textwidth]{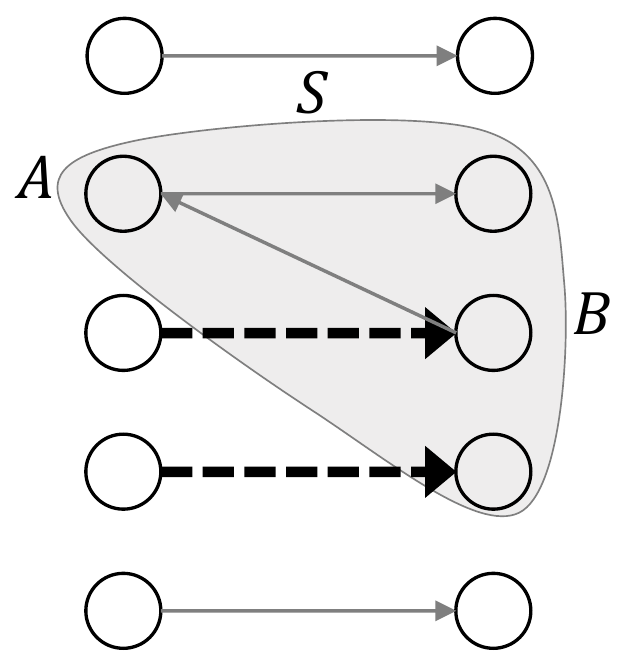}
        \caption{(Incoming edges, large $B$) There are at least $n/8$ matching edges entering $S$, because $B$ is much larger than $A$.}
        \label{fig:incoming-large-B}
\end{subfigure}
\caption{In these figures, we lower bound the number of outgoing/incoming edges of $S$ when $A$ is small. More specifically, we lower bound the number of outgoing/incoming edges by the number of bold colored edges.}
\end{figure}

\begin{itemize}
    \item (Outgoing edges) Consider any vertex $v \in B$. By Lemma~\ref{lem:rand-deg}, we have $|E(S, \overline S)| \ge |E(B, \overline A)| \ge |E(v, \overline A)| \ge |E(v, L)| - |A| \ge \frac{3}{8}n - \frac{1}{8}n = \frac{1}{4} n$ (see Fig.~\ref{fig:outgoing}).
    \item (Incoming edges, small $B$) When $|B| < \frac{1}{4}n$, we can pick any $u \in A$ and by Lemma~\ref{lem:rand-deg}, we have $|E(\overline S, S)| \ge |E(\overline B, A)| \ge |E(\overline B, u)| \ge |E(R, u)| - |B| > \frac{3}{8}n - \frac{1}{4}n = \frac{1}{8}n$ (see Fig.~\ref{fig:incoming-small-B}).
    \item (Incoming edges, large $B$) When $|B| \ge \frac{1}{4}n$, we have $E(\overline S, S) \ge E(\overline A, B) \ge |B| - |A| \ge \frac{1}{4}n - \frac{1}{8}n = \frac{1}{8}n$ (see Fig.~\ref{fig:incoming-large-B}).
\end{itemize}

The case analysis above shows that any cut $S$ with $|A| \le \frac{1}{8}n$ is $\beta$-balanced.
Consequently, any cut $S$ with $|A| \ge \frac{7}{8}n$ is also $\beta$-balanced, because any $S$ and $\overline S$ have the same balance factor.

Moreover, by symmetry, we can show that whenever $|B| \le \frac{1}{8}n$ or $|B| \ge \frac{7}{8}n$ the cut is $\beta$-balanced. Flipping the orientation of every edge does not affect the balance of any cut, but allows us to swap $A$ and $B$ in the above arguments.

Finally, we are left with the case that $\frac{1}{8} n \le |A|, |B| \le \frac{7}{8} n$.
In this case, Lemma~\ref{lem:rand-largesets} applies to both $E(B, \overline A)$ and $E(\overline B, A)$, so in either direction we have at least $\frac{1}{4} \cdot \frac{n}{8} \cdot \frac{n}{8} = \frac{1}{256} n^2 \ge \frac{1}{8}n$ edges, and therefore, the cut is $\beta$-balanced.
\end{proof}

\subsection{Proof of Lemma~\ref{lem:encode-H}} \label{sec:encode-H}
In this section, we prove Lemma~\ref{lem:encode-H}, which states that for any graph $G \in \calG_{2n,\beta}$, there are at most $2^{n^2/4}$ graphs in $\calG_{2n,\beta}$ that are $(1\pm3\eps)$-cut sparsifiers of $G$.

\begin{proof}[Proof of Lemma~\ref{lem:encode-H}]
Let $H$ be a $(1\pm3\eps)$-cut sparsifier of $G$.
We first show that $G$ and $H$ must share many edges in common.
Fix any vertex $v \in R$.
Let $N_G(v)$ and $N_H(v)$ denote the set of (out-)neighbors of $v$ in $G$ and $H$ respectively.
Let $S = S_v = \{v\} \cup N_G(v)$ and $T = T_v = \{v\} \cup N_H(v)$. We will prove that $|T \setminus S| \le 3\eps n$.

\begin{figure}[ht]
\centering
\begin{subfigure}[t]{0.42\textwidth}
    \centering
        \includegraphics[width=0.8\textwidth]{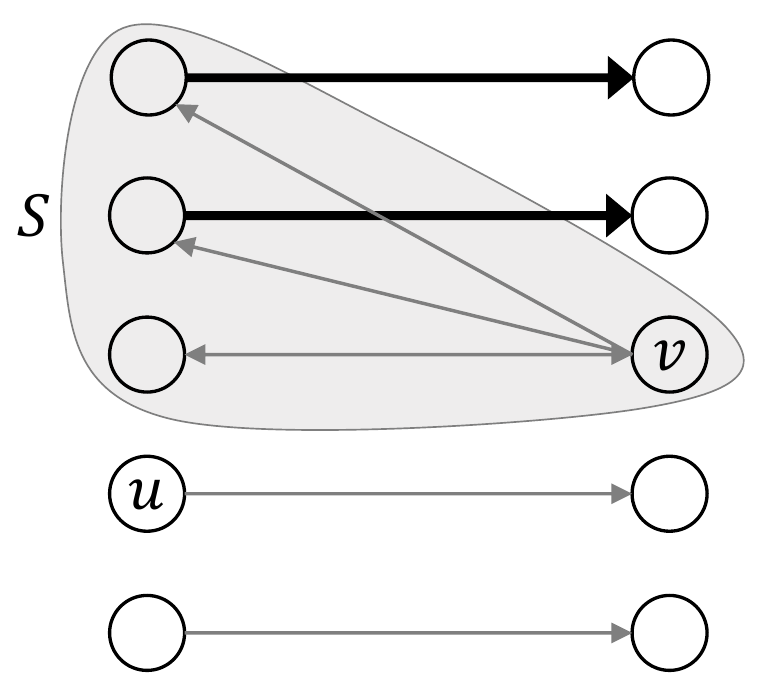}
        \caption{In graph $G$, the only edges leaving $S$ are the perfect matching edges.}
        \label{fig:encode-G}
\end{subfigure}
\hspace{0.06\textwidth}
\begin{subfigure}[t]{0.42\textwidth}
    \centering
        \includegraphics[width=0.8\textwidth]{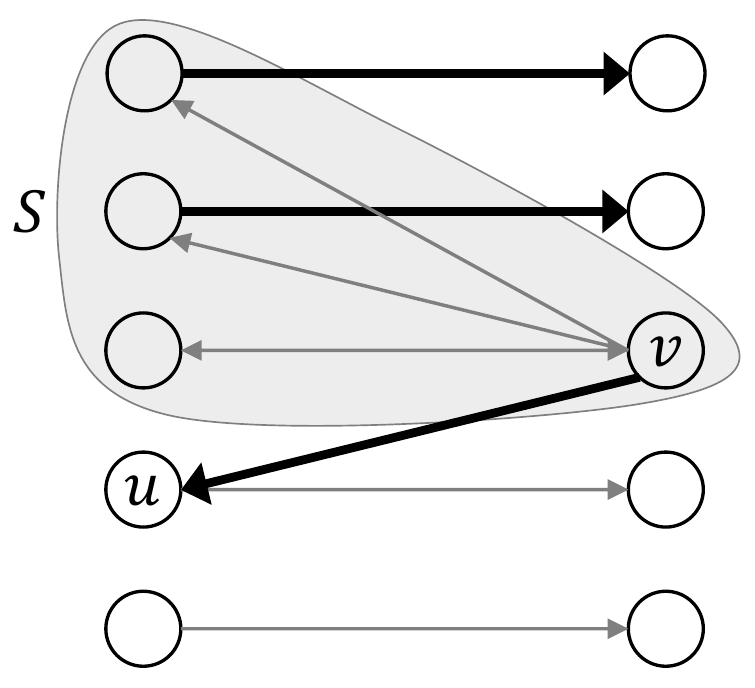}
        \caption{In graph $H$, for every vertex $u$ is in $T \setminus S$, $(v,u)$ contributes to $|E_H(S, \bar S)|$.}
        \label{fig:encode-H}
\end{subfigure}
\caption{The graph $G$ is given in Fig.~\ref{fig:encode-G} and $H$ is given in Fig.~\ref{fig:encode-H}.
$S$ is the union of $v$ and its neighbors in $G$ and $T$ is the union of $v$ and its neighbors in $H$.
For every $u \in T \setminus S$, the edge $(v, u)$ leaves $S$ and it exists in $H$ but not in $G$.
Since $H$ is a $(1 \pm 3\epsilon)$-cut sparsifier of $G$, $|T \setminus S| \leq 3 \epsilon n$.}
\label{fig:encode}
\end{figure}

Consider the number of outgoing edges from $S$ in both graphs.
In graph $G$, because $v$ has no edges leaving $S$, all outgoing edges from $S$ must be matching edges and hence $E_G(S, \overline S) \le n$.
In graph $H$, $E_H(S, \overline S)$ contains all edges in $E_G(S, \overline S)$, and in addition, one edge $(v, u)$ for each vertex $u \in T \setminus S$ (see Fig.~\ref{fig:encode}).
Because $H$ is a $(1\pm3\eps)$-cut sparsifier of $G$, we have 
\[
|T \setminus S| = E_H(S,\overline S) - E_G(S,\overline S) \le 3\eps E_G(S,\overline S) \le 3\eps n.
\]

We can similarly show that $|S \setminus T| \le 3\eps n$ by swapping $G$ and $H$ in this argument. So given a graph $G \in \calG_{2n}$, we can encode all graphs $H \in \calG_{2n}$ that are $(1\pm3\eps)$-cut sparsifier of $G$ as follows: for every $v \in R$, we encode $N_H(v)$ by writing down the vertices in $S_v \setminus T_v$ and the vertices in $T_v \setminus S_v$.
The number of possible choices for $S_v \setminus T_v$ is at most $\binom{n}{3\eps n} \le (\frac{e}{3\eps})^{3\eps n} \le 2^{n/8}$ when $\eps$ is a sufficiently small constant, and the same is true for $T_v \setminus S_v$.

There are $n$ vertices $v \in R$, and for each $v$, there are at most $2^{n/8}$ possible choices for either $S_v \setminus T_v$ and $T_v \setminus S_v$.
Therefore, we can upper bound the total number of possible $H$ by $\bigl((2^{n/8})^2\bigr)^n = 2^{n^2/4}$.
\end{proof}

\subsection{Proof of Theorem~\ref{thm:allcuts-lb}}

Now we prove our for-all cut sketch lower bound for general values of $\beta$ and $\eps$.

\begin{proof}[Proof of Theorem~\ref{thm:allcuts-lb}]
We will construct a set $\calG$ of $O(\beta)$-balanced graphs such that, for every graph $G \in \calG$, there are at most $\ell$ graphs in $\calG$ that can share a $(1 \pm c \cdot \eps)$ for-all cut sketch with $G$, where $c$ is a sufficiently small universal constant and $\ell$ satisfies $|\calG| / \ell = 2^{\Omega(n \beta / \eps)}$.

Let $\calG_{2k}$ denote the set of digraphs with $2k$ vertices defined as follows.
Every graph in $\calG_{2k}$ is a bipartite directed graph with bipartitions $L$, $R$ satisfying $|L| = |R| = k$.
There is a perfect matching from $L$ to $R$ with weight $\frac{1}{\eps}$ and every graph in $\calG_{2k}$ has the same perfect matching.
There are no other edges from $L$ to $R$.
The edges from $R$ to $L$ are arbitrary and they have unit weight.
Let $\calG_{2k,8\beta}$ be the subset of graphs in $\calG_{2k}$ that are $(8\beta)$-balanced.

We now describe the set $\calG$.
Let $k = \beta / \eps$.
Without loss of generality, we assume $k$ is an integer, $k \ge n_0$ where $n_0$ is the universal constant in Lemma~\ref{lem:gnbeta-size}, and $n$ is a multiple of $k$.
We partition the $n$ vertices into $t = n/k$ clusters of size $k$, denoted by $V_1, \ldots, V_t$.
For every $i \in \{1, \ldots, t-1\}$, we put a graph $G_i \in \calG_{2k,8\beta}$ between $V_i$ and $V_{i+1}$.
The set $\calG$ contains all possible graphs that can be constructed in this way.
Notice that we immediately have
\[
|\calG| = |\calG_{2k,8\beta}|^{t-1}.
\]

We first show that every graph $G \in \calG$ is $(8\beta)$-balanced.
Fix any $G \in \calG$.
Because each $G_i$ is a strongly connected graph between $V_i$ and $V_{i+1}$, the entire graph $G$ is strongly connected.
By the definition of $\beta$-balanced graphs, when we take the (edge) union of two $\beta$-balanced subgraphs, as long as the resulting graph is strongly connected, it is guaranteed to be $\beta$-balanced as well.

Next we show that for any graph $G \in \calG$, the number of graphs that can share a $(1 + c \cdot \eps)$ for-all cut sketch with $G$ is at most $(2^{k^2/4})^{t-1}$.
The argument is similar to that in the proof of Lemma~\ref{lem:encode-H}.
The main difference is that every matching edge now has weight $(1/\eps)$ and the cut queries are answered with precision $(1 \pm c \cdot \eps)$.

Let $H$ be a $(1 \pm c \cdot \eps)$-cut sparsifier of $G$.
We will show that given $G$, $H$ can be encoded using a small number of bits.
Let $(H_i)_{i=1}^{t-1}$ denote the corresponding bipartite subgraphs of $H$.
Fix any $i \in \{1, \ldots, t-1\}$ and $v \in V_{i+1}$.
Let $N_{G_i}(v) \subseteq V_i$ denote the set of out-neighbors of $v$ in $G_i$, and define $N_{H_i}(v)$ similarly.
Let
\[
S = \bigcup_{j=1}^{i-1} V_i \cup N_{G_i}(v) \cup \{v\}.
\]
Observe that in $G$, the edges leaving $S$ are
\begin{itemize}
    \item $(k - |N_{G_i}(v)|)$ matching edges from $V_{i-1}$ to $V_{i}$ (if $i > 1$).
    \item $|N_{G_i}(v)|$ or $|N_{G_i}(v)| - 1$ matching edges from $V_i$ to $V_{i+1}$.
    \item One edge from $v$ to $V_{i+2}$ (if $i < t-1$).
\end{itemize}

The total weight of these edges is at most $(k+1)\cdot\frac{1}{\eps} = O(k/\eps)$.
In $H$, the set of edges leaving $S$ includes the above edges, and in addition, one unit-weight edge $(v, u)$ for each vertex $u \in N_{H_i}(v) \setminus N_{G_i}(v)$.
Because $H$ is a $(1 \pm c \cdot \eps)$-sparsifier of $G$, the total weight of edges leaving $S$ in $G$ and $H$ can be off by at most $c \cdot \eps \cdot O(k/\eps) = O(ck)$.
In other words, $| N_{H_i}(v) \setminus N_{G_i}(v) | = O(k/\eps)$.
Consequently, the number of possible choices for the set $N_{H_i}(v) \setminus N_{G_i}(v)$ is at most $\binom{k}{O(c k)} \le 2^{k/8}$ when $c$ is sufficiently small (and the same bound holds for the choice of $N_{G_i}(v) \setminus N_{H_i}(v)$).
The choice for every $1 \le i < t$ and each $v \in V_{i+1}$ is independent, so the total number of possible $H$ is at most $\left((2^{k/4})^{k}\right)^{t-1}$.

The lower bound on the size of $\calG_{2k,8\beta}$ in Lemma~\ref{lem:gnbeta-size} continues to hold in this construction.
This is because there are in expectation $k^2 / 2 = \beta^2/ (2\eps^2)$ unit-weight edges in $G_i$ from $V_{i+1}$ to $V_i$, and $k$ matching edges from $V_i$ to $V_{i+1}$ whose total weight is $k \cdot (1/\eps) = \beta / \eps^2$.
Therefore, the balance of the cut $(V_i, V_{i+1})$ in $G_i$ is $\beta/2$ in expectation.
As in the proof of Lemma~\ref{lem:rand-balanced}, we can show that, if we include each edge from $V_{i+1}$ to $V_i$ independently with probability $1/2$, then with high probability (in $k$), the balance of $G_i$ is within a constant factor of its expectation (i.e., $G_i$ is $8 \beta$-balanced).
Consequently, there are at least $(1-\frac{1}{k}) \cdot 2^{k^2} \ge 2^{k^2/2}$ choices for each $G_i$, and the total number of graphs in $\calG$ is at least $(2^{k^2/2})^{t-1}$.

Putting everything together, a $(1\pm c\cdot \eps)$ for-all cut sketch has to output at least
\[
\log\left(\frac{(2^{k^2/2})^{t-1}}{(2^{k^2/4})^{t-1}}\right) =
 \Omega(k^2 t) = \Omega\left(\frac{\beta^2}{\eps^2} \cdot \frac{n}{\beta/\eps}\right) = \Omega(n \beta / \eps)
\]
bits for all graphs in $\calG$.
\end{proof}

\begin{remark}
{\em
Our analysis refutes a conjecture of Ikeda and Tanigawa~\cite{ikeda2018cut} that in a directed graph, importance sampling the edges with probability proportional to $1/\gamma_e$, where $\gamma_e$ is the {\em directed} edge connectivity of $e$, produces a directed cut sparsifier.

Consider an unweighted bipartite graph $G$ with bipartitions $|L|=|R|=n/2$.
There is a perfect matching from $L$ to $R$ and a complete graph from $R$ to $L$.
One can verify that the $n/2$ matching edges from $L$ to $R$ each have directed connectivity $1$, and the $n^2/4$ edges from $R$ to $L$ each have directed connectivity $\frac{n}{2}$ or $(\frac{n}{2} - 1)$.
Therefore, if we perform importance sampling with probability proportional to $\log^c(n)/\gamma_e$ for some constant $c$, the expected number of edges in the sampled graph is nearly-linear in $\sum_e 1/\gamma_e = \frac{n}{2} \cdot 1 + \frac{n^2}{4} \cdot \Theta(\frac{1}{n}) = O(n)$.

Let $H$ be the sampled graph.
Note that in $H$, every edge of the original perfect matching has weight 1 because those edges have directed connectivity 1 in $G$.
Take any vertex $v \in R$ such that the out-degree of $v$ in $H$ is $o(n)$.
Such a vertex is guaranteed to exist because $|E(H)| = o(n^2)$.
Let $S = \{v\} \cup N_H(v)$ and consider the total weight leaving $S$. Notice that $|w_G(S, \overline S) - \frac{n}{2}| \le 1$ and $\left| w_H(S, \overline S) - |S| \right| \le 1$.
This implies $w_G(S, \overline S) = \Omega(n)$ while $w_H(S, \overline S) = o(n)$, so $H$ cannot be a cut sparsifier of $G$.
}
\end{remark}


\section{For-Each Cut Sketch: $\tilde O(n\cdot \sqrt{\beta}/\epsilon)$ Upper Bound} \label{sec:for-each-ub}
\providecommand{\expect}[2]{\ensuremath{\ifthenelse{\equal{#1}{}}{\mathbb{E}}{\mathbb{E}_{#1}}\!\left[#2\right]}\xspace}

In this section, we give an upper bound on the size of cut sketches in the for-each setting. 

\begin{theorem} \label{thm:beta-12}
Let $G$ be an $n$-vertex $\beta$-balanced graph with edge weights in $[1, \poly(n)]$.
There exists a $(1\pm\eps)$ for-each cut sketch of size $O(n \beta^{1/2} \log^{3} n / \eps)$ bits that approximates the value $w(S, V \setminus S)$ of every directed cut $S \subseteq V$ with high probability.
\end{theorem}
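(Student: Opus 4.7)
The plan is to combine a classical sparse-cut decomposition of the underlying undirected graph $\bar G$ with a single global subsampling of the remaining ``dense'' edges, using the natural rescaled-count estimator. Unlike the recursive approach of \cite{andoni2016sketching,JambulapatiS18}, I expect the per-component variance to be uncontrollable (a component of $\bar G$ can have arbitrarily bad directed balance, and may not even be strongly connected); only the \emph{sum} across components is controlled, and exactly through the global $\beta$-balance of $G$. First I would reduce to the unweighted setting by splitting each weighted edge into $O(\log n)$ weight classes of at most polynomially many unit-weight parallel copies, absorbing a $\log n$ factor into the final sketch size. Then I would decompose $\bar G$ by recursively extracting low-undirected-conductance cuts until the residual is a disjoint union of components $C_1,\ldots,C_r$ on $V_1,\ldots,V_r$ of undirected conductance at least $\phi$ for a threshold $\phi = \tilde\Theta(\eps/\sqrt{\beta})$. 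The removed edges form a set $E_{\text{sparse}}$ of size $\tilde O(n\sqrt{\beta}/\eps)$ by the standard conductance-based charging argument, and are stored exactly in the sketch.

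Second, I would subsample each dense edge independently at rate $p = \tilde\Theta(\sqrt{\beta}/(\eps n))$ to produce $E_{\text{samp}}$, and record the component identities. The recovery function on a query $S$ is the natural unbiased estimator
\[
\hat w(S,\bar S) \;=\; w_{\text{sparse}}(S,\bar S) \;+\; \tfrac{1}{p}\cdot \bigl|\,E_{\text{samp}} \cap E(S,\bar S)\bigr|.
\]
Encoding each retained edge with $O(\log n)$ bits, the total size is $\tilde O(n\sqrt{\beta}/\eps)$ in the worst case $m = \Theta(n^2)$. The nontrivial task is to show that $|\hat w(S,\bar S) - w(S,\bar S)| \le \eps\cdot w(S,\bar S)$ with high constant probability for any fixed $S$.

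The heart of the argument is a global variance bound. The variance decomposes as $\Var{\hat w(S,\bar S)} = \sum_i \Var{\hat w_i}$, where $\hat w_i$ is the contribution from sampled edges inside $C_i$, and per-edge variance is bounded by $1/p$, giving $\Var{\hat w_i} \le \dun{C_i}{S \cap V_i}/p$. The crucial step is the telescoping
\[
\sum_i \dun{C_i}{S \cap V_i} \;\le\; \dun{G}{S} \;=\; w(S,\bar S) + w(\bar S,S) \;\le\; (\beta+1)\,w(S,\bar S),
\]
where the first inequality holds because every undirected edge counted on the left is an edge of $G$ lying inside some $C_i$ and crossing $S$, and the last is the definition of $\beta$-balance applied to the whole graph $G$. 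This gives $\Var{\hat w(S,\bar S)} \le (\beta+1)\,w(S,\bar S)/p$, and a Bernstein bound yields a $(1\pm\eps)$ estimate whenever $w(S,\bar S) = \tilde\Omega(\sqrt{\beta}n/\eps)$. For smaller cuts, the same balance inequality forces $\sum_i \dun{C_i}{S \cap V_i}$ to be small, so by the component conductance lower bound the smaller side of each within-component cut has small undirected volume; a direct component-wise Chernoff then gives additive error $\eps\cdot w(S,\bar S)$.

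The main obstacle is the global variance bound and, equally, the structural realization that per-component analysis is hopeless: an individual $C_i$ can have infinite balance, and neither its conductance nor any local balance can bound $\Var{\hat w_i}$ by itself. The telescoping identity above is the crux that converts the global $\beta$-balance of $G$ into a bound on the estimator's variance without ever touching individual components. Once this is in place, the running time $\tilde O(m + \sqrt{\beta}n/\eps)$ follows from using a near-linear-time low-conductance cut subroutine for the decomposition and performing the subsampling directly on the edge list, and the union of weight-class reductions introduces only the stated $\operatorname{polylog} n$ overhead.
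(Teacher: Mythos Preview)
Your high-level intuition is right---the variance must be controlled globally via the $\beta$-balance of $G$ rather than per component---but the concrete sampling mechanism you propose does not work. With i.i.d.\ edge sampling at a fixed rate $p$, the variance of the dense-part estimator is $\approx w_{\mathrm{dense}}(S,\bar S)/p$, which is \emph{linear} in the cut value; Chebyshev therefore yields $(1\pm\eps)$ only when $w(S,\bar S)\gtrsim 1/(p\eps^2)=\tilde\Theta(n/(\sqrt\beta\,\eps))$. Your fallback for smaller cuts cannot be rescued by any ``component-wise Chernoff'': take a vertex $v$ with a single outgoing edge and one (or many) incoming edges. The singleton $\{v\}$ has undirected conductance $1$, so it is never extracted by your conductance decomposition and sits inside some dense $C_i$. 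For a query $S$ with $S\cap V_i=\{v\}$ you must estimate a contribution of exactly $1$, yet your estimator returns $0$ with probability $1-p$ and $1/p\gg 1$ with probability $p$; no concentration bound fixes this. Note also that your telescoping inequality $\sum_i \dun{C_i}{S\cap V_i}\le (\beta{+}1)\,w(S,\bar S)$ never uses the high-conductance property of the $C_i$, which is a signal that the decomposition is not doing the work you need.

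The paper's construction differs in two essential ways. First, it samples \emph{per vertex}: at each $u$ it draws $\alpha=\sqrt\beta/\eps$ outgoing (and separately $\alpha$ incoming) edges with replacement and stores the exact in/out degrees, so a vertex with few outgoing edges has them all effectively retained. Second, the recovery always works on the \emph{smaller side} $S_{ij}$ of each dense component, and the per-vertex variance is bounded as $\tfrac{1}{\alpha}\,|E(u,S_{ij})|\cdot|E(u,\overline{S_{ij}})|\le\tfrac{1}{\alpha}\,|S_{ij}|\cdot|E(u,\overline{S_{ij}})|$ (using that each weight-bucketed graph is simple). Summing over $u\in S_{ij}$ gives $\tfrac{1}{\alpha}\,|S_{ij}|\,X_{ij}$, and the absence of $\lambda$-sparse cuts (vertex-count sparsity, \emph{not} conductance) yields $|S_{ij}|\le(X_{ij}+\overline{X_{ij}})/\lambda$, hence per-component variance at most $\tfrac{1}{\alpha\lambda}\,(X_{ij}+\overline{X_{ij}})\,X_{ij}$. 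It is precisely this \emph{product} form that, after summing over components and weight classes and applying $\beta$-balance once at the end, collapses to $O(\tfrac{\beta}{\alpha\lambda})\,w(S,\bar S)^2$---variance quadratic in the cut value for \emph{every} cut, with no threshold. Uniform edge sampling cannot produce this shape.
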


We will prove a lower bound of $\Omega(n \cdot (\beta / \eps)^{1/2})$ bits in Section~\ref{sec:for-each-lb}.

\medskip
{\noindent \bf Overview of Our Approach.}
Our approach is inspired by the cut sketching algorithm for undirected graphs by Andoni et al.~\cite{andoni2016sketching}.
We first partition the edges into $\ell = O(\log n)$ disjoint sets $(E_i)_{i=1}^\ell$ based on their weights.
Edges in $G_i = (V, E_i, w)$ have roughly the same weight and we can essentially treat $G_i$ as an unweighted graph.
For each graph $G_i$, we ignore edge directions, and iteratively remove and store edges belonging to sparse cuts.

Note that $G_i$ may not be balanced.
Even when $G_i$ is balanced, the dense components of $G_i$ may not be balanced.
Despite this, we show that we can estimate the dense components' contribution to the cut value via random sampling.
This is because we can bound the variance within each component, and in the end, upper bound their sum (i.e., the overall variance) using the $\beta$-balance condition.

One of our main technical contributions is to derive a tighter upper bound on the variance of random sampling.
If we trace our analysis back to the undirected case, we remove some redundant terms in the analysis of~\cite{andoni2016sketching}.
This tighter variance bound is critical, because we cannot obtain the right space dependence on $\beta$ without it (see Appendix~\ref{sec:comp-andoni}).
In addition, our new analysis can be traced back to the undirected case, which will simplify the algorithm of~\cite{andoni2016sketching}.
We can obtain a for-each sketching algorithms for undirected graphs without downsampling or low-accuracy for-all sparsifiers, and the output is a graph.



\subsection{Sketching and Recovery Algorithms} \label{sec:beta-12}
Let $G = (V, E, w)$ be an $n$-node $\beta$-balanced directed graph with edge weights $w_e \in [1, \poly(n)]$.
It is worth noting that, when constructing the cut sketch, we do not know the cut query $S\subseteq V$.
The cut query $S$ is only given as input to the recovery algorithm.



\medskip
{\noindent \bf The Sketching Algorithm.}
We describe our overall cut sketching algorithm (Algorithm~\ref{alg:store}).
We partition the edges into $O(\log n)$ weight classes.
For each weight class, we iteratively store and remove all edges that belong to some $\lambda$-sparse cut (defined in Equation~\ref{eqn:lambda-sparse}).
When there are no $\lambda$-sparse cuts remain, we sample $\alpha$ incoming and outgoing edges at each vertex among the remaining edges. The values of $\lambda$ and $\alpha$ will be specified later in our analysis.

For a directed graph $G = (V, E, w)$, we say a cut $(S, \overline S)$ is \emph{$\lambda$-sparse} if the following holds:
\begin{align}
\label{eqn:lambda-sparse}
|E(S,\overline S)| + |E(\overline S,S)| & \le \lambda \cdot \min\bigl(| S |, | \overline S| \bigr).
\end{align}


\begin{algorithm}[!ht]
\caption{Compute a $(1 \pm O(\epsilon))$ for-each cut sketch}
\label{alg:store}
\SetKwInOut{Input}{Input}
\SetKwInOut{Output}{Output}
\Input{An $n$-vertex $\beta$-balanced graph $G = (V, E, w)$ with edge weights $w_e \in [1, \poly(n)]$, and $0 < \eps < 1$.}
\Output{A $(1\pm O(\eps))$ for-each cut sketch $\sk(G)$ of size $\tilde O(n \beta^{1/2} / \eps)$.}
Set $\alpha = \lambda = \beta^{1/2}/\eps$. \\
Partition the edges into $\ell = O(\log n)$ weight classes $E_1, \ldots, E_\ell$ where $E_i = \{e: w_e \in [2^{i-1}, 2^i)\}$.

Each weight class $E_i$ defines a (possibly unbalanced) graph $G_i = (V, E_i, w)$. \\
\For{$i = 1$ {\bf to} $\ell$}{
  \While{there exists a $\lambda$-sparse cut (defined in Equation~\eqref{eqn:lambda-sparse}) in $G_i$}{
    Remove all edges (in both directions) in this cut and store them in $\sk(G_i)$.
  }
  In $\sk(G_i)$, store the (dense) components $\{V_{ij}\}_j$ of $G_i$. \\
  For every $V_{ij}$ and every $u \in V_{ij}$, store the number of (remaining) incoming and outgoing edges at $u$ in $G_i$, i.e., $d^{\mathrm{in}}_{ij}(u) = |E_i(V_{ij}, u)|$ and $d^{\mathrm{out}}_{ij}(u) = |E_i(u, V_{ij})|$. \\
  At each vertex $u \in V$, sample with replacement $\alpha$ edges from the (remaining) outgoing edges $(u, v)$ and store them in $\sk(G_i)$. Do the same for incoming edges. \label{alg:step8}
}
\Return{$\sk(G) = \bigcup_i \sk(G_i)$.}
\end{algorithm}

\medskip
{\noindent \bf The Recovery Algorithm.} Algorithm~\ref{alg:recovery-one-sketch} is our recovery algorithm that queries the cut sketch $\sk(G)$ (i.e., the output of Algorithm~\ref{alg:store}).
We first establish some notation.
Recall that Algorithm~\ref{alg:store} decomposes $G$ into $(G_i)_{i=1}^\ell$ according to the edge weights. Let $V_{i} = (V_{ij})_j$ denote the set of dense components in $G_i$ after we iteratively remove the sparse cuts in $G_i$.


\begin{algorithm}[!ht]
\caption{Query the cut value $w(S, \overline S)$ from $\sk(G)$.}
\label{alg:recovery-one-sketch}
\SetKwInOut{Input}{Input}
\SetKwInOut{Output}{Output}
\setcounter{AlgoLine}{0}
\Input{A cut query $S \subseteq V$ and a cut sketch $\sk(G)$ (output of Algorithm~\ref{alg:store}).}
\For{each $\sk(G_i)$ in $\sk(G)$}{
    \For{each dense component $V_{ij}$ in $G_i$}{
        Let $S_{ij}$ denote the smaller set of $(V_{ij} \cap S)$ and $(V_{ij} \cap \overline S)$. \\
        \uIf{$S_{ij} = V_{ij} \cap S$}{
            Estimate the total weight of edges leaving $S_{ij}$:
            For every $u \in S_{ij}$, set
            \begin{equation} \label{eq:out-est-at-u}
            I_{S_{ij}}(u) = \frac{d^{\mathrm{out}}_{ij}(u)}{\alpha} \sum_{q=1}^\alpha \chi_{ij}(u,q)w_{ij}(u,q)
            \end{equation}
            where $d^{\mathrm{out}}_{ij}(u)$ is the out-degree of $u$ in $V_{ij}$, $\chi_{ij}(u,q)=1$ if the $q$-th sampled outgoing edge at $u$ crosses $S$ and $\chi_{ij}(u,q) = 0$ otherwise, and $w_{ij}(u,q)$ is the weight of the $q$-th sampled edge.
        }
        \Else{
            Estimate the total weight of edges entering $S_{ij} = V_{ij} \cap \overline S$ instead: \\
            For every $u \in S_{ij}$, set $I_{S_{ij}}(u)$ as in \eqref{eq:out-est-at-u}, using $d^{\mathrm{in}}_{ij}(u)$ instead of $d^{\mathrm{out}}_{ij}(u)$, and $\chi_{ij}(u,q)$ indicates if the $q$-th sampled incoming edge at $u$ crosses $S$.
        }
        The estimated contribution from $V_{ij}$ is $I_{V_{ij}} = \sum_{u \in S_{ij}} I_{S_{ij}}(u)$.
    }
    The estimated contribution from $G_i$ is $I_{G_i} = \sum_{V_{ij} \in G_i}I_{V_{ij}}$.
}
Compute $I_S = \sum_i I_{G_i}$, the estimate of the cut value from all dense-component edges. \\
Compute $J_S$, the total weight of $\lambda$-sparse cut edges that leaves $S$ in all $G_i$'s. \\
\Return{$I_S + J_S$.}
\end{algorithm}

Algorithm~\ref{alg:recovery-one-sketch} approximates $w(S,\overline S)$ by adding the total contribution of the sparse-cut edges and the dense-component edges. Let $J_S$ denote the total weight of sparse-cut edges that go from $S$ to $\overline S$ in all of the graphs $G_i$ (which we store deterministically). Let $I_S$ be the estimator for the total weight of dense-component edges leaving $S$ in all $G_i$ as defined in Algorithm~\ref{alg:recovery-one-sketch}.
Algorithm~\ref{alg:recovery-one-sketch} returns $I_S + J_S$ as the final answer.


\medskip
{\noindent \bf Correctness and Size Guarantees.}
We state the correctness of our recovery algorithm (Algorithm~\ref{alg:recovery-one-sketch}) in Lemma~\ref{lem:i-s-good} and the output size of our sketching algorithm (Algorithm~\ref{alg:store}) in Lemma~\ref{lem:sketch-size}.
Theorem~\ref{thm:beta-12} follows immediately from Lemmas~\ref{lem:i-s-good} and~\ref{lem:sketch-size}; we prove the latter before proving the former.

\begin{lemma}[Correctness of Algorithm~\ref{alg:recovery-one-sketch}] \label{lem:i-s-good}
Let $\sk(G)$ be the output of Algorithm~\ref{alg:store}.
Fix a cut query $S \subseteq V$. With probability at least $2/3$, the value $(I_S + J_S)$ returned by Algorithm~\ref{alg:recovery-one-sketch} on input $(S, \sk(G))$ satisfies
$
\abs{(I_S + J_S) - w(S, \overline S)} \le O(\eps) \cdot w(S, \overline S).
$
\end{lemma}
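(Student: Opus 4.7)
The plan is to prove the lemma by showing that $I_S + J_S$ is an unbiased estimator of $w(S, \overline{S})$ whose variance is at most $O(\eps^2) \cdot w(S, \overline{S})^2$, and then applying Chebyshev's inequality. All randomness is confined to $I_S$, since $J_S$ is stored deterministically in the sketch and equals exactly the contribution of the sparse-cut edges to $w(S, \overline{S})$. For unbiasedness: for each dense component $V_{ij}$ and each $u \in S_{ij}$, the estimator $I_{S_{ij}}(u)$ is a scaled average of $\alpha$ i.i.d.\ uniform samples from the outgoing (or incoming) edges at $u$ within $V_{ij}$, so by linearity $\Ex{I_{S_{ij}}(u)}$ equals $w_{V_{ij}}(u, V_{ij} \cap \overline{S})$ or $w_{V_{ij}}(V_{ij} \cap S, u)$ depending on which branch is taken. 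Summing over $u \in S_{ij}$ gives $\Ex{I_{V_{ij}}} = w_{V_{ij}}(V_{ij} \cap S, V_{ij} \cap \overline{S})$ in either case, and hence $\Ex{I_S + J_S} = w(S, \overline{S})$.

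For the variance, I would exploit that the samples across distinct $(i,j,u)$ triples are mutually independent, so $\Var{I_S} = \sum_{i,j} \sum_{u \in S_{ij}} \Var{I_{S_{ij}}(u)}$. A direct second-moment calculation, using $w_e \in [2^{i-1}, 2^i)$ for every $e \in E_i$, yields a per-vertex bound roughly of the form $\Var{I_{S_{ij}}(u)} = O(2^i \cdot d_{ij}(u) \cdot W_u / \alpha)$, where $d_{ij}(u)$ is the relevant in- or out-degree of $u$ in $V_{ij}$ and $W_u$ is the weight of edges from $u$ to the opposite side within $V_{ij}$. The whole point of making $S_{ij}$ the smaller side of the cut in $V_{ij}$ and choosing the direction of sampling to match it is to make the accumulated degree controllable via undirected cut quantities.

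The main obstacle, and the heart of the proof, is the aggregate variance bound. A component-by-component analysis is hopeless because individual $V_{ij}$'s can be arbitrarily unbalanced; the global argument rests on two structural facts. First, since Algorithm~\ref{alg:store} iteratively removes every $\lambda$-sparse cut, every surviving component $V_{ij}$ satisfies $|E^{\mathrm{un}}(T, V_{ij} \setminus T)| > \lambda |T|$ for every $T \subseteq V_{ij}$ with $|T| \le |V_{ij}|/2$; applied with $T = S_{ij}$, this lets us pay a factor of $1/\lambda$ in exchange for replacing $\sum_u d_{ij}(u)$ by the undirected cut weight $\dun{G_i}{S}$ restricted to $V_{ij}$. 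Second, $\beta$-balance of $G$ yields $\dun{G}{S} \le (\beta + 1) \cdot w(S, \overline{S})$, so the aggregate undirected cut weight summed across components and weight classes is $O(\beta) \cdot w(S, \overline{S})$. Plugging in $\lambda = \alpha = \sqrt{\beta}/\eps$ gives $\Var{I_S} \le O(\eps^2) \cdot w(S, \overline{S})^2$, and Chebyshev's inequality applied with deviation $c \eps \cdot w(S, \overline{S})$ then yields $|(I_S + J_S) - w(S, \overline{S})| \le O(\eps) \cdot w(S, \overline{S})$ with probability at least $2/3$.
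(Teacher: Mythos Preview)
Your high-level plan---unbiasedness, a global variance bound via $\lambda$-denseness and $\beta$-balance, then Chebyshev---is exactly the paper's route. But the per-vertex variance bound you write down is too loose, and your claimed use of $\lambda$-denseness does not follow from it.

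The bound $\Var{I_{S_{ij}}(u)} = O(2^i\, d_{ij}(u)\, W_u/\alpha)$ comes from $\Var{\chi w}\le \Ex{(\chi w)^2}$; it drops the $(\Ex{\chi w})^2$ term. Summing over $u\in S_{ij}$ gives $\tfrac{2^{2i}}{\alpha}\sum_u d_{ij}(u)\,|E_i(u,\overline{S_{ij}})|$, and since $d_{ij}(u)=|E_i(u,S_{ij})|+|E_i(u,\overline{S_{ij}})|$ this contains a term $\tfrac{2^{2i}}{\alpha}\sum_u |E_i(u,\overline{S_{ij}})|^2$ that is only bounded by $\tfrac{2^{2i}}{\alpha}X_{ij}^2$. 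Aggregated over $j$ and $i$ this contributes $O(w(S,\overline S)^2/\alpha)=O(\eps/\sqrt\beta)\,w(S,\overline S)^2$, which is \emph{not} $O(\eps^2)w(S,\overline S)^2$ when $\beta=o(1/\eps^2)$. Moreover, the $\lambda$-denseness condition bounds $|S_{ij}|$, not $\sum_{u\in S_{ij}} d_{ij}(u)$; the latter is the out-volume of $S_{ij}$ and can be $\Theta(|S_{ij}|\cdot|V_{ij}|)$, which the cut does not control. So the step ``replace $\sum_u d_{ij}(u)$ by the undirected cut weight, paying $1/\lambda$'' is not justified.

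The missing idea---and the paper flags this as the main technical point---is to keep the full Bernoulli variance $\Var{\chi_{ij}(u,q)}=\frac{|E_i(u,\overline{S_{ij}})|}{d_{ij}(u)}\cdot\frac{|E_i(u,S_{ij})|}{d_{ij}(u)}$, which after scaling gives
\[
\Var{I_{S_{ij}}(u)}\ \le\ \frac{2^{2i}}{\alpha}\,|E_i(u,\overline{S_{ij}})|\cdot|E_i(u,S_{ij})|\ \le\ \frac{2^{2i}}{\alpha}\,|E_i(u,\overline{S_{ij}})|\cdot|S_{ij}|,
\]
the last step using that the graph is simple so $|E_i(u,S_{ij})|\le |S_{ij}|$. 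Now summing over $u$ yields $\tfrac{2^{2i}}{\alpha}|S_{ij}|X_{ij}$, and \emph{this} is where $\lambda$-denseness applies: $|S_{ij}|\le (X_{ij}+\overline{X_{ij}})/\lambda$, giving $\Var{I_{V_{ij}}}\le \tfrac{2^{2i}}{\alpha\lambda}(X_{ij}+\overline{X_{ij}})X_{ij}$. Summed over $j$ and $i$ and combined with $\beta$-balance, one obtains $\Var{I_S}\le O(\beta/(\alpha\lambda))\,w(S,\overline S)^2=O(\eps^2)\,w(S,\overline S)^2$, after which your Chebyshev step goes through.
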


\begin{lemma}[Output Size of Algorithm~\ref{alg:store}]
\label{lem:sketch-size}
The output $\sk(G)$ of Algorithm~\ref{alg:store} has size $\tilde O(n (\lambda + \alpha)) = \tilde O(n \beta^{1/2} / \eps)$.
\end{lemma}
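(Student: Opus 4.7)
The plan is to decompose the total size of $\sk(G)$ into contributions from the $\ell = O(\log n)$ weight classes, and within each class $G_i$ further decompose into three parts: (a) edges of the iteratively removed $\lambda$-sparse cuts, (b) the laminar dense-component structure together with the two degree counters $d^{\mathrm{in}}_{ij}(u), d^{\mathrm{out}}_{ij}(u)$ at each vertex, and (c) the $2\alpha$ sampled incident edges at each vertex. Each vertex label, each degree, and each edge (including its weight) can be stored in $O(\log n)$ bits, since edge weights are polynomially bounded.

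The main obstacle is bounding (a). A single $\lambda$-sparse cut $(S,\overline S)$ contains at most $\lambda \cdot \min(|S|, |\overline S|)$ edges by \eqref{eqn:lambda-sparse}, but the algorithm may discover many sparse cuts as it repeatedly peels edges off of $G_i$, so we cannot bound the total by $\lambda n$ directly. The standard charging argument handles this: when a $\lambda$-sparse cut is removed within some current component $V'$, charge the at most $\lambda \cdot \min(|S \cap V'|, |\overline S \cap V'|)$ removed edges to the vertices on the smaller side, paying $\lambda$ per such vertex. After removal, the smaller side lives in a new component of size at most $|V'|/2$, so any particular vertex can appear on the smaller side of such a peel at most $\log_2 n$ times. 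Thus the total number of edges removed in $\lambda$-sparse cuts in $G_i$ is $O(n \lambda \log n)$, and (a) contributes $\tilde O(n \lambda)$ bits per weight class.

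For (b), the dense-component labels plus the two degree counters per vertex use $\tilde O(n)$ bits per weight class. For (c), we sample exactly $2\alpha$ edges per vertex with replacement, each storable in $O(\log n)$ bits, contributing $\tilde O(n \alpha)$ bits per weight class. Summing (a)–(c) and then over all $\ell = O(\log n)$ weight classes yields $\tilde O(n(\lambda + \alpha))$ bits total. Plugging in the algorithm's choice $\lambda = \alpha = \sqrt{\beta}/\eps$ gives the claimed bound of $\tilde O(n \sqrt{\beta}/\eps)$, with the extra polylogarithmic factors absorbed into the $\tilde O$ notation.
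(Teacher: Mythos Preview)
Your proof is correct and follows essentially the same approach as the paper: the same three-part decomposition per weight class, and the same charging argument for the $\lambda$-sparse edges (charge $\lambda$ to each vertex on the smaller side, observe each vertex is on the smaller side at most $O(\log n)$ times). The only cosmetic difference is that the paper adds a one-line ``$\eps = \Omega(1/n)$ WLOG'' remark and tracks the explicit $\log^3 n$ factor rather than hiding it in $\tilde O$.
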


\begin{proof}
Without loss of generality, we assume $\eps = \Omega(1/n)$, otherwise we can store all edges exactly using $\tilde O(n/\eps)$ bits.
Algorithm~\ref{alg:store} produces $\ell = O(\log n)$ weight classes; each weight class defines a graph $G_i$.
In every $G_i$:
\begin{itemize}
    \item First we iteratively store and remove edges in $\lambda$-sparse cuts.
    We can upper bound the total number of edge removed using the following charging argument:
    When a $\lambda$-sparse cut is removed, we charge the cut size evenly to the vertices on the smaller side of the cut.
    Since the cut is $\lambda$-sparse, every vertex on the smaller side gets charged at most $\lambda$ edges.
    Each vertex can be charged at most $O(\log n)$ times because it can be in the smaller side $O(\log n)$ times.
    Therefore, $\sk(G)$ stores at most $O(\lambda n \log n)$ sparse edges, which takes $O(\lambda n \log^2 n)$ bits.
    \item On the remaining graph, the connected components are disjoint, so we can also store the partition of vertices into these dense components in $O(n \log n)$ bits.
    \item We can store the (remaining) in- and out-degree of every vertex in $O(n \log n)$ bits.
    \item We sample $O(\alpha)$ edges at each vertex in $V$, which requires $O(\alpha n \log n)$ bits.
\end{itemize}
Thus, for every $G_i$ we store $O(\lambda n \log^2 n + n \log n + \alpha n \log n) = O(n (\lambda + \alpha) \log^2 n)$ bits. Since $\alpha = \lambda = \beta^{1/2}/\eps$, the size of $\sk(G_i)$ is $O(n \beta^{1/2} \log^2 n / \eps)$.
The size of $\sk(G) = \bigcup_{i} \sk(G_i)$ is
\[
O(\log n) \cdot O(n \beta^{1/2} \log^2 n / \eps) = O(n \beta^{1/2} \log^3 n / \eps). \qedhere
\]
\end{proof}

Now we prove Lemma~\ref{lem:i-s-good} (the correctness of Algorithm~\ref{alg:recovery-one-sketch}). Note that it follows immediately from the following lemma and Chebyshev's inequality.

\begin{lemma} \label{lem:var-s}
The estimator returned in Algorithm~\ref{alg:recovery-one-sketch} is unbiased, i.e., $\Ex{I_S} + J_S = w(S, \overline S)$.
Moreover, the variance of $I_S$ is $\Var{I_S} \le O({\beta}/{\alpha \lambda}) w(S,\overline S)^2$.
\end{lemma}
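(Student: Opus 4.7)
The plan is to prove unbiasedness via linearity, and then bound the variance by decomposing into per-vertex contributions, tightening at the per-component level using the dense-component sparsity, and finally aggregating across components using $\beta$-balance.

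\emph{Unbiasedness.} Fix any cut $S$ and let $A_{ij} = w_{G_i}(V_{ij}\cap S, V_{ij}\cap \overline{S})$ be the ``forward'' cut contribution of component $V_{ij}$ and $B_{ij}$ its reverse. For each $u \in S_{ij}$, the algorithm samples $\alpha$ i.i.d.\ uniform draws from $u$'s outgoing (or incoming, depending on which side $S_{ij}$ is on) edges within $V_{ij}$. Writing $c_u$ for the weight of $u$'s cut-crossing edges in the relevant direction and $d_u$ for the corresponding degree within $V_{ij}$, one sample satisfies $\Ex{\chi w} = c_u/d_u$, so $\Ex{I_{S_{ij}}(u)} = c_u$. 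Summing over $u \in S_{ij}$ gives $\Ex{I_{V_{ij}}} = A_{ij}$; summing over $(i,j)$ and adding the deterministically-stored sparse-cut contribution $J_S$ gives $\Ex{I_S} + J_S = w(S, \overline{S})$.

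\emph{Per-vertex and per-component variance.} Samples across distinct $(i,j,u)$ are independent, so $\Var{I_S} = \sum_{i,j,u} \Var{I_{S_{ij}}(u)}$. The standard variance of a uniform-with-replacement scaled estimator gives $\Var{I_{S_{ij}}(u)} \le W_i c_u d_u / \alpha$ (using $(\chi w)^2 \le W_i \cdot \chi w$). The key step is the per-component inequality
\[
W_i \sum_{u \in S_{ij}} d_u c_u \le O\!\left(\frac{A_{ij}(A_{ij}+B_{ij})}{\lambda}\right).
\]
To prove it, apply the no-$\lambda$-sparse-cut property to $(S_{ij}, V_{ij}\setminus S_{ij})$: its undirected edge count exceeds $\lambda |S_{ij}|$, and that count is at most $2(A_{ij}+B_{ij})/W_i$ (each edge in $G_i$ has weight $\ge W_i/2$), so $|S_{ij}| \le 2(A_{ij}+B_{ij})/(W_i \lambda)$. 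Decompose $d_u = d_u^{\mathrm{int}} + d_u^{\mathrm{cross}}$ and use $d_u^{\mathrm{int}} \le |S_{ij}|$ and $W_i d_u^{\mathrm{cross}} \le 2 c_u$. The internal piece yields $W_i \sum d_u^{\mathrm{int}} c_u \le W_i |S_{ij}| A_{ij} \le 2(A_{ij}+B_{ij})A_{ij}/\lambda$, matching the target form.

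\emph{Aggregating across components.} Combining the per-component bound gives $\Var{I_S} \le O(1/(\alpha \lambda)) \sum_{i,j} A_{ij}(A_{ij}+B_{ij})$. The $\beta$-balance of $G$ yields $\sum_{i,j} A_{ij} \le w(S, \overline{S})$ (dense-part contributions to the forward cut) and $\sum_{i,j} B_{ij} \le w(\overline{S}, S) \le \beta\, w(S, \overline{S})$. Since all $A_{ij}, B_{ij} \ge 0$,
\[
\sum_{i,j} A_{ij}(A_{ij}+B_{ij}) \le \Bigl(\sum_{i,j} A_{ij}\Bigr)^{\!2} + \Bigl(\sum_{i,j} A_{ij}\Bigr)\!\Bigl(\sum_{i,j} B_{ij}\Bigr) \le (1+\beta)\, w(S, \overline{S})^2,
\]
so $\Var{I_S} \le O(\beta/(\alpha \lambda))\, w(S, \overline{S})^2$ as claimed.

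\emph{Main obstacle.} The delicate step is the per-component inequality, specifically the cross-cross contribution $W_i \sum d_u^{\mathrm{cross}} c_u$. Plugging the loose bound $d_u^{\mathrm{cross}} \le 2 c_u/W_i$ directly yields $2 \sum c_u^2$, and the naive estimate $\sum c_u^2 \le A_{ij}^2$ aggregates to $\Theta(w(S,\overline{S})^2)$, overshooting the target by a factor of $\alpha\lambda/\beta$. The resolution is to work with the sharper per-vertex variance $\Var{I_{S_{ij}}(u)} \le c_u(W_i d_u - c_u)/\alpha$ rather than $W_i c_u d_u/\alpha$: the factor $W_i d_u - c_u$ equals $W_i d_u^{\mathrm{int}}$ exactly when edges in $G_i$ share a single weight, and in general is at most $W_i d_u^{\mathrm{int}} + W_i d_u^{\mathrm{cross}}/2$, which removes the spurious $\sum c_u^2$ term and leaves a quantity controlled by the sparsity-driven bound on $|S_{ij}|$. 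This is precisely the ``tighter variance bound'' the paper advertises as removing redundant terms from the analysis of Andoni et al.
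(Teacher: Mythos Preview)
Your overall architecture matches the paper's: per-vertex variance, then per-component control via the no-$\lambda$-sparse-cut bound $|S_{ij}| \le (X_{ij}+\overline{X_{ij}})/\lambda$, then aggregation across components and weight classes using $\beta$-balance. The unbiasedness and the aggregation step are fine.

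The gap is in your resolution of the ``Main obstacle.'' Your sharper per-vertex bound $c_u(W_i d_u - c_u)/\alpha$ does \emph{not} remove the $\sum_u c_u^2$ term. From $W_i d_u - c_u \le W_i d_u^{\mathrm{int}} + (W_i/2)\,d_u^{\mathrm{cross}}$ you still get a cross contribution $c_u \cdot (W_i/2)\,d_u^{\mathrm{cross}}$, and since $(W_i/2)\,d_u^{\mathrm{cross}} \le c_u \le W_i\,d_u^{\mathrm{cross}}$ this is $\Theta(c_u^2)$. Summing over $u$ and then over components leaves an extra $O(1/\alpha)\,w(S,\overline S)^2$ in the final variance bound, which is \emph{not} absorbed into $O(\beta/(\alpha\lambda))\,w(S,\overline S)^2$ whenever $\lambda > \beta$ (equivalently $\eps < 1/\sqrt{\beta}$, which includes the entire undirected case $\beta=1$).

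The paper's ``tighter variance'' is a different sharpening than yours: it works with edge \emph{counts} and the Bernoulli factorization directly. Writing $p = |E_i(u,\overline{S_{ij}})|/d_u$, the per-sample variance is (up to the constant from the weight band) $2^{2i}\,p(1-p)$, so
\[
\Var{I_{S_{ij}}(u)} \;\le\; \frac{2^{2i}}{\alpha}\,|E_i(u,\overline{S_{ij}})|\cdot |E_i(u,S_{ij})|
\;\le\; \frac{2^{2i}}{\alpha}\,|E_i(u,\overline{S_{ij}})|\cdot |S_{ij}|,
\]
using only that the graph is simple so $|E_i(u,S_{ij})|\le |S_{ij}|$. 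The $(1-p)$ factor \emph{is} the internal fraction, so no cross--cross term ever appears; summing gives $\frac{2^{2i}}{\alpha}|S_{ij}|X_{ij} \le \frac{2^{2i}}{\alpha\lambda}(X_{ij}+\overline{X_{ij}})X_{ij}$ and the rest proceeds as you wrote. In short, you identified the right obstacle but reached for the wrong fix: don't try to peel $c_u$ off $W_i d_u$; instead keep the indicator and the weight separate so that the $(1-p)$ factor hands you $d_u^{\mathrm{int}}\le |S_{ij}|$ for free.
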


\begin{proof}[Proof of Lemma~\ref{lem:i-s-good}]
Algorithm~\ref{alg:store} sets $\alpha = \lambda = \beta^{1/2} \eps^{-1}$, so Lemma~\ref{lem:var-s} implies $\Var{I_S} \le O(\eps^2) \cdot w(S,\overline S)^2$.
By Chebyshev's inequality, with probability at least $2/3$,
\[
\abs{(I_S + J_S) - w(S,\overline S)} \le O(\eps) \cdot w(S,\overline S). \qedhere
\]
\end{proof}

To prove Theorem~\ref{thm:beta-12}, all that remains is to prove Lemma~\ref{lem:var-s}.

\begin{proof}[Proof of Lemma~\ref{lem:var-s}]
Recall that our estimator is
\[
I_S = \sum_{G_i} \sum_{V_{ij}} \sum_{u \in S_{ij}} I_{S_{ij}}(u),
\]
where $i$ sums over the graphs $G_i$ defined according to the edge weights, $j$ sums over the dense components in each $G_i$ after all sparse cuts are removed, and $u$ sums over the vertices of $S_{ij}$. Without loss of generality, we can assume $|V_{ij} \cap S| \le |V_{ij} \cap \overline{S}|$ and hence $S_{ij} = V_{ij} \cap S$. Otherwise, Algorithm~\ref{alg:recovery-one-sketch} works with $V_{ij} \cap \overline{S}$ and queries for the incoming edges instead.
Under this assumption, we always work with outgoing edges:
\[
I_{S_{ij}}(u) = \frac{d^{\mathrm{out}}_{ij}(u)}{\alpha}\sum_{q=1}^\alpha \chi_{ij}(u,q) w_{ij}(u,q).
\]

Every edge $e \in E(S, \overline S)$ belongs to exactly one $G_i$, and in that $G_i$ it is either a sparse-cut edge, or a dense-component edge in exactly one $V_{ij}$.
Consequently, to prove $I_S + J_S$ is unbiased, it suffices to prove that $I_{S_{ij}}(u)$ is unbiased.
In the dense component $V_{ij}$ of $G_i$, the total contribution of edges leaving $u$ to $w(S, \overline S)$ is $\sum_{r=1}^{d^{\mathrm{out}}_{ij}(u)} \chi(u,r) \cdot w(u,r)$,
where $r$ indexes the edges leaving $u$, $w(u,r)$ is the weight of the $r$-th edge leaving $u$, and $\chi(u,r)$ indicates if this edge goes from $S$ to $\overline S$. Let $e(u,r)$ denote the $r$-th edge leaving $u$ and $e_{ij}(u,q)$ denote the $q$-th sampled edge leaving $u$ within $V_{ij}$.
Summing over the $\alpha$ sampled edges, we have
\begin{align*}
\Ex{I_{S_{ij}}(u)}
 &= \frac{d^{\mathrm{out}}_{ij}(u)}{\alpha} \cdot \sum_{q=1}^{\alpha} \Ex{\chi_{ij}(u,q) \cdot w_{ij}(u,q)} \\
 &= \frac{d^{\mathrm{out}}_{ij}(u)}{\alpha} \cdot \sum_{q=1}^\alpha \sum_{r=1}^{d^{\mathrm{out}}_{ij}(u)} \Pr[e_{ij}(u,q)=e(u,r)] \cdot \chi(u,r) \cdot w(u,r) \\
 &= \sum_{r=1}^{d^{\mathrm{out}}_{ij}(u)} \chi(u,r) \cdot w(u,r),
\end{align*}
where the last equality holds because each sample has the same variance, and the $q$-th sample is drawn uniformly among all outgoing edges at $u$, i.e., $\Pr[e_{ij}(u,q)=e(u,r)] = \frac{1}{d^{\mathrm{out}}_{ij}(u)}$. The expectation of $I_{S_{ij}}(u)$ is exactly the contribution of edges leaving $u$ to $w(S, \overline S)$, so $I_{S_{ij}}(u)$ is unbiased.

For the rest of the proof, we upper bound the variance of $I_S$.
We assume without loss of generality that $J_S = 0$, i.e., no sparse edges were ever stored and removed by Algorithm~\ref{alg:store}. This is because we are trying to prove the statement
\[
\Var{I_S} \le O\left(\frac{\beta}{\alpha \lambda}\right) w(S,\overline S)^2 = O\left(\frac{\beta}{\alpha \lambda}\right)\cdot (\Ex{I_S} + J_S)^2,
\]
so setting $J_S = 0$ only makes the right-hand side smaller and hence, the proof more difficult.

We introduce some notation: recall that $(V_{ij})_j$ is the set of dense components of $G_i$ and $S_{ij} = V_{ij} \cap S$.
We use $X_{ij} = |E_{i}(S_{ij}, \overline{S_{ij}})|$ to denote the \emph{number} of edges from $S_{ij}$ to $V_{ij} \cap \overline S$ in $G_i$, and $\overline{X_{ij}} = |E_{i}(\overline{S_{ij}}, S_{ij})|$ the number of edges in the reverse direction.
Let $X_i = \sum_{j} X_{ij}$ and $\overline{X_i} = \sum_j \overline{X_{ij}}$, so that $X_i$ is the total number of dense-component edges that go from $S$ to $\overline S$ in $G_i$.

Since there are no $\lambda$-sparse cuts (defined in \eqref{eqn:lambda-sparse}) at the end of Algorithm~\ref{alg:store}, we have $X_{ij} + \overline{X_{ij}} > \lambda \min(|S_{ij}|, |\overline{S_{ij}}|)$.
Since we assume $|S_{ij}| \leq |\overline{S_{ij}}|$, this condition implies the following for every dense component $V_{ij}$, $\lambda \abs{S_{ij}} \leq X_{ij} + \overline{X_{ij}}$.

Fix any $u \in S_{ij}$.
We first upper bound $\Var{I_{S_{ij}}(u)}$ and then work our way up the definition of $I_S$. Since $\chi_{ij}(u,q)$ is a Bernoulli random variable with mean $\abs{E_i(u, \overline{S_{ij}})}/d^{\mathrm{out}}_{ij}(u)$, its variance is
\[
\Var{\chi_{ij}(u,q)} = \frac{\abs{E_i(u, \overline{S_{ij}})}}{d^{\mathrm{out}}_{ij}(u)} \cdot \frac{\abs{E_i(u, S_{ij})}}{d^{\mathrm{out}}_{ij}(u)}.
\]
Now from the definition of $I_{S_{ij}}(u)$, we have
\begin{align*}
\Var{I_{S_{ij}}(u)} &= \frac{d^{\mathrm{out}}_{ij}(u)^2}{\alpha^2}\sum_{q=1}^\alpha \Var{\chi_{ij}(u,q)}w_{ij}(u,q)^2 \\
&\leq \frac{d^{\mathrm{out}}_{ij}(u)^2}{\alpha^2} \cdot \alpha \cdot \frac{\abs{E_i(u, \overline{S_{ij}})}}{d^{\mathrm{out}}_{ij}(u)} \cdot \frac{\abs{E_i(u, S_{ij})}}{d^{\mathrm{out}}_{ij}(u)}\cdot 2^{2i} \tag{$w_e \leq 2^i$ for all $e \in E_i$} \\
&= \frac{2^{2i}}{\alpha}\abs{E_i(u, \overline{S_{ij}})} \cdot \abs{E_i(u, S_{ij})} \\
&\leq \frac{2^{2i}}{\alpha}\abs{E_i(u,\overline{S_{ij}})}\cdot\abs{S_{ij}}. \tag{$\,\abs{E_i(u,S_{ij})} \le |S_{ij}|$}
\end{align*}
In Algorithm~\ref{alg:recovery-one-sketch}, we set $I_{V_{ij}} = \sum_{u \in S_{ij}} I_{S_{ij}}(u)$, so
\begin{align*}
\Var{I_{V_{ij}}} = \sum_{u \in S_{ij}}\Var{I_{S_{ij}}(u)}
&\leq \sum_{u \in S_{ij}}\frac{2^{2i}}{\alpha}\abs{S_{ij}}\cdot\abs{E_i(u,\overline{S_{ij}})} \\
&= \frac{2^{2i}}{\alpha}\cdot\abs{S_{ij}}\cdot X_{ij} \tag{$X_{ij} = \abs{E_i(S_{ij}, \overline{S_{ij}})}$} \\
&\leq \frac{2^{2i}}{\alpha \lambda}\left(X_{ij} + \overline{X_{ij}}\right)X_{ij}. \tag{$(S_{ij}, \overline{S_{ij}})$ is not $\lambda$-sparse}
\end{align*}
Summing across every dense component $V_{ij}$ in $G_i$, we get
\begin{align*}
\Var{I_{G_i}} = \sum_{j}\Var{I_{V_{ij}}}
&\leq \frac{2^{2i}}{\alpha \lambda}\sum_{j}\left(X_{ij} + \overline{X_{ij}}\right)X_{ij} \\
&\leq \frac{2^{2i}}{\alpha \lambda}\left(X_i + \overline X_i\right)X_i. \tag{$X_{i} = \sum_{j} X_{ij}$ and $\overline{X_{i}} = \sum_{j} \overline{X_{ij}}$}
\end{align*}
Finally, we sum across the weight classes indexed by $i$ to obtain
\begin{align*}
\Var{I_S} &= \sum_i \Var{I_{G_i}} \\
&\leq \frac{1}{\alpha\lambda} \sum_i \left(2^i\right)^2 \left(X_i + \overline X_i\right)X_i \\
&= \frac{1}{\alpha\lambda} \left[\sum_i \left(2^iX_i\right)^2 + \sum_i2^i\overline X_i \cdot\sum_i2^i X_i\right] \\
&\leq \frac{4}{\alpha\lambda} \left[w(S, \overline S)^2 + w(\overline S, S) \cdot w(S, \overline S) \right] \tag{$w_e \geq 2^{i-1}$ for all $e \in E_i$} \\
&\leq \frac{4}{\alpha\lambda} \left[w(S, \overline S)^2 + \beta w(S, \overline S)^2 \right] \tag{$ w(\overline S, S) \le \beta w(S, \overline S)$} \\
&= O\left(\frac{\beta}{\alpha\lambda}\right) w(S, \overline S)^2. \tag*{\qedhere}
\end{align*}
\end{proof}

\subsection{For-Each Cut Sketch: Faster Algorithms}
\label{sec:for-each-ub-faster}

In this section, we give nearly-linear time algorithms for computing and querying for-each cut sketches (when $\beta$ is known).
If $\beta$ is unknown, then we can first compute a constant approximation of it using an algorithm of Ene~{\em et al.}~\cite{ene2016routing} in $\tilde O(\beta^2 m)$ time. In other words, we can speed up the algorithms in the previous section and prove the following theorem.

\begin{theorem}
\label{thm:beta-12-faster}
Consider the same setting as in Theorem~\ref{thm:beta-12}.
That is, a $(1\pm\eps)$ for-each cut sketch of size $O(\beta^{1/2} n \log^{5}n / \eps)$ bits exists for any $n$-vertex $\beta$-balanced graph $G$.
Now in addition, we can compute such a cut sketch in time $\tilde O(m + \beta^{1/2} n / \eps)$.
\end{theorem}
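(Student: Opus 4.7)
The algorithm in Section~\ref{sec:beta-12} breaks, per weight class $G_i$, into three tasks: (i)~iteratively finding and removing $\lambda$-sparse cuts, (ii)~recording the component structure and the in/out-degrees $d^{\mathrm{in}}_{ij}(u), d^{\mathrm{out}}_{ij}(u)$, and (iii)~drawing $\alpha$ samples of incoming and outgoing edges at each vertex. Tasks (ii) and (iii) are easily seen to take $O(m_i + n\alpha)$ time once the decomposition is known, so the plan is to replace the naive implementation of (i) by a single invocation of a nearly-linear time expander/balanced-cut decomposition subroutine.

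\textbf{Fast sparse-cut decomposition.} The plan is to run a one-shot decomposition on the undirected multigraph underlying $G_i$ (formed by ignoring directions). A decomposition routine such as the Saranurak--Wang expander decomposition produces, in $\tilde O(m_i)$ time, a partition $(V_{ij})_j$ such that (a) each induced subgraph has no $\lambda$-sparse cut, and (b) the total number of inter-component edges is $\tilde O(\lambda n_i)$, matching the bound achieved by the iterative procedure in Algorithm~\ref{alg:store}. Crucially, the variance bound in Lemma~\ref{lem:var-s} only uses the facts that each dense component has no $\lambda$-sparse cut and that the stored sparse edges satisfy the charging argument in Lemma~\ref{lem:sketch-size}; both are preserved by the one-shot decomposition, so the $(1\pm\eps)$-approximation guarantee and size bound of Theorem~\ref{thm:beta-12} transfer verbatim.

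\textbf{Fast sketching (and recovery).} Given the partition, a single pass through the edges suffices to (a)~emit the inter-component edges into $\sk(G_i)$, (b)~build per-component directed adjacency lists and read off the vertex degrees, and (c)~at each vertex draw $\alpha$ independent uniform samples (with replacement) from its in- and out-adjacency list in $O(\alpha)$ time. Summing over the $O(\log n)$ weight classes gives total construction time
\[
\tilde O\!\left(\sum_i m_i + n\alpha\right) \;=\; \tilde O\!\left(m + \sqrt{\beta}\,n/\eps\right),
\]
which is the claimed bound. Although the theorem statement concerns only the construction time, one also gets a fast recovery algorithm: storing the sparse edges in hash tables keyed by endpoint, the query $I_S+J_S$ from Algorithm~\ref{alg:recovery-one-sketch} can be evaluated in $\tilde O(n(\lambda+\alpha))$ time, i.e., in time proportional to the sketch size.

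\textbf{Main obstacle.} The subtlety is aligning our sparsity criterion with whatever interface the off-the-shelf decomposition offers. Standard expander decompositions are parameterized by conductance $\phi$ (see~\eqref{eqn:conductance}), whereas the guarantee we need is phrased in terms of the cardinality-based quantity $\lambda\min(|S|,|\overline S|)$. The translation has to be done carefully so that (a)~the absence of a $\lambda$-sparse cut inside each $V_{ij}$ is actually certified, and (b)~the runtime of the decomposition remains $\tilde O(m)$ with only a polylogarithmic overhead in $\lambda=\sqrt{\beta}/\eps$. A minor secondary point is the case when $\beta$ is not known in advance; as noted in the footnote to Theorem~\ref{thm:foreach-intro}, we can first compute a constant approximation of $\beta$ using the algorithm of Ene~\emph{et al.}~\cite{ene2016routing} and then apply the above procedure, and this pre-processing is already folded into the statement of Theorem~\ref{thm:beta-12-faster}.
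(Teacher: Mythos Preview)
Your high-level plan---replace the slow iterative sparse-cut removal with a nearly-linear expander decomposition---matches the paper's approach, but the proposal has a genuine gap exactly where you flag the ``main obstacle,'' and the paper's resolution is substantively different from what you sketch.

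The specific claim that Saranurak--Wang (or any standard expander decomposition) produces, in $\tilde O(m_i)$ time, a partition whose pieces have \emph{no $\lambda$-sparse cut} in the sense of~\eqref{eqn:lambda-sparse} and whose inter-piece edges total $\tilde O(\lambda n_i)$ is not justified. These routines certify \emph{conductance} $\phi=\Omega(1/\mathrm{polylog}\,n)$, and conductance does not imply the vertex-count-based bound $|E(S,\overline S)|+|E(\overline S,S)|\ge \lambda\min(|S|,|\overline S|)$: a piece with high conductance but many low-degree vertices can still contain a $\lambda$-sparse cut. Likewise, the inter-cluster edge bound is stated in terms of $\phi\cdot m$ (or ``at least half the edges are internal''), not $\lambda n$. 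So you cannot simply invoke Lemma~\ref{lem:var-s} verbatim on the output of an expander decomposition.

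The paper does \emph{not} try to translate conductance back into the $\lambda$-sparse-cut guarantee. Instead it (i) applies the expander decomposition \emph{recursively} $O(\log n)$ times to each $G_i$, each round sketching the internal edges and passing the cross-cluster edges to the next round; (ii) inside each expander piece, stores deterministically all edges incident to vertices of degree $\le\alpha$ (this is the step that substitutes for a minimum-degree bound and is essential for the conductance-to-vertex-count conversion); and (iii) redoes the variance analysis directly from conductance, deriving the replacement for the ``no $\lambda$-sparse cut'' inequality as
\[
\alpha\cdot|S_{ijk}|\;\le\;O(\log^3 n)\bigl(X_{ijk}+\overline{X_{ijk}}\bigr),
\]
by combining the minimum-degree condition with $\phi=\Omega(1/\log^3 n)$. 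To absorb the $\log^3 n$ loss, the paper sets $\alpha=\beta^{1/2}\ln^{3/2}n/\eps$ rather than $\beta^{1/2}/\eps$. The resulting size bound is $O(n\beta^{1/2}\log^5 n/\eps)$ and the running time is $\tilde O(m+\beta^{1/2}n/\eps)$, with the extra logs coming from the $O(\log n)$ weight classes times $O(\log n)$ recursion levels. Your remark about $\beta$ being known is handled the same way.
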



At a high level, dense components are easier to sketch.
In order to decompose the graph into dense components, the previous sketching algorithm (Algorithm~\ref{alg:store}) iteratively finds and stores sparse cuts, which is a very slow process because sparsest cut is NP-Hard and we have to do this repeatedly.
We can speed this up by considering a more direct graph partitioning algorithms that decompose the graph into dense components, and that is expander partitioning.

The expander decomposition problem has been studied intensively (see, e.g.,~\cite{KannanVV04,SpielmanT04,KelnerLOS14,Wulff-Nilsen17,NanongkaiS17,SaranurakW19,ChuzhoyGLNPS19}), where the goal is to partition a graph into disjoint clusters, such that each cluster is internally well-connected while the number of cross-cluster edges is small.
Thus, expander decomposition has become a powerful algorithmic tool, especially in designing nearly-linear time algorithms for a wide range of fundamental graph and matrix problems. For our purposes, we use the following (randomized) subroutine from~\cite{SaranurakW19} that partitions a graph into expanders in nearly-linear time.

\begin{lemma}[Expander Decomposition, \cite{SaranurakW19}]
\label{lem:expander-p}
Given an undirected graph $G = (V, E)$ with $n = |V|$ and $m = |E|$, there is a randomized algorithm that with high probability finds a partitioning of $V$ into disjoint set of vertices $V_1, \ldots, V_k$ in time $\tilde O(m)$ such that
\begin{enumerate}
\item The $G[V_i]$'s contain at least half of the edges of $G$: $\sum_i |E(V_i, V_i)| \ge m/2$.
\item For every $i$, $G[V_i]$ has conductance $\Omega(1/\log^3 n)$.
\end{enumerate}
\end{lemma}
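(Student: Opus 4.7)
The plan is to retain the overall structure of Algorithm~\ref{alg:store}---partition by weight class, decompose each class into dense components, record vertex degrees, and per-vertex reservoir-sample $\alpha$ edges---and to replace only its slow step (iteratively finding a $\lambda$-sparse cut) with a nearly-linear-time expander decomposition via Lemma~\ref{lem:expander-p}. Every other operation already admits a standard nearly-linear-time implementation: bucketing edges into the $O(\log n)$ weight classes takes $O(m\log n)$ time; storing incoming and outgoing degrees at each vertex takes $O(m+n)$ time; and reservoir sampling $\alpha$ in-edges and $\alpha$ out-edges at each vertex can be done with a single pass over the edge list, in $O(m+n\alpha)$ time. Since $\alpha = \sqrt{\beta}/\eps$, the total non-decomposition work is $\tilde O(m + n\sqrt{\beta}/\eps)$, matching the claimed runtime.

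For the decomposition step, for each weight-class subgraph $G_i$, I would invoke Lemma~\ref{lem:expander-p} on the underlying undirected graph, obtaining a partition $\{V_{ij}\}_j$ where each $G_i[V_{ij}]$ has conductance $\Omega(1/\log^3 n)$ and the inter-cluster edges comprise at most half of $|E_i|$. The inter-cluster edges play the role of the "sparse-cut" edges of Algorithm~\ref{alg:store} and are stored exactly; inside each $V_{ij}$, we store degrees and sample $\alpha$ edges per vertex exactly as before. To ensure the resulting components are dense enough, Lemma~\ref{lem:expander-p} is iterated $O(\log n)$ times---either by recursing on each returned cluster, or by repeatedly re-partitioning the residual intra-cluster graph. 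Because each iteration halves the number of remaining edges inside clusters, the cumulative decomposition work across all iterations and all weight classes is a geometric series bounded by $\tilde O(m)$.

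The correctness argument largely mirrors Lemmas~\ref{lem:var-s} and~\ref{lem:sketch-size}. The key change is in the variance bound: the step
\[
|S_{ij}| \;\le\; (X_{ij} + \overline{X_{ij}})/\lambda,
\]
which in Lemma~\ref{lem:var-s} is derived from the absence of $\lambda$-sparse cuts, is instead obtained from the conductance guarantee $\phi=\Omega(1/\log^3 n)$ (via $X_{ij}+\overline{X_{ij}} \ge \phi\cdot\min(|S_{ij}|,|\overline{S_{ij}}|)$). This introduces an additional polylog factor in the variance, which is compensated by a slightly larger $\alpha$ and $\lambda$---precisely the gap between $\log^3 n$ in Theorem~\ref{thm:beta-12} and $\log^5 n$ in the claim. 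With these adjusted parameters, Chebyshev's inequality on $I_S+J_S$ gives the same $(1\pm O(\eps))$ guarantee for each cut query as in Lemma~\ref{lem:i-s-good}.

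The main obstacle I expect is controlling the total number of stored inter-cluster edges across the iterated expander decompositions, since a single call to Lemma~\ref{lem:expander-p} can output as many as $|E_i|/2$ boundary edges, and naive summation over $O(\log n)$ iterations gives only the trivial bound $O(m)$. I would address this by adapting the charging argument used in Lemma~\ref{lem:sketch-size}: each inter-cluster edge produced by a decomposition is charged to an endpoint lying on the smaller side of its boundary, and any vertex can appear on the smaller side of such a split at most $O(\log n)$ times, giving at most $\tilde O(\lambda) = \tilde O(\sqrt{\beta}/\eps)$ charge per vertex and a total of $\tilde O(n\sqrt{\beta}/\eps)$ stored sparse edges, absorbed within the claimed $O(n\sqrt{\beta}\log^5 n/\eps)$-bit sketch size.
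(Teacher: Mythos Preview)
The statement you were asked to address, Lemma~\ref{lem:expander-p}, is a result quoted from Saranurak and Wang~\cite{SaranurakW19}; the paper does not prove it and uses it as a black box. Your proposal is not a proof of this lemma at all---you are instead sketching a proof of Theorem~\ref{thm:beta-12-faster} (the nearly-linear-time sketching algorithm) using Lemma~\ref{lem:expander-p} as a tool. So at the level of ``what is being proved,'' the proposal targets the wrong statement.

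Treating your write-up as an attempt at Theorem~\ref{thm:beta-12-faster}, there are two genuine gaps relative to the paper's argument. First, you propose to \emph{store} the inter-cluster edges returned by each expander decomposition and bound their number by the charging argument of Lemma~\ref{lem:sketch-size}. That argument relies on the removed edges lying in a $\lambda$-sparse cut, so each removal contributes at most $\lambda$ edges per vertex on the smaller side; expander decomposition gives no such guarantee---a single call can produce up to $m/2$ boundary edges with no relation to $\lambda\cdot|S|$---so the charge-per-vertex bound fails. The paper avoids this entirely: it does \emph{not} store inter-cluster edges but instead recurses on them (they become $G_{i,j+1}$), and the only edges stored deterministically are those incident to low-degree vertices. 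Second, your variance step $X_{ij}+\overline{X_{ij}}\ge\phi\cdot\min(|S_{ij}|,|\overline{S_{ij}}|)$ is not what conductance gives; conductance bounds the cut by $\phi$ times the smaller \emph{volume}, not the smaller vertex count. The paper bridges this gap precisely via Step~(b) of Algorithm~\ref{alg:faster-store}: by first storing and removing all edges at vertices of degree $\le\alpha$, every remaining vertex has degree at least $\alpha$, so volume $\ge\alpha\cdot|S_{ijk}|$, which is what yields Inequality~\eqref{eqn:sijk-le}. Without that step your variance bound does not go through.
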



For undirected graphs, Jambulapati and Sidford~\cite{JambulapatiS18} showed how to construct for-each cut sketches in nearly-linear time.
Instead of trying to repeatedly find sparse cuts, they showed how to sketch expander graphs (graphs with high conductance) and then decompose the input graph using expander partitioning algorithms.

Intuitively, we should be able to speed up our algorithm using a similar approach, because when we partition the graph by removing sparse cuts, we do not look at the direction of the edges.
However, the analysis in~\cite{JambulapatiS18} does not apply to our setting because they focus on sketching quadratic forms.
The quadratic form of a directed Laplacian ignores edge directions and hence does not preserve the directed cut values.
On a more technical level, their analysis relies heavily on the notion of conductance, which is not canonically defined for directed graphs.
In our setting, we cannot bound the variance of our estimator even if we have a directed graph whose undirected version is an expander (see Appendix~\ref{sec:comp-js} for more details).

\paragraph*{Overview of the Faster Sketching and Recovery Algorithms.}

We now describe our nearly-linear time for-each cut sketching algorithm for balanced graphs (Algorithm~\ref{alg:faster-store}).
We first partition the edges into $\ell = O(\log n)$ weight classes $(E_i)_{i=1}^\ell$ and let $G_i = (V, E_i, w)$.
Now for every $G_i$, instead of iteratively finding sparse cuts as in Algorithm~\ref{alg:store}, we invoke Lemma~\ref{lem:expander-p} to obtain an expander decomposition, sketch the internal edges of the expanders via random sampling, remove them from $G_i$, and repeat this process on the remaining edges.

\begin{algorithm}[!ht]
\caption{Compute a $(1 \pm \eps)$ for-each cut sketch for $G$ in nearly-linear time.}
\label{alg:faster-store}
\SetKwInOut{Input}{Input}
\SetKwInOut{Output}{Output}
\setcounter{AlgoLine}{0}
\Input{A $\beta$-balanced graph $G = (V, E, w)$ with edge weights $w_e \in [1, \poly(n)]$ and $\beta \ge 1$, and $\epsilon \in (0, 1)$.}
\Output{A $(1 \pm \eps)$ for-each cut sketch of $G$.}
Let $\alpha = \frac{\beta^{1/2} \ln^{3/2} n}{\eps}$. \\
Partition the edges into $\ell = O(\log(n))$ weight classes $(E_i)_{i=1}^\ell$ where $E_i = \{e: w_e \in [2^{i-1}, 2^i) \}$. \\
Each weight class $E_i$ defines a (possibly unbalanced) graph $G_i = (V, E_i, w)$. \\
\For{$i = 1$ {\bf to} $\ell$}{
  Let $j = 1$, $G_{i1} = G_i$, and $E_{i1} = E_i$. \\
  \While{$E_{ij} \neq \varnothing$}{
    Compute a cut sketch $\sk(G_{ij})$ for $G_{ij}$ as follows: \\
    \begingroup
    \leftskip1em
    (a) Compute and store an expander decomposition $(V_{ijk})_k$ on the \emph{undirected, unweighted} version of $G_{ij}$ using Lemma~\ref{lem:expander-p}. \\
    (b) For every $V_{ijk}$, if any vertex $u \in V_{ijk}$ has $d^{out}_{ijk}(u) + d^{in}_{ijk}(u) \le \alpha$, then store and remove all edges incident to $u$. \label{alg:extra-step}
    Let $E_{ijk}$ denote the remaining edges in $G_{ij}[V_{ijk}]$. \\
    (c) For every $V_{ijk}$ and $u\in V_{ijk}$, store the in- and out-degree of $u$ in $V_{ijk}$, i.e., $d^{in}_{ijk}(u) = |E_{ijk}(V_{ijk},u)|$ and $d^{out}_{ijk}(u) = |E_{ijk}(u, V_{ijk})|$. \\
    (d) For every $V_{ijk}$, at each vertex $u \in V_{ijk}$, sample with replacement $\alpha$ edges from the outgoing edges at $u$ in $E_{ijk}$.
    Do the same at each vertex for incoming edges. \\
    \endgroup
    Copy the remaining cross-cluster edges to $G_{i,j+1}$, i.e., set $E_{i,j+1} = E_{ij} \setminus \bigcup_k E(V_{ijk}, V_{ijk})$, $G_{i,j+1} = (V, E_{i,j+1}, w)$, and $j = j + 1$;
  }
}
\Return{$\sk(G) = \bigcup_{i,j} \sk(G_{ij})$.}
\end{algorithm}

\begin{algorithm}[!ht]
\caption{Query the cut value $w(S, \overline S)$ from $\sk(G)$}
\label{alg:faster-query}
\SetKwInOut{Input}{Input}
\SetKwInOut{Output}{Output}
\setcounter{AlgoLine}{0}
\Input{A cut query $S \subseteq V$ and a cut sketch $\sk(G)$ (given by Algorithm~\ref{alg:faster-store}).}
\For{each $\sk(G_{ij})$ in $\sk(G)$}{
    \For{each expander $V_{ijk}$ in $G_{ij}$}{
        Let $S_{ijk}$ denote the smaller set of $(V_{ijk} \cap S)$ and $(V_{ijk} \cap \overline S)$. \\
        \uIf{$S_{ijk} = V_{ijk} \cap S$}{
            Estimate the total weight of edges leaving $S_{ijk}$:
            For every $u \in S_{ijk}$, set
            \begin{equation} \label{eq:out-est-at-u-2}
            I_{S_{ijk}}(u) = \frac{d^{out}_{ijk}(u)}{\alpha} \sum_{q=1}^\alpha \chi_{ijk}(u,q)w_{ijk}(u,q)
            \end{equation}
            where $d^{out}_{ijk}(u)$ is the out-degree of $u$ in $V_{ijk}$, $\chi_{ijk}(u,q) = 1$ if the $q$-th sampled outgoing edge at $u$ crosses $S$ and $\chi_{ijk}(u,q) = 0$ otherwise, and $w_{ijk}(u,q)$ is the weight of the $q$-th sampled edge.
        }
        \Else{
            Estimate the total weight of edges entering $S_{ijk} = V_{ijk} \cap \overline S$ instead: \\
            For every $u \in S_{ijk}$, set $I_{S_{ijk}}(u)$ as in \eqref{eq:out-est-at-u-2}, using $d^{in}_{ijk}(u)$ instead of $d^{out}_{ijk}(u)$, and $\chi_{ijk}(u,q)$ indicates if the $q$-th sampled incoming edge at $u$ crosses $S$.
        }
        The estimated contribution from $V_{ijk}$ is $I_{V_{ijk}} = \sum_{u \in S_{ijk}} I_{S_{ijk}}(u)$.
    }
}
Compute the overall estimate of total weight of dense-component edges as $I_S = \sum_{i,j,k} I_{V_{ijk}}$. \\
Compute $J_S$, the total weight of the edges stored in Step~\ref{alg:extra-step}(b) of Algorithm~\ref{alg:faster-store} that leaves $S$. \\
\Return{$I_S + J_S$.}
\end{algorithm}

Formally, we let $G_{ij}$ (and $G_{i,j}$) denote the remainder of $G_i$ after $(j-1)$ iterations; the edge set is $E_{ij}$ (and $E_{i,j}$). Initially, we have $G_{i1} = G_i$.
We compute an expander decomposition of $G_{ij}$, and then compute a cut sketch $\sk(G_{ij})$ for the edges inside the expanders.
We copy the cross-cluster edges to $G_{i,j+1}$ and sketch them later.
Observe that $E_{i,j+1}$ is a subset of $E_{i,j}$, and by Lemma~\ref{lem:expander-p}, the number of edges in $E_{i,j+1}$ is at most half of that of $E_{i,j}$.
This guarantees that after $j = O(\log n)$ iterations, $G_{ij}$ must be empty, which allows us to bound the size of $\sk(G)$ and the running time of Algorithm~\ref{alg:faster-store}.

One change in Algorithm~\ref{alg:faster-store} is that we store all edges incident to low-degree vertices (as proposed in~\cite{JambulapatiS18}).
Let $\{V_{ijk}\}_k$ be an expander decomposition of $G_{ij}$.
For every $V_{ijk}$, if there are at most $\alpha$ (incoming and outgoing) edges at $u \in V_{ijk}$, then we store and remove these edges.
Let $E_{ijk}$ denote the remaining edges in $G_{ij}[V_{ijk}]$.
Finally, for every node $u \in V_{ijk}$, we store its incoming and outgoing degrees in $E_{ijk}$, and sample uniformly at random (with replacement) $\alpha$ incoming and outgoing edges at $u$ within $E_{ijk}$.

In Algorithm~\ref{alg:faster-query}, the new query algorithm, we estimate the cut value $w(S,\overline S)$ by summing over every $G_i$ (defined by edge weights), every $G_{ij}$ (created during recursive expander decomposition), every expander $V_{ijk}$ in $G_{ij}$, and finally every node in $S \cap V_{ijk}$.
Algorithm~\ref{alg:faster-query} has two important changes compared to the previous query algorithm (Algorithm~\ref{alg:recovery-one-sketch}):
(1) we no longer have sparse-cut edges, so $J_S$ is now an estimator for edges incident to low-degree vertices, and (2) we have one more index due to recursive expander decomposition and the summation over all expanders $V_{ijk}$.

\paragraph*{Outline of the Rest of the Section.} In the rest of this section we prove three key lemmas.
Lemma~\ref{lem:faster-size} bounds the size of $\sk(G)$ outputted by the new sketching algorithm,
Lemma~\ref{lem:faster-runtime} bounds its running time,
and finally, Lemma~\ref{lem:faster-var} proves the correctness of the new query algorithm.
Theorem~\ref{thm:beta-12-faster} follows immediately from these three lemmas.

\medskip
\begin{lemma}[Output Size of Algorithm~\ref{alg:faster-store}]
\label{lem:faster-size}
Let $G = (V, E, w)$ be a $\beta$-balanced directed graph with $n = |V|$ and $w_e \in [1, \poly(n)]$.
On input $(G, \beta, \eps)$, Algorithm~\ref{alg:faster-store} outputs a data structure $\sk(G)$ of size $O(n \beta^{1/2} \log^5 n / \eps)$.
\end{lemma}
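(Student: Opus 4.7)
The plan is to decompose the size of $\sk(G)$ by a triple sum: over the $\ell = O(\log n)$ weight classes indexed by $i$, over the iterations $j$ of the inner \textbf{while}-loop for each $i$, and over the four data items stored inside each $\sk(G_{ij})$. I would then bound each of the three ranges separately and multiply.

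First I would bound the recursion depth for a fixed weight class $i$. The key structural fact is Lemma~\ref{lem:expander-p}(1), which guarantees that every expander decomposition places at least half the edges inside clusters. Because $E_{i,j+1}$ is precisely the set of cross-cluster edges inherited from $G_{ij}$, we get $|E_{i,j+1}| \le |E_{ij}|/2$, so the loop terminates after at most $O(\log n)$ rounds.

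Next I would bound the size of a single $\sk(G_{ij})$ by summing four contributions. The expander partition itself can be encoded in $O(n \log n)$ bits since it is a partition of $V$. The low-degree edges removed in Step~(b) contribute at most $\alpha$ edges per qualifying vertex, which sums to at most $O(n \alpha)$ edges over all clusters in $G_{ij}$ (because the clusters partition $V$). The in- and out-degrees stored in Step~(c) cost another $O(n \log n)$ bits for the same reason. Finally, Step~(d) samples $O(\alpha)$ incoming and outgoing edges per vertex per expander, again $O(n \alpha)$ edges in total across $G_{ij}$. Each edge label takes $O(\log n)$ bits to store, so $|\sk(G_{ij})| = O(n \alpha \log n)$ bits.

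Combining the three bounds yields $|\sk(G)| = O(\ell) \cdot O(\log n) \cdot O(n \alpha \log n) = O(n \alpha \log^3 n)$, and substituting $\alpha = \beta^{1/2} \log^{3/2} n / \eps$ gives $O(n \beta^{1/2} \log^{9/2} n / \eps)$, comfortably inside the claimed $O(n \beta^{1/2} \log^5 n / \eps)$. I do not anticipate a genuine obstacle here: the argument is essentially bookkeeping, leveraging only the halving property of Lemma~\ref{lem:expander-p}. The only delicate point is that Step~(b) really does store at most $\alpha$ edges per low-degree vertex, which is guaranteed by its triggering condition $d^{out}_{ijk}(u) + d^{in}_{ijk}(u) \le \alpha$; and that Step~(d) samples with replacement but only stores $O(\alpha)$ edges per vertex, so duplicates do not inflate the count.
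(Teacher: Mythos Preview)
Your proposal is correct and follows essentially the same approach as the paper's own proof: both bound the number of weight classes by $O(\log n)$, bound the recursion depth within each weight class by $O(\log n)$ via the halving property of Lemma~\ref{lem:expander-p}, and then itemize the four storage costs within a single $\sk(G_{ij})$ to obtain $O(n\alpha\log n)$ bits each, arriving at the same $O(n\beta^{1/2}\log^{9/2} n/\eps)$ total.
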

\begin{proof}
We have $\ell = O(\log n)$ graphs $G_1, \ldots, G_\ell$, one for each edge-weight class.
For every $G_i$, we iteratively perform expander partitioning and obtain $(G_{ij})_j$.
There are at most $O(\log n)$ graphs $G_{ij}$ for every $G_i$, because $|E_{ij}| \le n^2$ and we have $|E_{i,j+1}| \le |E_{i,j}| / 2$ by Lemma~\ref{lem:expander-p}.

Therefore, it is sufficient to bound the size of each $\sk(G_{ij})$.

In every $\sk(G_{ij})$:
\begin{itemize}
    \item The expanders $V_{ijk}$'s are disjoint, so we can store the vertices they have in $O(n \log n)$ bits.
    \item For every vertex $u \in \bigcup_k V_{ijk}$, if the number of incoming and outgoing edges at $u$ is at most $\alpha$ then we store and remove all of them, which takes $O(\alpha n \log n)$ bits.
    \item For every vertex $u \in \bigcup_k V_{ijk}$, we store the in- and out-degree of $u$ in $E_{ijk}$, which takes $O(n \log n)$ bits.
    \item For every vertex $u \in \bigcup_k V_{ijk}$ that have more than $\alpha$ edges, we sample and store $\alpha$ incoming and outgoing edges at $u$.
    They can be stored using $O(\alpha n \log n)$ bits.
\end{itemize}
In summary, the overall size of $\sk(G)$ is
\[
O(\log n) \cdot O(\log n) \cdot O(n \log n + \alpha n \log n) = O(n \alpha \log^3 n) = O(n \beta^{1/2} \log^{9/2} n / \eps).
\]
The last step is because we choose $\alpha = \frac{\beta^{1/2} \ln^{3/2} n}{\eps}$ in Algorithm~\ref{alg:faster-store}.
\end{proof}

\begin{lemma}[Running Time of Algorithm~\ref{alg:faster-store}]
\label{lem:faster-runtime}
Let $G = (V, E, w)$ be a $\beta$-balanced directed graph with $n = |V|$ and $w_e \in [1, \poly(n)]$.
On input $(G, \beta, \eps)$, Algorithm~\ref{alg:faster-store} runs in time $\tilde O(m + n \alpha) = \tilde O(m + n \beta^{1/2} / \eps)$.
\end{lemma}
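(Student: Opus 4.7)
The plan is to decompose the running time of Algorithm~\ref{alg:faster-store} into three pieces: the cost of the expander decomposition calls, the cost of drawing samples and storing low-degree edges, and the remaining bookkeeping. I will argue that the first piece sums to $\tilde O(m)$ while the other two sum to $\tilde O(n\alpha)$. Combined with $\alpha = \beta^{1/2}\ln^{3/2} n/\eps$, this yields the claimed bound.

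Partitioning the edges into the $\ell=O(\log n)$ weight classes $E_1,\dots,E_\ell$ is an easy $O(m)$ preprocessing. Fix a weight class $i$; the inner while loop produces a sequence $G_{i1},G_{i2},\dots$ of residual graphs. By Lemma~\ref{lem:expander-p}, the expander decomposition of $G_{ij}$ runs in time $\tilde O(|E_{ij}|)$ and moves at least half the edges into the intra-cluster sets $E(V_{ijk},V_{ijk})$, so $|E_{i,j+1}|\le |E_{ij}|/2$ and the loop terminates within $O(\log n)$ iterations. Summing the geometric series $\sum_j |E_{ij}| \le 2|E_i|$ gives $\tilde O(|E_i|)$ for the total expander-decomposition cost in class $i$, and summing over $i$ yields $\tilde O(m)$. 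Producing the undirected, unweighted version of $G_{ij}$ that Lemma~\ref{lem:expander-p} expects is a linear-time preprocessing step absorbed into the same bound.

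Inside each $G_{ij}$, the remaining work at iteration $(i,j)$ consists of: (a) identifying low-degree vertices within each expander $V_{ijk}$ and writing out their incident edges, (b) recording the in/out degrees $d^{in}_{ijk}(u),d^{out}_{ijk}(u)$ for every $u\in\bigcup_k V_{ijk}$, and (c) drawing $\alpha$ uniform samples with replacement from the in- and out-neighborhoods of each such $u$. Step (a) writes at most $\alpha$ edges per low-degree vertex, hence $O(n\alpha)$ edges per iteration. Maintaining the incidence lists of $G_{ij}$ in a dynamic array (e.g., a length-indexed doubly linked list of each vertex's remaining in- and out-neighbors) lets us remove a vertex's $O(\alpha)$ edges in $O(\alpha)$ amortized time and, crucially, lets us draw a uniformly random out-edge (or in-edge) at any remaining vertex in $O(1)$ time. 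Then step (c) costs $O(\alpha)$ per vertex per iteration, so $O(n\alpha)$ per $(i,j)$. Step (b) trivially costs $O(n)$ per $(i,j)$. Multiplying by the $O(\log^2 n)$ pairs $(i,j)$ gives $\tilde O(n\alpha)$ in total for (a)--(c), as desired.

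The main subtlety, and the only place where care is needed, is maintaining the incidence data structure across iterations so that (i) moving an edge from $G_{ij}$ into $G_{i,j+1}$ does not cost more than the expander-decomposition charge already paid for it, and (ii) random access to a uniformly chosen remaining incident edge is still $O(1)$ after low-degree vertices and their edges have been deleted. Both can be handled by standard tricks: represent each vertex's incident out-edges in a resizable array, and when an edge is deleted swap it to the end and decrement the size, which supports $O(1)$ random sampling and $O(1)$ amortized deletion. With this in place the edge-movement cost is linear in $\sum_j |E_{ij}|$, again $\tilde O(m)$ after summing over $i$. Adding the $\tilde O(m)$ from expander decompositions and edge movement to the $\tilde O(n\alpha)$ from sampling and low-degree storage yields the claimed running time $\tilde O(m + n\alpha) = \tilde O(m + n\beta^{1/2}/\eps)$.
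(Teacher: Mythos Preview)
Your proof is correct and follows essentially the same approach as the paper's: charge the expander-decomposition calls to $\tilde O(m)$ over the $O(\log^2 n)$ iterations, and charge the per-vertex sampling and low-degree-edge storage to $\tilde O(n\alpha)$ per iteration. Your geometric-series bookkeeping on $\sum_j |E_{ij}|$ is slightly tighter than the paper's (which just bounds each call by $\tilde O(m)$), and you spell out the swap-to-end array trick for $O(1)$ uniform sampling after deletions, which the paper omits; neither changes the substance of the argument.
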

\begin{proof}
Recall that we have $O(\log n)$ graphs $(G_i)_i$, one for each weight class. Computing all the $G_i$'s can be done in $\tilde O(m)$ time by simply checking the weight of every edge.

For each $G_i$, we construct $O(\log n)$ graphs $(G_{ij})_j$.
For every $G_{ij}$, we first compute an expander decomposition using Lemma~\ref{lem:expander-p}, which runs in time $\tilde O(m)$.
The cross-cluster edges are sketched later and they can be copied to $G_{i,j+1}$ in time $O(m)$.

For each expander $V_{ijk}$ in $G_{ij}$, we can check which vertices have degree at most $\alpha$, store and remove all edges incident to these vertices in time $\tilde O(\alpha n)$.
Storing the in- and out-degree of every vertex in $(V_{ijk}, E_{ijk})$ takes $\tilde O(n)$ time, and storing $2 \alpha$ sampled incoming and outgoing edges from each vertex requires $\tilde O(n\alpha)$ time.

In summary, the overall running time of Algorithm~\ref{alg:faster-store} is
\[
O(\log n) \cdot \left(\tilde O(m) + O(\log n) \cdot \tilde O(m + n \alpha)\right) = \tilde O(m + n \alpha) = \tilde O(m + n \beta^{1/2} / \eps). \qedhere
\]
\end{proof}

We make some without-loss-of-generality assumptions and introduce some new notation before proving the correctness of Algorithm~\ref{alg:faster-query}.

Fix a cut query $S \subset V$.
Recall that $(V_{ijk})_k$ is an expander partitioning of $G_{ij}$ given by Lemma~\ref{lem:expander-p}.
Without loss of generality, we can assume $|V_{ijk} \cap S| \le |V_{ijk} \cap \overline{S}|$ for every $V_{ijk}$. Otherwise, Algorithm~\ref{alg:faster-query} works with $V_{ijk} \cap \overline{S}$ and queries for the incoming edges instead.
Under this assumption, we always have $S_{ijk} = V_{ijk} \cap S$.
We use $\overline{S_{ijk}}$ to denote $V_{ijk}\setminus S_{ijk}$.

We further assume without loss generality that $J_S = 0$, i.e., there are no low-degree vertices in every $V_{ijk}$.
This is because $J_S$ is unbiased and deterministic, and we want to prove
\[
\Var{I_S} \le O(\eps^2) \cdot w(S,\overline S)^2 = O(\eps^2) \cdot(\Ex{I_S} + J_S)^2.
\]
This is harder to prove if we set $J_S = 0$.
Under this assumption, $E_{ijk} = E_{ij}(V_{ijk}, V_{ijk})$.

\begin{lemma}[Correctness of Algorithm~\ref{alg:faster-query}]
\label{lem:faster-var}
Let $\sk(G)$ be the output of Algorithm~\ref{alg:faster-store}.
Fix a cut query $S \subseteq V$. With probability at least $2/3$, the estimate $(I_S+J_S)$ returned by Algorithm~\ref{alg:faster-query} on input $(S, \sk(G))$ satisfies
\[
\abs{\left(I_S + J_S\right) - w(S, \overline S)} \le O(\eps) \cdot w(S, \overline S).
\]
\end{lemma}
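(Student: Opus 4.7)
My plan is to mirror the proof of Lemma~\ref{lem:var-s}, with two principal modifications to accommodate the new sketching algorithm: an additional level of summation over $j$ (the iterative expander decompositions of each weight class $G_i$), and the replacement of the $\lambda$-sparse-cut guarantee by the conductance guarantee of Lemma~\ref{lem:expander-p}. Just as in Lemma~\ref{lem:var-s}, the per-vertex estimators $I_{S_{ijk}}(u)$ are averages of uniform samples, so summing over $u \in S_{ijk}$, then over $k$, $j$, $i$, and adding the deterministic term $J_S$ (which exactly accounts for the removed low-degree edges crossing $S$) gives $\Ex{I_S} + J_S = w(S, \overline S)$. The paper's stated WLOG assumption $J_S = 0$ is valid because zeroing out those edges only shrinks the target $w(S, \overline S)^2$, and it reduces us to the hardest case in which every remaining vertex of every $V_{ijk}$ has undirected (multigraph) degree at least $\alpha$.

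The per-expander variance bound carries over verbatim from Lemma~\ref{lem:var-s}:
\[
\Var{I_{V_{ijk}}} \le \frac{2^{2i}}{\alpha}\,|S_{ijk}|\,X_{ijk},
\]
where $X_{ijk} = |E_i(S_{ijk}, \overline{S_{ijk}})|$. The key new step is upper bounding $|S_{ijk}|$ in terms of the cut: since every vertex in $V_{ijk}$ has undirected degree at least $\alpha$ and $|S_{ijk}| \le |\overline{S_{ijk}}|$ by the algorithm's choice, both $\mathrm{vol}(S_{ijk})$ and $\mathrm{vol}(\overline{S_{ijk}})$ are at least $\alpha\,|S_{ijk}|$. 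Combining with the conductance bound $\phi(G_{ij}[V_{ijk}]) = \Omega(1/\log^3 n)$ from Lemma~\ref{lem:expander-p} yields $X_{ijk} + \overline{X_{ijk}} \ge \phi \cdot \min(\mathrm{vol}(S_{ijk}), \mathrm{vol}(\overline{S_{ijk}})) \ge \phi\,\alpha\,|S_{ijk}|$, so $|S_{ijk}| \le (X_{ijk} + \overline{X_{ijk}})/(\phi\,\alpha)$, and substitution gives $\Var{I_{V_{ijk}}} \le \frac{2^{2i}}{\phi\,\alpha^2}(X_{ijk} + \overline{X_{ijk}})\,X_{ijk}$.

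The aggregation then proceeds exactly as in Lemma~\ref{lem:var-s}, using $\sum_k a_k b_k \le (\sum_k a_k)(\sum_k b_k)$ at each level of the hierarchy. Summing over the disjoint expanders $V_{ijk}$ gives $\Var{I_{G_{ij}}} \le \frac{2^{2i}}{\phi\,\alpha^2}(X_{ij} + \overline{X_{ij}})\,X_{ij}$, and summing over $j$ is legal because the edge sets at different iterations are disjoint (the algorithm copies only cross-cluster edges to $G_{i,j+1}$), so $\sum_j X_{ij} = X_i$ and we obtain $\Var{I_{G_i}} \le \frac{2^{2i}}{\phi\,\alpha^2}(X_i + \overline{X_i})\,X_i$. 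Summing over the weight classes $i$ and invoking the same Cauchy--Schwarz-style bound as in Lemma~\ref{lem:var-s}, together with $w_e \ge 2^{i-1}$ and the $\beta$-balance inequality $w(\overline S, S) \le \beta\, w(S, \overline S)$, gives $\Var{I_S} \le O(\beta/(\phi\,\alpha^2))\,w(S, \overline S)^2$. Plugging in $\phi = \Omega(1/\log^3 n)$ and $\alpha = \beta^{1/2}\log^{3/2} n/\eps$ yields $\Var{I_S} \le O(\eps^2)\,w(S, \overline S)^2$, and Chebyshev's inequality closes the proof at probability $2/3$. The main technical subtlety I anticipate is the reconciliation between the algorithm's cardinality-based choice of $S_{ijk}$ and the volume-based conductance bound, which is precisely what the degree threshold from Step~\ref{alg:extra-step}(b) of Algorithm~\ref{alg:faster-store} is designed to enable.
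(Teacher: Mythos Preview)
Your proposal is correct and follows essentially the same approach as the paper's proof. The paper likewise reduces to the $J_S=0$ case, derives the per-expander bound $\Var{I_{V_{ijk}}} \le \tfrac{2^{2i}}{\alpha}\,|S_{ijk}|\,X_{ijk}$, combines the degree-threshold inequality $\alpha\,|S_{ijk}| \le \mathrm{vol}(S_{ijk})$ (and the analogous bound for $\overline{S_{ijk}}$) with the conductance guarantee to obtain $|S_{ijk}| \le O(\log^3 n)(X_{ijk}+\overline{X_{ijk}})/\alpha$, and then aggregates over $k,j,i$ exactly as you describe before applying Chebyshev.
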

\begin{proof}
We will show that $I_S$ is an unbiased estimator and
\[
\Var{I_S} \le O(\eps^2) \cdot w(S, \overline S).
\]
The lemma then follows from Chebyshev's inequality.

By our assumption we always work with outgoing edges.
Recall that our estimator is
\[
I_S = \sum_i \sum_j \sum_k \sum_{u \in S_{ijk}} I_{S_{ijk}}(u),
\]
where
\[
I_{S_{ijk}}(u) = \frac{d^{out}_{ijk}(u)}{\alpha}\sum_{q=1}^\alpha \chi_{ijk}(u,q) w_{ijk}(u,q).
\]

We first show that $I_S$ is unbiased.
Every edge $e \in E$ of $G$ is counted exactly once in some $V_{ijk}$;
more specifically, every edge $e$ belongs to exactly one $G_i$, where it is sketched in some expander $V_{ijk}$ in some $G_{ij}$ and then removed.
Therefore, it is sufficient to prove that $I_{S_{ijk}}(u)$ is unbiased.
At $u \in V_{ijk}$, there are $d^{out}_{ijk}(u)$ edges and we sample $\alpha$ edges independently, so if we scale the edge weights by $(d^{out}_{ijk} / \alpha)$ and sum over the sampled edges that cross $S$, we get an unbiased estimator. 


Recall that $G_{ij}[V_{ijk}]$ has conductance $\Omega(\frac{1}{\log^3 n})$ and $S_{ijk} = V_{ijk} \cap S$.
Let $X_{ijk} = |E_{ij}(S_{ijk}, \overline{S_{ijk}})|$ denote the number of edges that go from $S_{ijk}$ to $\overline{S_{ijk}}$ in $G_{ij}$, and $\overline{X_{ijk}}$ the number of edges in the reverse direction.
Let $X_{ij} = \sum_k X_{ijk}$ and $X_i = \sum_j X_{ij}$.
Observe that $X_i$ is precisely the total number of edges that goes from $S$ to $\overline{S}$ in $G_i$ because every edge appears exactly once in some $V_{ijk}$.

Our proof here shares the same structure as the proof of Lemma~\ref{lem:var-s}.
We first outline some of the most significant changes compared to our previous proof.

\medskip
{\noindent \bf Storing All Edges of Low-Degree Vertices.} In the induced subgraph $G_{ij}[V_{ijk}]$, if the total incoming and outgoing degree of a vertex $u \in V_{ijk}$ is at most $\alpha$, then we store all these edges.
These edges form the deterministic estimator $J_S$.

Recall that without loss of generality, we can assume $J_S = 0$, that is, there are no low-degree in any $V_{ijk}$.
Thus, for any $u \in V_{ijk}$, we have
\[
d^{out}_{ijk}(u) + 
d^{in}_{ijk}(u) \ge \alpha.
\]
Summing over all vertices $u \in S_{ijk}$ (or $\overline{S_{ijk}}$), we have
\begin{equation}
\alpha \cdot |S_{ijk}| \le |E(S_{ijk}, V_{ijk})| + 
|E(V_{ijk}, S_{ijk})|, \quad \alpha \cdot |\overline{S_{ijk}}| \le |E(\overline{S_{ijk}}, V_{ijk})| + 
|E(V_{ijk}, \overline{S_{ijk}})|
\label{eqn:sv-vs-ge}
\end{equation}

\medskip
{\noindent \bf Using Conductance Rather Than Sparse Cuts.} By the definition the conductance (see Equation~\ref{eqn:conductance}), because the conductance of the undirected unweighted version of $G_{ij}[V_{ijk}]$ is $\Omega(1/\log^3 n)$, for any disjoint partition $(S_{ijk}, \overline{S_{ijk}})$ of $V_{ijk}$,
\[
\frac{ X_{ijk} + \overline{X_{ijk}} }{
  \min\left(|E_{ij}(S_{ijk},S_{ijk})|, |E_{ij}(\overline{S_{ijk}},\overline{S_{ijk}})|\right)
    + X_{ijk} + \overline{X_{ijk}} }
\ge \phi(G_{ij}[V_{ijk}]) = \Omega\left(\frac{1}{\log^3 n}\right).
\]
Consequently, because $|E_{ij}(S_{ijk}, V_{ijk})| = |E_{ij}(S_{ijk}, S_{ijk})| + X_{ijk}$ and so on, we have
\begin{align}
\begin{split}
&\quad \min\left(|E_{ij}(S_{ijk}, V_{ijk})| + 
|E_{ij}(V_{ijk}, S_{ijk})|,
|E_{ij}(\overline{S_{ijk}}, V_{ijk})| + 
|E_{ij}(V_{ijk}, \overline{S_{ijk}})|\right) \\
&= 2\min\left(|E_{ij}(S_{ijk},S_{ijk})|, |E_{ij}(\overline{S_{ijk}},\overline{S_{ijk}})|\right) + X_{ijk} + \overline{X_{ijk}} \\
&\le O(\log^3 n) \cdot (X_{ijk} + \overline{X_{ijk}}).
\end{split}
\label{eqn:sv-vs-le}
\end{align}
Recall that we can assume $|S_{ijk}| \le |\overline{S_{ijk}}|$.
Combining Inequalities~\eqref{eqn:sv-vs-le} and~\eqref{eqn:sv-vs-ge}, we have
\begin{align}
\begin{split}
\alpha \cdot |S_{ijk}|
&= \alpha \cdot \min(|S_{ijk}|, |\overline{S_{ijk}}|) \\
&\le \min\left(|E_{ij}(S_{ijk}, V_{ijk})| + 
|E_{ij}(V_{ijk}, S_{ijk})|,
|E_{ij}(\overline{S_{ijk}}, V_{ijk})| + 
|E_{ij}(V_{ijk}, \overline{S_{ijk}})|\right) \\
&\le O(\log^3 n) \cdot (X_{ijk} + \overline{X_{ijk}}).
\end{split}
\label{eqn:sijk-le}
\end{align}

We will use Inequality~\eqref{eqn:sijk-le} to relate the variance of $I_{S_{ijk}} = \sum_{u\in S_{ijk}}I_{S_{ijk}}(u)$ with $\left(X_{ijk} + \overline{X_{ijk}}\right)$.
Consequently, the variance of $I_S$ is related to $X_i = \sum_{j} \sum_{k} X_{ijk}$.

\medskip
{\noindent \bf Upper Bounding the Variance.} We now formally upper bound the variance of $I_S$.

Fix any $u \in S_{ijk}$.
We first upper bound $\Var{I_{S_{ijk}}(u)}$.
By definition, $\chi_{ijk}(u,q)=1$ if the $q$-th sampled edge leaving $u$ goes to $\overline S$ (and $\chi_{ijk}(u,q)=0$ otherwise), so
\[
\Var{\chi_{ijk}(u,q)} = \frac{\abs{E_{ij}(u, \overline{S_{ij}})}}{d^{out}_{ijk}(u)} \cdot \frac{\abs{E_{ij}(u, S_{ijk})}}{d^{out}_{ijk}(u)}.
\]
Now from the definition of $I_{S_{ijk}}(u)$, we have
\begin{align*}
\Var{I_{S_{ijk}}(u)} &= \frac{d^{out}_{ijk}(u)^2}{\alpha^2}\sum_{q=1}^\alpha \Var{\chi_{ijk}(u,q)} w_{ijk}(u,q)^2 \\
&\leq \frac{d^{out}_{ijk}(u)^2}{\alpha^2} \cdot \alpha \cdot \frac{\abs{E_{ij}(u, \overline{S_{ijk}})}}{d^{out}_{ijk}(u)} \cdot \frac{\abs{E_{ij}(u, S_{ijk})}}{d^{out}_{ijk}(u)} \cdot 2^{2i} \tag{$w_e \le 2^i$ for all $e \in E_{ij}$} \\
&= \frac{2^{2i}}{\alpha} \abs{E_{ij}(u, \overline{S_{ijk}})} \cdot \abs{E_{ij}(u, S_{ijk})} \\
&\le \frac{2^{2i}}{\alpha} \abs{E_{ij}(u,\overline{S_{ijk}})}\cdot\abs{S_{ijk}}. \tag{$\,\abs{E(u,S_{ijk})} \le |S_{ijk}|$}
\end{align*}
Let $I_{V_{ijk}} = \sum_{u \in S_{ijk}} I_{S_{ijk}}(u)$.
Summing across every vertex $u \in S_{ijk}$, we get
\begin{align*}
\Var{I_{V_{ijk}}}
= \sum_{u \in S_{ijk}}\Var{I_{S_{ijk}}(u)} 
&\leq \sum_{u \in S_{ijk}} \frac{2^{2i}}{\alpha} \abs{S_{ijk}}\cdot\abs{E(u, \overline{S_{ijk}})} \\
&= \frac{2^{2i}}{\alpha} \abs{S_{ijk}}\cdot X_{ijk} \tag{$X_{ijk} = \abs{E(S_{ijk}, \overline{S_{ijk}})}$} \\
&\leq O\left(\frac{2^{2i} \log^3 n}{\alpha^2}\right) \left(X_{ijk} + \overline{X_{ijk}}\right) \cdot X_{ijk}. \qquad \quad \tag{Inequality~\eqref{eqn:sijk-le}}
\end{align*}

The rest of the proof is almost identical to the proof of Lemma~\ref{lem:var-s}, so we omit some details.

We sum over every expander $V_{ijk}$ in $G_{ij}$, then over the graphs $G_{ij}$ obtained from recursive expander partitioning, and finally over the weight classes indexed by $i$.
Let $I_{S} = \sum_{i} \sum_{j} \sum_{k} I_{V_{ijk}}$.
Using the fact that $X_{ij} = \sum_{k} X_{ijk}$ and $X_i = \sum_{j} X_{ij}$ (and similarly for $\overline{X_{ij}}$ and $\overline{X_i}$), we have
\begin{align*}
\Var{I_S}
  &\le O\left(\frac{2^{2i} \log^3 n}{\alpha^2}\right) \sum_i \left(X_i + \overline{X_i} \right) \cdot X_i \\
  &\le O\left(\frac{\log^3 n}{\alpha^2}\right) \left[w(S, \overline S)^2 + w(\overline S, S) \cdot w(S, \overline S) \right] \tag{$w_e \geq 2^{i-1}$ for all $e \in E_i$} \\
  &\le O\left(\frac{\log^3 n}{\alpha^2}\right) \left[w(S, \overline S)^2 + \beta w(S, \overline S)^2 \right] \tag{$ w(\overline S, S) \le \beta w(S, \overline S)$} \\
  &= O\left(\frac{\beta \log^3 n}{\alpha^2}\right) w(S, \overline S)^2. 
\end{align*}
As we choose $\alpha = \frac{\beta^{1/2} \ln^{3/2} n}{\eps}$ in Algorithm~\ref{alg:faster-store}, we have $\Var{I_S} \le O(\eps^2) \cdot w(S,\overline S)^2$ as needed.
\end{proof}

\section{For-Each Cut Sketch: $\Omega(n\cdot \sqrt{\beta/\epsilon})$ Lower Bound} \label{sec:for-each-lb}


In this section, we prove that the size of for-each cut sketches must scale with $\sqrt{\beta}$.

\begin{theorem}
\label{thm:anycut-lb}
Fix $\beta \ge 1$ and $0 < \eps < 1$ with $\left(\beta/\eps\right)^{1/2} \le \frac{n}{2}$.
A $(1\pm\eps)$ for-each cut sketching algorithm for $n$-node $\beta$-balanced graphs must output $\Omega(n \cdot (\beta / \eps)^{1/2})$ bits in the worst case.
\end{theorem}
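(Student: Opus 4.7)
The plan is to encode $\Omega(n\sqrt{\beta/\eps})$ independent random bits into a $\beta$-balanced digraph so that each bit can be read off by a single \emph{predetermined} cut query, and then invoke an information-theoretic argument (Fano plus independence) to conclude that any for-each sketch must use at least that many bits.

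\paragraph*{Step 1: Hard distribution.}
Let $k = \lceil \sqrt{\beta/\eps}\rceil$, which by hypothesis is at most $n/2$. Partition the vertex set into $t = \lfloor n/(2k)\rfloor = \Omega(n/\sqrt{\beta/\eps})$ disjoint clusters, each written as $L_i\cup R_i$ with $|L_i|=|R_i|=k$ and vertices $L_i = \{u_1,\dots,u_k\}$, $R_i=\{v_1,\dots,v_k\}$. In every cluster, install a fixed perfect matching $u_p\to v_p$ with edge weight $W := 1/(\eps k)$, and include each of the $k^2$ possible $R_i\to L_i$ edges independently with probability $1/2$ (unit weight). The cluster sends total weight $kW = 1/\eps$ from $L_i$ to $R_i$ and at most $k^2 = \beta/\eps$ weight back, so the ratio is at most $\beta$; an edge-count analysis analogous to Lemma~\ref{lem:rand-balanced} shows every bipartition within a cluster is $O(\beta)$-balanced whp. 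Add a single directed cycle through one representative per cluster with per-edge weight $n^{-10}$ so the entire graph is strongly connected; this perturbation affects no cut value or balance computation below.

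\paragraph*{Step 2: Bit-isolating cuts.}
For each triple $(i,j,\ell)$ with $j\neq\ell$, define
\[
S_{i,j,\ell} \;=\; \{v_j\}\;\cup\;(L_i\setminus\{u_\ell\}).
\]
A direct edge accounting shows that the only edges leaving $S_{i,j,\ell}$ within cluster $i$ are (a) the $k-2$ matching edges $u_p\to v_p$ for $p\in[k]\setminus\{j,\ell\}$, contributing total weight $(k-2)W$, and (b) the single $R_i\to L_i$ edge $v_j\to u_\ell$ whose presence is the random bit $b^{(i)}_{j,\ell}$. Hence
\[
w(S_{i,j,\ell},\overline{S_{i,j,\ell}}) \;=\; (k-2)\,W + b^{(i)}_{j,\ell} + o(1),
\]
where the $o(1)$ comes from the negligible connecting cycle. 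The total cut value is $\Theta(1/\eps)$, so flipping $b^{(i)}_{j,\ell}$ changes the value by a $\Theta(\eps)$ fraction. A $(1\pm c\eps)$-approximation with $c$ a sufficiently small constant therefore determines $b^{(i)}_{j,\ell}$ unambiguously. Crucially, $S_{i,j,\ell}$ depends only on the index $(i,j,\ell)$ and not on any random bits.

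\paragraph*{Step 3: Information-theoretic bound.}
Let $B$ denote the concatenation of all $N = t\cdot k(k-1) = \Omega(n\sqrt{\beta/\eps})$ random bits. Suppose a for-each sketching algorithm outputs $\sk(G)$; by the for-each guarantee applied to cut $S_{i,j,\ell}$, the recovery procedure produces an estimator $\hat b^{(i)}_{j,\ell}$ satisfying $\Pr[\hat b^{(i)}_{j,\ell}=b^{(i)}_{j,\ell}]\ge 2/3$, where the probability is over the sketch's randomness. By the data-processing inequality and Fano's inequality, for each index
\[
H\bigl(b^{(i)}_{j,\ell}\,\bigm|\,\sk(G)\bigr)\;\le\;H(1/3)\;<\;1.
\]
Since the $N$ bits of $B$ are mutually independent, $H(B)=N$, and subadditivity of conditional entropy yields
\[
H(B\mid \sk(G)) \;\le\; \sum_{i,j,\ell} H(b^{(i)}_{j,\ell}\mid \sk(G)) \;\le\; N\cdot H(1/3),
\]
so $I(B;\sk(G)) \ge N\,(1-H(1/3)) = \Omega(N)$. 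Consequently $\mathbb{E}[|\sk(G)|]\ge H(\sk(G))\ge I(B;\sk(G))=\Omega(n\sqrt{\beta/\eps})$, and the worst-case sketch length is at least this, proving Theorem~\ref{thm:anycut-lb}.

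\paragraph*{Main obstacle.}
The delicate part is Step~2: designing a cut whose weight decomposes cleanly as a large deterministic scaffold plus exactly one random bit, while simultaneously satisfying $\beta$-balance and encoding $\Omega(k^2)$ bits per cluster. The heavy matching of weight $W=1/(\eps k)$ plays two roles at once. It caps the balance ratio inside a cluster at $\Theta(k^2/(kW)) = \Theta(\eps k^2) = \Theta(\beta)$, and it calibrates the cut value of $S_{i,j,\ell}$ to $\Theta(1/\eps)$ so that the single-bit contribution is precisely a $\Theta(\eps)$ fraction. The choice $k=\sqrt{\beta/\eps}$ is exactly the sweet spot: larger $k$ violates $\beta$-balance, smaller $k$ wastes bits per cluster.
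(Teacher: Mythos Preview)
Your proof is essentially correct, but the construction is different from the paper's. The paper arranges clusters of size $k=\sqrt{\beta/\eps}$ in a \emph{chain}: it puts a complete bipartite digraph from $V_i$ to $V_{i+1}$ with edge weights in $\{1,2\}$ encoding the bits, and adds a cycle of weight $1/\eps$ on $V_i\cup V_{i+1}$ for connectivity; the bit-isolating cut is $\{u\}\cup(V_{i+1}\setminus\{v\})\cup V_{i+2}\cup\cdots\cup V_t$. Your construction instead uses disjoint bipartite clusters with a heavy matching $L_i\to R_i$ of weight $1/(\eps k)$ and random unit-weight edges $R_i\to L_i$ carrying the bits, linked by a negligible cycle; your bit-isolating cut is local to one cluster. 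Both encode $\Theta(nk)$ bits into a cut of value $\Theta(1/\eps)$ so that one bit contributes a $\Theta(\eps)$ fraction, and both conclude via the same Fano/index argument (the paper states it as Lemma~\ref{lem:cc-everybit}).

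Two points in your write-up deserve more care. First, your balance guarantee is only \emph{with high probability} over the random bits, whereas the paper's is deterministic (the $1/\eps$-weight cycle alone forces every cut in $G_i$ to have weight at least $1/\eps$ in each direction, while the bipartite edges contribute at most $2k^2=2\beta/\eps$). In your argument you must either condition on the balance event (losing only $o(N)$ bits of entropy since it has probability $1-o(1)$) or observe that the sketch's arbitrary behavior on the $o(1)$ unbalanced fraction perturbs the per-bit success probability by $o(1)$, which Fano tolerates; you should say which route you take. Second, the appeal to ``an edge-count analysis analogous to Lemma~\ref{lem:rand-balanced}'' is not quite a black box: that lemma is for a unit-weight matching, while your matching has weight $W=1/(\eps k)$, so the case analysis (singleton cuts, $A$ or $B$ small, both medium) must be redone with the new weight to verify that the worst ratio is indeed $\Theta(\eps k^2)=\Theta(\beta)$. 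This is routine but should be spelled out. The paper's deterministic construction sidesteps both issues, which is its main advantage; your construction has the virtue of reusing the for-all lower-bound template and keeping the bit-isolating cut entirely local to one cluster.
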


To prove this, we will need the following folklore result from communication complexity:
\begin{lemma}
\label{lem:cc-everybit}
Given a bit string $s \in \{0, 1\}^N$, if there is a data structure $D$ that allows one to recover each bit of $s$ with marginal probability at least $2/3$, then $D$ must use $\Omega(N)$ bits.
\end{lemma}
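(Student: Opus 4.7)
The plan is a standard Fano-inequality argument against a uniformly random planted string. Let $s = s_1 s_2 \cdots s_N$ be drawn uniformly from $\{0,1\}^N$, so $s_1, \ldots, s_N$ are i.i.d.\ unbiased bits. Let $D$ denote the (possibly randomized) output of the data structure on input $s$, and for each $i$ let $\hat s_i$ denote the recovery algorithm's guess for $s_i$ given $D$. The hypothesis gives $\Pr[\hat s_i = s_i] \ge 2/3$ for every $i$, where the probability is taken over both $s$ and any internal randomness.

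First I would bound the mutual information of each individual bit with $D$. Since $\hat s_i$ is a (randomized) function of $D$, Fano's inequality yields $H(s_i \mid D) \le h(1/3) < 1$, where $h(\cdot)$ is the binary entropy function. Because $H(s_i) = 1$, this gives $I(s_i; D) \ge 1 - h(1/3) =: c > 0$, a positive absolute constant.

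Next I would aggregate across bits, exploiting independence. Since the $s_i$ are mutually independent, $H(s) = \sum_i H(s_i) = N$; and since conditioning reduces entropy, $H(s \mid D) \le \sum_i H(s_i \mid D)$. Combining,
\[
I(s; D) \;=\; H(s) - H(s \mid D) \;\ge\; \sum_{i=1}^{N}\bigl(H(s_i) - H(s_i \mid D)\bigr) \;=\; \sum_{i=1}^{N} I(s_i; D) \;\ge\; c N.
\]
On the other hand, $I(s; D) \le H(D)$, and $H(D)$ is bounded by $\log_2$ of the number of possible outputs of the data structure, hence by the number of bits used to represent $D$. Chaining these inequalities yields $|D| \ge c N = \Omega(N)$, which is the desired bound.

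The only subtlety is handling the randomness of $D$ and of the recovery procedure cleanly, but this is minor: one option is to treat $D$ together with the recovery guesses $\hat s_i$ as jointly distributed random variables with $s$ (so all randomness is absorbed into the joint distribution) and apply the entropy inequalities directly; alternatively, a standard averaging argument over the internal coins of $D$ fixes them and reduces to a deterministic data structure whose per-bit success probability is still at least $2/3$ on a uniformly random $s$, after which the same Fano argument applies verbatim.
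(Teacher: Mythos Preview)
Your argument is correct and is the standard information-theoretic proof of this folklore fact. The paper itself does not supply a proof of this lemma at all: it simply states it as ``a folklore result from communication complexity'' and then invokes it as a black box in the for-each lower-bound constructions. So there is nothing in the paper to compare your approach against; you have filled in exactly the kind of Fano/INDEX-style argument that underlies the folklore claim.
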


We first prove a special case of our lower bound for specific values of $\beta = \Theta(n^2)$ and $\eps = \Theta(1)$ (Lemma~\ref{lem:anycut-lb-simple}).
The proof for this special case is easier to explain and it contains the key ingredients of our construction for the general lower bound.

\begin{lemma}
\label{lem:anycut-lb-simple}
For $\beta = n^2$ and $\eps = \frac{1}{10}$, any $(1\pm\eps)$ for-each cut sketching algorithm for $n$-node $\beta$-balanced graphs must output $\Omega(n \cdot \beta^{1/2})$ bits in the worst case.
\end{lemma}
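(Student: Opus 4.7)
\textbf{Proof proposal for Lemma~\ref{lem:anycut-lb-simple}.}

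The plan is to apply Lemma~\ref{lem:cc-everybit}: I will construct a family $\{G_s\}_{s \in \{0,1\}^N}$ of $n$-vertex $n^2$-balanced graphs with $N = \Omega(n^2) = \Omega(n\sqrt{\beta})$, such that every bit $s_{ij}$ of $s$ can be read off from a single cut value of $G_s$ up to $(1\pm\eps)$-approximation for $\eps = 1/10$. Given such an encoding, running any $(1\pm\eps)$ for-each cut sketching algorithm on $G_s$ and querying the designated cuts recovers each $s_{ij}$ with marginal probability at least $2/3$, so Lemma~\ref{lem:cc-everybit} forces the sketch to use $\Omega(N) = \Omega(n\sqrt{\beta})$ bits.

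The construction: partition the vertices into $L = \{u_1, \ldots, u_{n/2}\}$ and $R = \{v_1, \ldots, v_{n/2}\}$. Every graph $G_s$ contains two shared backbones, namely the unit-weight matching $\{(u_k, v_k) : k \in [n/2]\}$ from $L$ to $R$ and the unit-weight cycle $\{(v_k, u_{k+1 \bmod n/2}) : k \in [n/2]\}$ from $R$ to $L$, which together guarantee strong connectivity along the Hamiltonian cycle $u_1 \to v_1 \to u_2 \to v_2 \to \cdots \to u_{n/2} \to v_{n/2} \to u_1$. For each pair $(i,j)$ with $i \ne j$ and $i \ne j+1 \pmod{n/2}$ -- a set of $N = (n/2)^2 - n = \Omega(n^2)$ ``free'' indices -- the bit $s_{ij}$ controls whether the edge $(v_j, u_i)$ is present with weight $n$.

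To recover $s_{ij}$, I would query the cut $S_{ij} = \{v_j\} \cup (L \setminus \{u_i\})$. A direct enumeration of outgoing edges shows that the only edges crossing $S_{ij}$ are (i) the matching edges $(u_k, v_k)$ with $k \notin \{i, j\}$, contributing $n/2 - 2$ in total, and (ii) the single variable edge $(v_j, u_i)$ contributing $n \cdot s_{ij}$; the cycle edge $(v_j, u_{j+1})$ stays inside $S_{ij}$ since $j+1 \ne i$ places $u_{j+1} \in L \setminus \{u_i\}$. Hence $w_{G_s}(S_{ij}, \overline{S_{ij}}) = (n/2 - 2) + n \cdot s_{ij}$, which is $\approx n/2$ when $s_{ij}=0$ and $\approx 3n/2$ when $s_{ij}=1$; the corresponding $(1\pm 1/10)$-approximation intervals $[0.45n, 0.55n]$ and $[1.35n, 1.65n]$ are disjoint, so thresholding the sketch's answer at $n$ recovers $s_{ij}$.

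The main obstacle is verifying that every $G_s$ is $n^2$-balanced. The tightest constraints come from the singleton cuts: for $S = \{u_i\}$ one has $w(S,\overline S) = 1$ (the single outgoing matching edge) while $w(\overline S, S) \le 1 + n \cdot (n/2 - 2) < n^2$, since the column sum of free bits at index $i$ is at most $n/2 - 2$ and there is exactly one incoming cycle edge; the case $S = \{v_j\}$ is symmetric using row sums. For a general cut $S = A \cup B$ with $A \subseteq L$, $B \subseteq R$, the variable-edge contributions in the two directions are bounded by $n|B||A^c|$ and $n|B^c||A|$ respectively, while the matching and cycle backbones contribute at least $|A \cap B^c|$ and $|A^c \cap B|$ (plus cycle edges proportional to $|A|$ or $|B|$) in the two directions, and a short case analysis (splitting on whether $|A|, |B|$ are small, moderate, or large) confirms the ratio never exceeds $n^2$. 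Combining this $\beta$-balance check with the recovery argument above and Lemma~\ref{lem:cc-everybit} yields the desired $\Omega(n\sqrt{\beta}) = \Omega(n^2)$ lower bound.
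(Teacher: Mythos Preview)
Your decoding argument is clean and correct, but the balance claim is false, and this is not a detail you can patch with a ``short case analysis''. Take $n\ge 32$, let $A'=B'=\{1,\dots,n/4\}$ (as index sets in $[n/2]$), and set $S=\{u_k,v_k:k\in A'\}$. Choose the bit string $s$ so that $s_{ij}=1$ exactly when $i\in A'$, $j\notin B'$, and $(i,j)$ is free. Then no matching edge crosses $S$ in either direction (since $A'=B'$), exactly one cycle edge leaves and one enters, and by design no variable edge leaves $S$ while roughly $(n/4)^2$ variable edges of weight $n$ enter. Hence $w(S,\overline S)=1$ but $w(\overline S,S)=1+n\bigl((n/4)^2-1\bigr)=\Theta(n^3)$, so $G_s$ has balance $\Theta(n^3)\gg n^2=\beta$. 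Your singleton analysis is fine precisely because each $u_i$ (or $v_j$) sees at most $n/2-2$ heavy edges; the problem is that once $|A'|,|B'|$ grow to $\Theta(n)$ the heavy edges aggregate to $\Theta(n^3)$ weight while the unit-weight backbone still contributes only $O(1)$ when $A'=B'$.

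The underlying issue is structural: you encode bits in the \emph{presence} of weight-$n$ edges, so the total edge weight can reach $\Theta(n^3)$, yet your backbone only guarantees a single unit-weight edge across every cut. Shrinking the variable-edge weight $w$ does not help, since distinguishing $s_{ij}=0$ from $s_{ij}=1$ under a $(1\pm 1/10)$-approximation of a cut of size $\approx n/2$ forces $w=\Omega(n)$, while the balance bound requires $w\cdot (n/4)^2\le n^2$, i.e.\ $w=O(1)$. The paper sidesteps this by encoding bits in the \emph{weight} of always-present bipartite edges (weight $1$ versus $2$), so every edge weight lies in $[1,2]$, the total weight is at most $n^2$, and $\beta$-balance follows trivially from the unit-weight cycle giving every cut value at least $1$.
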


\begin{wrapfigure}{r}{0.35\textwidth}
\centering
\includegraphics[width=.26\textwidth]{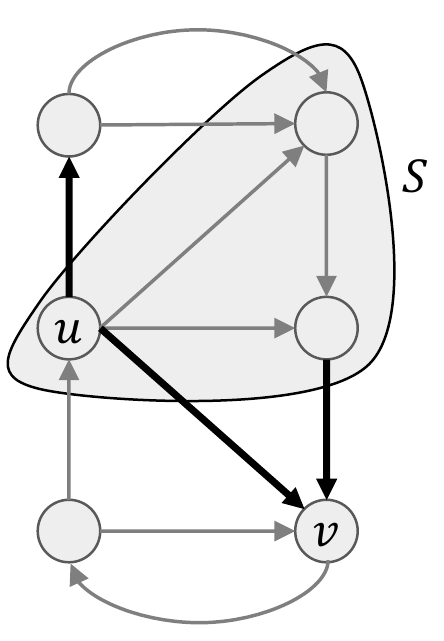}
\caption{In this example, the cut value $w(S,\overline S)$ is 2 or 3, depending on the value of $w(u,v)$. Thus, a $(1\pm 0.1)$-approximation to $w(S,\overline S)$ allows us to decode the bit in $s$ corresponding to edge $(u,v)$. (For readability we omit other bipartite edges from $L$ to $R$.)}
\label{fig:sqrt-lb}
\vspace{-0.5cm}
\end{wrapfigure}

\begin{proof}
At a high level, we will encode a bit string $s$ of length $\Omega(n^2)$ into an $n$-node $\beta$-balanced graph, such that given a $(1 \pm \eps)$ for-each cut sketch, we can recover each bit of $s$ with high constant probability.
Then by Lemma~\ref{lem:cc-everybit}, the cut sketch must have use $\Omega(|s|) = \Omega(n^2) = \Omega(n \beta^{1/2})$ bits.

Given a bit string $s$ of length $\frac{n^2}{4}$, we construct a graph as follows.
We start with an $\frac{n}{2} \times \frac{n}{2}$ complete bipartite digraph where edges go from left to right.
We set the weight of the $i$-th bipartite edge to $s_i + 1$ (so either $1$ or $2$).
We add a unit-weight cycle that leaves each side exactly once.
See Fig.~\ref{fig:sqrt-lb} for an example of our construction.

We first show that the graph is $\beta$-balanced for $\beta = n^2$.
The graph is strongly connected because it contains a cycle.
Note that all edge weights are in $[1, 2]$ and there are in total $\frac{n^2}{4} + n \le \frac{n^2}{2}$ edges in the graph.
Therefore, for every non-empty set $S \subset V$, the total weight of edges leaving (or entering) $S$ is at least $1$ and at most $n^2$, so the graph is $(n^2)$-balanced.

It remains to show that we can recover each bit of $s$ from a $(1\pm\eps)$ cut sketch.
Let $L$ denote the left vertices and $R$ the right vertices.
Fix any coordinate of $s$ and suppose it corresponds to the edge $(u, v)$ for some $u \in L$ and $v \in R$.
To recover this bit of $s$, we need to decide whether $w(u,v)$ is $1$ or $2$.
Consider the cut value leaving $S = \{u\} \cup R \setminus \{v\}$.
The cycle contributes a fixed amount to this cut (independent of the weights of the bipartite edges), which is at most $3$.
More importantly, $(u, v)$ is the only bipartite edge leaving $S$.
Since $\eps = \frac{1}{10}$ and the sketch returns this cut value within a factor of $(1 \pm \eps)$ with probability at least $2/3$, we can recover $w(u,v)$ with probability at least $2/3$.
\end{proof}

Our lower bound construction for general values of $\beta$ and $\eps$ builds on the one in the proof of Lemma~\ref{lem:anycut-lb-simple}.
At a high level, instead of using a bipartite graph with two clusters, we will use multiple clusters where the size of each cluster depends on $\beta$ and $\eps$.

\begin{proof}[Proof of Theorem~\ref{thm:anycut-lb}]
Let $k = \sqrt{\beta/\eps}$.
We will encode a bit string $s$ of length $\Omega(nk)$ into an $n$-node graph $G$ such that (1) $G$ is $(3 \beta$)-balanced, and (2) we can recover each bit of $s$ with high constant probability given a $(1 \pm c\cdot \eps)$ for-each cut sketch of $G$ where $c = 10^{-2}$.
By Lemma~\ref{lem:cc-everybit}, the cut sketch must have at least $\Omega(n k) = \Omega(n \cdot (\beta / \eps)^{1/2})$ bits.

Without loss of generality, we assume $k$ is an integer and $n$ is a multiple of $k$. We partition $n$ vertices into $t = n/k$ clusters of size $k$, which we denote by $V_1, \ldots, V_t$. Since we assume $n \ge 2 k$, there are at least two clusters.

Let $s$ be a bit string of length $k^2 (t-1) = \Omega(n k)$.
We partition $s$ into $(t-1)$ blocks where each block has length $k^2$.
We encode the $i$-th block of $s$ in a $k \times k$ complete bipartite digraph where edges go from $V_i$ to $V_{i+1}$.
As in the proof of Lemma~\ref{lem:anycut-lb-simple}, a bipartite edge $(u, v)$ for $u \in V_i$ and $v \in V_{i+1}$ has weight $s_{i,(u,v)} + 1$ (so either $1$ or $2$).
For every $1 \le i \le t-1$, we add a cycle between $V_i$ and $V_{i+1}$ that leaves $V_i$ and $V_{i+1}$ exactly once.
Now, in contrast to the previous construction, these cycle edges have weight $1/\eps$.

We first show that $G$ is $(3 \beta)$-balanced.
Fix any non-empty set $S \subset V$.
Let $G_i$ denote the subgraph between $V_i$ and $V_{i+1}$ which contains $k^2$ bipartite edges and one cycle.
Let $w_i(S, \overline{S})$ denote the total weight of edges leaving $S$ in $G_i$.
We will show that $w_i(S, \overline{S})$ and $w_i(S, \overline{S})$ are within a factor of $3 \beta$ of each other.
Because $G$ is strongly connected and $w(S, \overline{S}) = \sum_{i=1}^{t-1} w_i(S, \overline{S})$, we can conclude that $G$ is $(3 \beta)$-balanced.

Without loss of generality, we assume both $w_i(S, \overline{S})$ and $w_i(\overline{S}, S)$ are positive.
The cut value $w_i(S, \overline{S})$ remains the same if we restrict $G_i$ on vertices $(V_{i} \cup V_{i+1})$ and consider the cut query $S \cap (V_i \cup V_{i+1})$.
The cycle contributes equally in both directions, so without loss of generality, we can assume the cycle has minimum contribution, which is $\frac{1}{\eps}$.
(If the cycle contributes more, the cut is more balanced.)
The total weight of the bipartite edges is at most $2 k^2 = \frac{2 \beta}{\eps}$.
Therefore, the ratio between the cut values in both directions is at most $\frac{(2 \beta/\eps) + (1/\eps)}{1/\eps} = 2 \beta + 1 \le 3 \beta$.

It remains to show that we can recover every bit of $s$ from a cut sketch.
Fix any bit of $s$.
Suppose this bit $s_{i,(u,v)}$ corresponds to the edge $(u, v)$ for some $u \in V_i$ and $v \in V_{i+1}$, we query the cut value leaving $
S_{(u,v)} = \{u\} \cup \left(V_{i+1} \setminus \{v\}\right) \bigcup_{j={i+2}}^{t-1} V_j$.
The only bipartite edge leaving $S_{(u,v)}$ is the edge $(u, v)$, which has weight either $1$ or $2$.
There are at most $5$ cycle edges leaving $S_{(u,v)}$ (at most $1$ from $G_{i-1}$, $3$ from $G_i$, and $1$ from $G_{i+1}$), whose total weight is fixed and at most $\frac{5}{\eps}$.
Therefore, if we can compute an $(1 \pm c\cdot \eps)$ approximation to the cut value for $c = 10^{-2}$, we can recover the corresponding bit of $s$.
\end{proof}

\section{Conclusion}

In this paper, we considered the question of sparsifying directed graphs.
We focused on graphs that are $\beta$-balanced, where the ratio between the cut value in two directions is at most $\beta$.
We gave upper and lower bounds on the size of the cut sketch with almost tight dependence on $\beta$, under both the standard ``for-all'' notion (i.e., simultaneously preserving the value of all cuts) and the ``for-each'' notion (introduced by Andoni {\em et. al}~\cite{andoni2016sketching}) of cut sparsification.
More specifically, we showed that under the ``for-all'' notion, the linear dependence on $\beta$ obtained by Ikeda and Tanigawa~\cite{ikeda2018cut} is tight.
For the ``for-each'' notion, we gave a data structure that preserves cut values whose size scales as $\sqrt{\beta}$, thereby beating the ``for-all'' lower bound.
We also showed that this dependence on $\sqrt{\beta}$ is tight.
Our lower bounds hold not only for sparsifiers (i.e., graph encodings), but also for arbitrary data structures.

An interesting direction for future work is to consider the {\em spectral} sparsification of directed graphs. Cohen {\em et al.}~\cite{CohenKPPRSV17,CohenKPPRSV18} (see also Chu {\em et al.}~\cite{ChuGPSSW18}) introduced a novel definition of directed sparsification and leveraged it to solve directed Laplacian linear systems.
However, their work is not immediately relevant to ours because their directed spectral sparsifiers do not necessarily preserve directed cut values.
This motivates the following natural question: {\em is there a notion of spectral sparsification that generalizes cut sparsification in directed graphs}? (Note that this is indeed the case for undirected graphs, where spectral sparsifiers also preserve cut values.)
A natural candidate would be a sparse graph that preserves $\sum_{(u,v) \in E} \left((x_u - x_v)^+\right)^2$ for all real vectors $x$, where $y^+ = \max(0, y)$. Note that if $x\in \{0, 1\}^{|V|}$, then this sum represents directed cut values, which is analogous to the correspondence between cut and spectral sparsification in undirected graphs. It would be interesting to explore if preserving this sum in directed graphs has interesting applications beyond preserving cuts, and if so, whether there exist sparse graphs that preserve this sum approximately for balanced directed graphs.


\bibliographystyle{abbrv}
\bibliography{bibliography}

\begin{thebibliography}{10}

\bibitem{ahn2009graph}
K.~J. Ahn and S.~Guha.
\newblock Graph sparsification in the semi-streaming model.
\newblock In {\em International Colloquium on Automata, Languages, and
  Programming}, pages 328--338. Springer, 2009.

\bibitem{AhnGM12a}
K.~J. Ahn, S.~Guha, and A.~McGregor.
\newblock Analyzing graph structure via linear measurements.
\newblock In {\em Proceedings of the Twenty-Third {ACM-SIAM} Symposium on
  Discrete Algorithms}, pages 459--467, 2012.

\bibitem{AhnGM12b}
K.~J. Ahn, S.~Guha, and A.~McGregor.
\newblock Graph sketches: sparsification, spanners, and subgraphs.
\newblock In {\em Proceedings of the 31st ACM Symposium on Principles of
  Database Systems}, pages 5--14, 2012.

\bibitem{AhnGM13}
K.~J. Ahn, S.~Guha, and A.~McGregor.
\newblock Spectral sparsification in dynamic graph streams.
\newblock In {\em Proceedings of the 16th International Workshop on
  Approximation Algorithms for Combinatorial Optimization Problems and the 17th
  International Workshop on Randomization and Computation}, pages 1--10, 2013.

\bibitem{AllenLO15}
Z.~{Allen Zhu}, Z.~Liao, and L.~Orecchia.
\newblock Spectral sparsification and regret minimization beyond matrix
  multiplicative updates.
\newblock In {\em Proceedings of the 47th ACM Symposium on Theory of
  Computing}, pages 237--245, 2015.

\bibitem{andoni2016sketching}
A.~Andoni, J.~Chen, R.~Krauthgamer, B.~Qin, D.~P. Woodruff, and Q.~Zhang.
\newblock On sketching quadratic forms.
\newblock In {\em Proceedings of the 2016 ACM Conference on Innovations in
  Theoretical Computer Science}, pages 311--319. ACM, 2016.

\bibitem{batson2012twice}
J.~Batson, D.~A. Spielman, and N.~Srivastava.
\newblock Twice-{R}amanujan sparsifiers.
\newblock {\em SIAM Journal on Computing}, 41(6):1704--1721, 2012.

\bibitem{benczur2015randomized}
A.~A. Bencz{\'u}r and D.~R. Karger.
\newblock Randomized approximation schemes for cuts and flows in capacitated
  graphs.
\newblock {\em SIAM Journal on Computing}, 44(2):290--319, 2015.

\bibitem{berwanger2012dag}
D.~Berwanger, A.~Dawar, P.~Hunter, S.~Kreutzer, and J.~Obdr{\v{z}}{\'a}lek.
\newblock The {DAG}-width of directed graphs.
\newblock {\em Journal of Combinatorial Theory, Series B}, 102(4):900--923,
  2012.

\bibitem{CarlsonKST19}
C.~Carlson, A.~Kolla, N.~Srivastava, and L.~Trevisan.
\newblock Optimal lower bounds for sketching graph cuts.
\newblock In {\em Proceedings of the 30th ACM-SIAM Symposium on Discrete
  Algorithms}, pages 2565--2569, 2019.

\bibitem{ChuGPSSW18}
T.~Chu, Y.~Gao, R.~Peng, S.~Sachdeva, S.~Sawlani, and J.~Wang.
\newblock Graph sparsification, spectral sketches, and faster resistance
  computation, via short cycle decompositions.
\newblock In {\em Proceedings of the 59th {IEEE} Annual Symposium on
  Foundations of Computer Science}, pages 361--372, 2018.

\bibitem{chung2005laplacians}
F.~Chung.
\newblock Laplacians and the cheeger inequality for directed graphs.
\newblock {\em Annals of Combinatorics}, 9(1):1--19, 2005.

\bibitem{ChuzhoyGLNPS19}
J.~Chuzhoy, Y.~Gao, J.~Li, D.~Nanongkai, R.~Peng, and T.~Saranurak.
\newblock A deterministic algorithm for balanced cut with applications to
  dynamic connectivity, flows, and beyond.
\newblock {\em CoRR}, abs/1910.08025, 2019.

\bibitem{CohenKPPRSV18}
M.~B. Cohen, J.~Kelner, R.~Kyng, J.~Peebles, R.~Peng, A.~B. Rao, and
  A.~Sidford.
\newblock Solving directed {L}aplacian systems in nearly-linear time through
  sparse {LU} factorizations.
\newblock In {\em Proceedings of the 59th IEEE Symposium on Foundations of
  Computer Science}, pages 898--909, 2018.

\bibitem{CohenKPPRSV17}
M.~B. Cohen, J.~Kelner, J.~Peebles, R.~Peng, A.~B. Rao, A.~Sidford, and
  A.~Vladu.
\newblock Almost-linear-time algorithms for {M}arkov chains and new spectral
  primitives for directed graphs.
\newblock In {\em Proceedings of the 49th ACM Symposium on Theory of
  Computing}, pages 410--419, 2017.

\bibitem{ene2016routing}
A.~Ene, G.~Miller, J.~Pachocki, and A.~Sidford.
\newblock Routing under balance.
\newblock In {\em Proceedings of the forty-eighth annual ACM symposium on
  Theory of Computing}, pages 598--611. ACM, 2016.

\bibitem{ford1956maximal}
L.~R. Ford and D.~R. Fulkerson.
\newblock Maximal flow through a network.
\newblock {\em Canadian journal of Mathematics}, 8:399--404, 1956.

\bibitem{fung2011general}
W.~S. Fung, R.~Hariharan, N.~J. Harvey, and D.~Panigrahi.
\newblock A general framework for graph sparsification.
\newblock In {\em Proceedings of the forty-third annual ACM symposium on Theory
  of computing}, pages 71--80. ACM, 2011.

\bibitem{GoelKK10}
A.~Goel, M.~Kapralov, and S.~Khanna.
\newblock Graph sparsification via refinement sampling.
\newblock {\em CoRR}, abs/1004.4915, 2010.

\bibitem{hunter2008digraph}
P.~Hunter and S.~Kreutzer.
\newblock Digraph measures: Kelly decompositions, games, and orderings.
\newblock {\em Theoretical Computer Science}, 399(3):206--219, 2008.

\bibitem{ikeda2018cut}
M.~Ikeda and S.-i. Tanigawa.
\newblock Cut sparsifiers for balanced digraphs.
\newblock In {\em International Workshop on Approximation and Online
  Algorithms}, pages 277--294. Springer, 2018.

\bibitem{JambulapatiS18}
A.~Jambulapati and A.~Sidford.
\newblock Efficient {$\tilde O(n/\epsilon)$} spectral sketches for the
  {L}aplacian and its pseudoinverse.
\newblock In {\em Proceedings of the 29th ACM-SIAM Symposium on Discrete
  Algorithms}, pages 2487--2503, 2018.

\bibitem{johnson2001directed}
T.~Johnson, N.~Robertson, P.~D. Seymour, and R.~Thomas.
\newblock Directed tree-width.
\newblock {\em Journal of Combinatorial Theory, Series B}, 82(1):138--154,
  2001.

\bibitem{KannanVV04}
R.~Kannan, S.~S. Vempala, and A.~Vetta.
\newblock On clusterings: Good, bad and spectral.
\newblock {\em Journal of the ACM}, 51(3):497--515, 2004.

\bibitem{KapralovLMMS17}
M.~Kapralov, Y.~T. Lee, C.~Musco, C.~Musco, and A.~Sidford.
\newblock Single pass spectral sparsification in dynamic streams.
\newblock {\em {SIAM} J. Comput.}, 46(1):456--477, 2017.

\bibitem{KapralovMMMN19}
M.~Kapralov, A.~Mousavifar, C.~Musco, C.~Musco, and N.~Nouri.
\newblock Faster spectral sparsification in dynamic streams.
\newblock {\em CoRR}, abs/1903.12165, 2019.

\bibitem{KapralovNST19}
M.~Kapralov, N.~Nouri, A.~Sidford, and J.~Tardos.
\newblock Dynamic streaming spectral sparsification in nearly linear time and
  space.
\newblock {\em CoRR}, abs/1903.12150, 2019.

\bibitem{KapralovP12}
M.~Kapralov and R.~Panigrahy.
\newblock Spectral sparsification via random spanners.
\newblock In {\em Innovations in Theoretical Computer Science 2012, Cambridge,
  MA, USA, January 8-10, 2012}, pages 393--398, 2012.

\bibitem{karger2002random}
D.~R. Karger and M.~S. Levine.
\newblock Random sampling in residual graphs.
\newblock In {\em Proceedings of the thiry-fourth annual ACM symposium on
  Theory of computing}, pages 63--66. ACM, 2002.

\bibitem{karger1996new}
D.~R. Karger and C.~Stein.
\newblock A new approach to the minimum cut problem.
\newblock {\em Journal of the ACM (JACM)}, 43(4):601--640, 1996.

\bibitem{KelnerLOS14}
J.~A. Kelner, Y.~T. Lee, L.~Orecchia, and A.~Sidford.
\newblock An almost-linear-time algorithm for approximate max flow in
  undirected graphs, and its multicommodity generalizations.
\newblock In {\em Proceedings of the 25th ACM-SIAM Symposium on Discrete
  Algorithms}, pages 217--226, 2014.

\bibitem{KoganK15}
D.~Kogan and R.~Krauthgamer.
\newblock Sketching cuts in graphs and hypergraphs.
\newblock In {\em Proceedings of the 2015 Conference on Innovations in
  Theoretical Computer Science}, pages 367--376, 2015.

\bibitem{KoutisLP12}
I.~Koutis, A.~Levin, and R.~Peng.
\newblock Improved spectral sparsification and numerical algorithms for {SDD}
  matrices.
\newblock In {\em 29th International Symposium on Theoretical Aspects of
  Computer Science}, pages 266--277, 2012.

\bibitem{KoutisMP10}
I.~Koutis, G.~L. Miller, and R.~Peng.
\newblock Approaching optimality for solving {SDD} linear systems.
\newblock In {\em Proceedings of the 51th Annual {IEEE} Symposium on
  Foundations of Computer Science}, pages 235--244, 2010.

\bibitem{KoutisMP11}
I.~Koutis, G.~L. Miller, and R.~Peng.
\newblock A nearly-{$m \log n$} time solver for {SDD} linear systems.
\newblock In {\em Proceedings of the 52nd {IEEE} Symposium on Foundations of
  Computer Science}, pages 590--598, 2011.

\bibitem{KyngLPSS16}
R.~Kyng, Y.~T. Lee, R.~Peng, S.~Sachdeva, and D.~A. Spielman.
\newblock Sparsified {C}holesky and multigrid solvers for connection
  laplacians.
\newblock In {\em Proceedings of the 48th ACM Symposium on Theory of
  Computing}, pages 842--850, 2016.

\bibitem{LeeS15}
Y.~T. Lee and H.~Sun.
\newblock Constructing linear-sized spectral sparsification in almost-linear
  time.
\newblock In {\em Proceedings of the 56th IEEE Symposium on Foundations of
  Computer Science}, pages 250--269, 2015.

\bibitem{LeeS17}
Y.~T. Lee and H.~Sun.
\newblock An {SDP}-based algorithm for linear-sized spectral sparsification.
\newblock In {\em Proceedings of the 49th ACM Symposium on Theory of
  Computing}, pages 678--687, 2017.

\bibitem{lin2009reducing}
H.~Lin.
\newblock Reducing directed max flow to undirected max flow.
\newblock {\em Unpublished Manuscript}, 2009.

\bibitem{marcus2015interlacing}
A.~W. Marcus, D.~A. Spielman, and N.~Srivastava.
\newblock Interlacing families {II}: Mixed characteristic polynomials and the
  kadison—singer problem.
\newblock {\em Annals of Mathematics}, pages 327--350, 2015.

\bibitem{NanongkaiS17}
D.~Nanongkai and T.~Saranurak.
\newblock Dynamic spanning forest with worst-case update time: adaptive, las
  vegas, and o(n\({}^{\mbox{1/2 - {\(\epsilon\)}}}\))-time.
\newblock In H.~Hatami, P.~McKenzie, and V.~King, editors, {\em Proceedings of
  the 49th Annual {ACM} {SIGACT} Symposium on Theory of Computing}, pages
  1122--1129, 2017.

\bibitem{NewmanR13}
I.~Newman and Y.~Rabinovich.
\newblock On multiplicative {L}ambda-approximations and some geometric
  applications.
\newblock {\em {SIAM} J. Comput.}, 42(3):855--883, 2013.

\bibitem{obdrvzalek2006dag}
J.~Obdr{\v{z}}{\'a}lek.
\newblock {DAG}-width: connectivity measure for directed graphs.
\newblock In {\em Proceedings of the seventeenth annual ACM-SIAM symposium on
  Discrete algorithm}, pages 814--821, 2006.

\bibitem{SaranurakW19}
T.~Saranurak and D.~Wang.
\newblock Expander decomposition and pruning: Faster, stronger, and simpler.
\newblock In {\em Proceedings of the 30th Annual {ACM-SIAM} Symposium on
  Discrete Algorithms}, pages 2616--2635. {SIAM}, 2019.

\bibitem{Sherman13}
J.~Sherman.
\newblock Nearly maximum flows in nearly linear time.
\newblock In {\em Proceedings of the 54th Annual {IEEE} Symposium on
  Foundations of Computer Science}, pages 263--269, 2013.

\bibitem{SomaY19}
T.~Soma and Y.~Yoshida.
\newblock Spectral sparsification of hypergraphs.
\newblock In {\em Proceedings of the 13th {ACM-SIAM} Symposium on Discrete
  Algorithms}, pages 2570--2581, 2019.

\bibitem{spielman2011graph}
D.~A. Spielman and N.~Srivastava.
\newblock Graph sparsification by effective resistances.
\newblock {\em SIAM Journal on Computing}, 40(6):1913--1926, 2011.

\bibitem{SpielmanT11}
D.~A. Spielman and S.~Teng.
\newblock Spectral sparsification of graphs.
\newblock {\em SIAM Journal on Computing}, 40(4):981--1025, 2011.

\bibitem{SpielmanT04}
D.~A. Spielman and S.-H. Teng.
\newblock Nearly-linear time algorithms for graph partitioning, graph
  sparsification, and solving linear systems.
\newblock In {\em Proceedings of the 36th ACM Symposium on Theory of
  Computing}, volume~4, 2004.

\bibitem{Wulff-Nilsen17}
C.~Wulff{-}Nilsen.
\newblock Fully-dynamic minimum spanning forest with improved worst-case update
  time.
\newblock In {\em Proceedings of the 49th Annual {ACM} {SIGACT} Symposium on
  Theory of Computing}, pages 1130--1143, 2017.

\bibitem{ZhangZF19}
Y.~Zhang, Z.~Zhao, and Z.~Feng.
\newblock Towards scalable spectral sparsification of directed graphs.
\newblock In {\em Proceedings of the 15th IEEE International Conference on
  Embedded Software and Systems}, pages 1--2, 2019.

\end{thebibliography}

\clearpage

\appendix
\section{Technical Comparisons with Previous Work}
\subsection{Comparison with Andoni et al.~\cite{andoni2016sketching}} \label{sec:comp-andoni}

We compare (the undirected version of) our analysis with the one in~\cite{andoni2016sketching} at a more technical level.
For simplicity, suppose the input graph $G$ is unweighted and there are no $\lambda$-sparse cuts in $G$.
In this case, we want to approximate the directed cut value $X = |E(S, V \setminus S)|$.

Both sketching algorithms sample $\alpha$ edges at every vertex $u \in S$ for some $\alpha$. Thus, at each vertex $u \in S$, we have $\alpha$ independent and identically distributed Bernoulli variables, and each is 1 with probability $p =\frac{|E(u,S)|}{|E(u,V)|}$.
The variance of this random variable is $p(1-p)$.
Andoni et al.~\cite{andoni2016sketching} upper bounded this quantity by $(1-p)$, which resulted in the following bound on $\Var{I_S}$:
\begin{align*}
\Var{I_S} &\le \sum_{u \in S} \frac{\bigl(d(u)\bigr)^2}{\alpha^2} \cdot \alpha \cdot \frac{|S|}{d(u)} \\
 &= \frac{1}{\alpha}\sum_{u \in S} |S| \cdot d(u)
 = \frac{1}{\alpha} |S| \cdot |E(S, V)|
 \le \frac{1}{\alpha} |S| \cdot(|S|^2 + X) = \frac{1}{\alpha} \left(|S|^3 + |S| \cdot X\right).
\end{align*}

Our variance bound is obtained by calculating the variance of the estimator more carefully.
If we use the exact value of $\Var{I_S(u)}$, we have
\begin{align*}
\Var{I_S} &\le 
\sum_{u \in S} \frac{\bigl(d(u)\bigr)^2}{\alpha^2} \cdot \alpha \cdot \frac{|E(u, S)|}{|E(u, V)|} \frac{|E(u, \overline S)|}{|E(u, V)|} \\
&= \frac{1}{\alpha}\sum_{u \in S} |E(u, S)| \cdot |E(u, \overline S)| 
\le \frac{1}{\alpha} \left(\max_u |E(u, S)|\right) \left(\sum_u |E(u, \overline S)|\right) = \frac{1}{\alpha} |S| \cdot X.
\end{align*}

One of our main technical contributions is to remove the $|S|^3$ term from the upper bound on $\Var{I_S}$.
There are two consequences:
\begin{itemize}
\item
In the undirected setting, the redundant $|S|^3$ term in their analysis is precisely the reason why they need down-sampling (and consequently a constant-approximate sketch to choose the right down-sampling rate).
By down-sampling, they can guarantee that $X = O(1 / \eps^2)$ and $|S| \le \frac{X}{\lambda} = O(1/\eps)$ and hence $|S|^3$ is comparable to $|S|\cdot X$.
Consequently, our algorithm does not need down-sampling for both directed and undirected graphs.

\item
In the directed setting, the redundant $|S|^3$ term causes more severe issues.
Even with down-sampling, we can only guarantee that $X = O(1/\eps^2)$ and $|S| \le \frac{X + \overline X}{\lambda} \le \frac{\beta}{\lambda} X$.
If we set $\lambda$ to be as large as we can afford, i.e., $\lambda = \Theta(\sqrt{\beta})$, we still have $|S|^3 \gg |S| \cdot X$ and therefore removing this term is crucial for us to obtain a space complexity with the right dependence on $\beta$.
\end{itemize}

\subsection{Comparison with Jambulapati and Sidford~\cite{JambulapatiS18}}
\label{sec:comp-js}

Jambulapati and Sidford~\cite{JambulapatiS18} constructed graph sketches to approximately preserve quadratic forms.
As an important special case, their sketch can be used to query cut values.
In this section, we compare our proof with 
(the cut version of) their analysis.

For simplicity, suppose the input graph $G$ is unweighted and the conductance of $G$ is very high $\phi(G) = \Omega(1)$.
We want to approximate the directed cut value $X = |E(S, V \setminus S)|$.
Similar to Andoni et al.~\cite{andoni2016sketching}, Jambulapati and Sidford~\cite{JambulapatiS18} upper bounded $\Var{I_S}$ starting with
\begin{align*}
\Var{I_S} &\le \sum_{u \in S} \frac{\bigl(d(u)\bigr)^2}{\alpha^2} \cdot \alpha \cdot \frac{|S|}{d(u)}
 = \frac{1}{\alpha}\sum_{u \in S} |S| \cdot d(u).
\end{align*}

Because they store all edges incident to low-degree vertices, without loss of generality, one can assume $d(v) \ge \alpha$ for all $v \in V$.
Thus,
\begin{align*}
\Var{I_S} \le \frac{1}{\alpha}\sum_{u \in S} d(u) \cdot |S|
&= \frac{1}{\alpha} \sum_{u \in S} d(u) \cdot \sum_{v \in S} 1 \\
&\le \frac{1}{\alpha^2} \sum_{u \in S} \sum_{v \in S} d(u) \cdot d(v) \\
&= \frac{1}{\alpha^2} |E(S, V)|^2.
\end{align*}

Finally, they were able to relate the term $|E(S, V)|$ to $X = |E(S, \overline S)|$ using the fact that $\phi(G) = \Omega(1)$.
Consequently, by choosing $\alpha = 1/\eps$, they obtained that $\Var{I_S} \le O(\eps^2) \cdot X^2$.

However, there is no standard generalization of the notion of conductance for directed graphs.
If we look at the \emph{undirected} version of a directed graph $G$, the term $|E(S, V)|$ becomes $|E(S, S)| + |E(S, V\setminus S)| + |E(V\setminus S, S)|$.
We are only interested in approximating the directed cut value $X = |E(S, V \setminus S)|$, and because the graph is $\beta$-balanced, the best bound we have on $\overline{X} = |E(V\setminus S, S)|$ is $\overline{X} \le \beta X$.
This will result in an upper bound of $O(\frac{\beta^2}{\alpha^2} \cdot \eps^2) X^2$ on the variance of the estimator, and we will have to set $\alpha = \beta / \eps$ to make this variance small enough.
Consequently, the cut sketch will have size $\tilde O(n \beta)$ which does not have the right dependence on $\beta$.

\end{document}